\def\lflag{
	\begin{tikzpicture}[x=height("H"),y=height("H")]
		\draw[thick] (0,0) -- (0,1);
		\draw[fill=black] (0,1) -- (-.4,.8) -- (0,.6) -- cycle;
	\end{tikzpicture}\,
}
\def\N{\mathbb N}
\def\set#1{\{#1\}}
\def\Set#1{\left\{#1\right\}}
\def\SET#1{\begin{Bmatrix}#1\end{Bmatrix}}
\def\tuple#1{\langle#1\rangle}
\def\intset#1#2{\set{#1,\ldots,#2}}
\def\sizeof#1{|#1|}
\renewcommand{\emptyset}{\varnothing}
\def\CCS{\textsf{CCS}\xspace}
\def\picalc{$\pi$-calculus\xspace}
\def\defn#1{{\itshape\bfseries\boldmath #1}}
\def\rclr#1{{\color{red}#1}}
\def\bclr#1{{\color{blue}#1}}
\def\sclr#1{{\color{magenta}#1}}
\def\gclr#1{{\color{pzgreen}#1}}
\def\vclr#1{{\color{violet}#1}}
\def\resp#1{(resp.~{#1})\xspace}
\newcommand{\qwith}{\quad\mbox{with}\quad}
\newcommand{\quand}{\quad\mbox{and}\quad}
\newcommand{\qquand}{\qquad\mbox{and}\qquad}
\newcommand{\myparagraph}[2][.]{\textbf{#2#1}}
\newtheorem{nota}[theorem]{Notation}
\def\proc{process\xspace}
\def\procs{processes\xspace}
\def\conet{conflict-net\xspace}
\newcommand{\varSet}{\mathcal V}
\def\cneg#1{{#1}^\lbot}
\def\free{\mathsf{free}}
\def\freeof#1{\free(#1)}
\newcommand{\swn}[1][]{\wn_{#1}}
\newcommand{\soc}[1][]{\oc_{#1}}
\def\lsend#1#2{\langle#1\oc#2\rangle}
\def\lrecv#1#2{(#1\wn#2)}
\def\lNu#1#2{\lnewsymb{#1}.#2}
\def\lWe#1#2{\lwensymb{#1}.#2}
\def\lYa#1#2{\lyasymb{#1}.#2}
\def\lFa#1#2{\forall{#1}.#2}
\def\lEx#1#2{\exists{#1}.#2}
\newcommand{\lQu}[3][]{\lqusymb_{#1}{#2}.#3}
\def\lnQu#1#2{\cneg{\lqusymb}{#1}.#2}
\def\lNa#1#2{\nabla#1.#2}
\def\lnNa#1#2{\cneg\nabla#1.#2}
\def\lNai#1#2#3{\nabla_{#1}#2.#3}
\def\lQup#1#2{\lQu{#1}{\!\left( #2 \right)}}
\def\lNup#1#2{\lNu{#1}{\!\left( #2 \right)}}
\def\lWep#1#2{\lWe{#1}{\!\left( #2 \right)}}
\def\lYap#1#2{\lYa{#1}{\!\left( #2 \right)}}
\def\lFap#1#2{\lFa{#1}{\!\left( #2 \right)}}
\def\lExp#1#2{\lEx{#1}{\!\left( #2 \right)}}
\newcommand{\lcofix}[2][X]{\bbnu #1.#2}
\def\lnu{\lcofix}
\newcommand{\biglbra}[1][]{\bigwith\limits_{#1}}
\newcommand{\biglsel}[1][]{\bigoplus\limits_{#1}}
\def\feq{\lleq}
\def\chole{\bullet}
\newcommand{\ctx}[1][\chole]{[#1]}
\newcommand{\fof}[1]{\left\llbracket#1\right\rrbracket}
\newcommand{\sof}[1]{\left\lfloor\!\left\lfloor #1 \right\rfloor\!\right\rfloor}
\newcommand{\cnof}[1]{\left\{\!\left\{#1\right\}\!\right\}_{\mathsf{co}}}
\newcommand{\snof}[1]{\left\{\!\left\{#1\right\}\!\right\}_{\mathsf{sl}}}
\def\trbv#1{\left\lceil#1\right\rceil}
\def\alphaeq{=_\alpha}
\def\namesset{\mathcal N}
\def\varset{\mathcal V}
\def\xX{\mathcal X}
\def\labelsset{\mathcal L}
\def\lab{\ell}
\def\pnu#1{(\nu #1)}
\def\pnup#1#2{(\nu #1) \left(#2\right)}
\def\pnus#1{(\nu \widetilde{#1})}
\def\psend#1#2{#1!\langle#2\rangle.}
\def\precv#1#2{#1?(#2).}
\def\psenda#1#2{#1!\langle#2\rangle}
\def\precva#1#2{#1?(#2)}
\def\ppar{\;|\;}
\def\pnil{\mathsf{Nil}}
\def\pbra#1#2#3{#1 \lseq \Set{#2 : #3 }}
\def\pbras#1#2#3{#1 \lseq \Set{#2 }_{#3}}
\def\pbrasv#1#2{#1 \lseq \begin{Bmatrix}#2\end{Bmatrix}}
\def\psels#1#2#3{#1 \lcoseq \Set{#2 }_{#3}}
\def\pselsv#1#2{#1 \lcoseq \begin{Bmatrix}#2\end{Bmatrix}}
\def\fsubst#1#2{[#1/#2 ]}
\def\fsubstup#1#2{[#1 \uparrow #2 ]}
\def\fsubsts#1{[ #1 ]}
\def\fsubminus#1{\setminus\set{#1}}
\def\steq{\equiv}
\def\cC{\mathcal C}
\def\cN{\mathcal N}
\def\redsem{\to}
\def\redsems{\twoheadrightarrow}
\def\rscomr{\mathsf{Com}}
\def\rsbrar{\mathsf{Bra}}
\def\rsselr{\mathsf{Sel}}
\def\rsresr{\mathsf{Res}}
\def\rsparr{\mathsf{Par}}
\def\rsstrr{\mathsf{Struc}}
\def\seqEval#1#2{e \downarrow_\ldia v}
\newcommand{\proves}[1][]{\mathord{\vdash_{#1}\,}}
\def\dD{\mathcal D}
\def\pomset{\mathsf{Pomset}}
\def\BV{\mathsf{BV}}
\def\BVq{\mathsf{BV}^1}
\def\MAV{\mathsf{MAV}}
\def\MAVq{\mathsf{MAV}^1}
\def\MLL{\mathsf{MLL}}
\def\MLLx{\mathsf{MLL}^\lunit}
\def\MALL{\mathsf{MALL}}
\def\muMALL{\mu\MALL}
\mathchardef\mhyphen="2D
\newcommand{\NML}{\mathsf{NML}}
\newcommand{\NMAL}{\mathsf{NMAL}}
\def\wcut{\cup\set{\cutr}}
\def\PIL{\mathsf{PiL}}
\def\PILm{\PIL^-}
\def\miniPIL{\mathsf{mini}\mhyphen\PIL}
\def\NL{\PIL}
\def\XS{\mathsf{X}}
\newcommand{\sS}[1][\mathcal S]{\sclr{ #1 }}
\newcommand{\emptystore}{\sclr{ \emptyset }}
\newcommand{\sdash}[1][\sS]{\sS[{#1}] \vdash}
\def\axrule{\mathsf {ax}}
\def\AXrule{\mathsf {AX}}
\def\cutr{\mathsf {cut}}
\def\mixr{\mathsf{mix}}
\def\precur{\lprec_\lunit}
\newcommand{\rrule}[1][]{{\mathsf{r}^{#1}}}
\newcommand{\brrule}[1][]{\mathsf{\beta}_{#1}}
\newcommand{\urrule}[1][]{\mathsf{\alpha}_{#1}}
\def\urule{\lunit}
\def\nuurule{\lnewsymb^\lunit}
\def\yurule{\lyasymb^\lunit}
\def\Qrule{\lqusymb}
\def\dQrule{\cneg\lqusymb}
\def\loadr#1{#1_\mathsf{load}}
\def\popr#1{#1_\mathsf{pop}}
\def\unitr#1{#1_{\lunit}}
\def\naur{\unitr\nabla}
\def\naloadr{\loadr\nabla}
\def\napopr{\popr\nabla}
\def\nnaur{\unitr{\cneg\nabla}}
\def\nnapopr{\popr{\cneg\nabla}}
\def\nqur{\naur}
\def\nqloadr{\naloadr}
\def\nqpopr{\napopr}
\def\nuur{\unitr\lnewsymb}
\def\nuloadr{\loadr\lnewsymb}
\def\nupopr{\popr\lnewsymb}
\def\yaur{\unitr\lyasymb}
\def\yaloadr{\loadr\lyasymb}
\def\yapopr{\popr\lyasymb}
\def\isnusymb{\lnewsymb}
\def\isyasymb{\lyasymb}
\def\isnasymb{\nabla}
\def\isdnasymb{\cneg\nabla}
\def\isnu#1{\sclr{#1^{\isnusymb}}}
\def\isya#1{\sclr{#1^{\isyasymb}}}
\def\isqu#1{\sclr{#1^{\lqusymb}}}
\newcommand{\isna}[2][]{\sclr{#2^{\isnasymb_{#1}}}}
\newcommand{\isnna}[2][]{\sclr{#2^{\isdnasymb_{#1}}}}
\def\nqsrule{\lnewsymb\mhyphen\lyasymb}
\def\nqscutr{\sS\mhyphen\cutr}
\def\Mrule{\mathsf{M}}
\def\Krule{\mathsf{K}}
\def\airule{\mathsf{ai}}
\def\swir{\mathsf{s}}
\def\qrule{\mathsf{q}}
\newcommand{\scope}[1][]{\mathsf{scope}_{#1}}
\def\shiftr{\mathsf{shift}}
\def\aidr{\airule\mathord{\downarrow}}
\def\qdr{\qrule\mathord{\downarrow}}
\def\ldist#1#2{\mathsf{d}_{#2}\left(#1\right)}
\def\lcomp#1#2{\mathsf{c}_{#2}\left( #1\right)}
\def\lcutelim{\rightsquigarrow}
\def\cutstep#1#2{#1\mbox{-vs-}#2}
\def\conc{\frown}
\def\conf{\#}
\def\pzlink#1#2#3#4#5#6{
	\tikz[overlay,remember picture,draw,fill,opacity=1]{
		\foreach \ppp in #6{
			\draw[thick,color=#5] (\ppp) -- ++(0,#3pt) -| (#1);
		}
		\draw[thick,color=#5] (#1) -- ++(0,#3pt) -| (#2) node[pos=.25,fill=white,inner sep=1pt]{\scriptsize$#4$};
	}
}
\def\pzlinks#1{
	\foreach \aaa/\bbb/\ccc/\ddd/\eee/\fff in {#1}{
		\pzlink{\aaa}{\bbb}{\ccc}{\ddd}{\eee}{\fff}
	}
}
\def\pzflow#1#2#3#4{
	\tikz[overlay,remember picture,draw,fill,opacity=1]{
		\draw[thick,color=#4] (#1) -- ++(0,#3pt) -| (#2);
	}
}
\def\pzflows#1{
	\foreach \aaa/\bbb/\ccc/\eee in {#1}{
		\pzflow{\aaa}{\bbb}{\ccc}{\eee}
	}
}
\newcommand{\link}[2][black]{{\color{#1}{#2}}}
\newcommand{\la}[1][red]{{\link[#1]a}}
\newcommand{\lb}[1][blue]{{\link[#1]b}}
\newcommand{\lc}[1][violet]{{\link[#1]c}}
\newcommand{\lcd}[1][black]{\link[#1]{
c}}
\newcommand{\labb}[1][black]{\link[#1]{
		ab
		}}
\newcommand{\laa}[1][black]{\link[#1]{
a'}}
\newcommand{\laaa}[1][black]{\link[#1]{
a'}}
\newcommand{\lbb}[1][black]{\link[#1]{
b'}}
\newcommand{\lbbb}[1][black]{\link[#1]{
b'}}
\newcommand{\lccd}[1][black]{\link[#1]{
c'}}
\newcommand{\lcccd}[1][black]{\link[#1]{
c'}}
\newcommand{\laabb}[1][black]{\link[#1]{
a'b}}
\newcommand{\laaabb}[1][black]{\link[#1]{
a'b}}
\newcommand{\lac}[1][black]{\link[#1]{
ac}}
\newcommand{\labc}[1][black]{\link[#1]{
abc}}
\newcommand{\coalto}[1][]{\rightarrow_{#1}}
\newcommand{\lcoalto}[2][]{\overset{#2}{\rightarrow}}
\def\coaldo{\downarrow}
\def\cotree{co-tree\xspace}
\def\cotrees{co-trees\xspace}
\def\linking{\Lambda}
\def\linktree{\tau(\Lambda)}
\def\domof#1{\mathsf{dom}(#1)}
\newcommand{\dualizerof}[1][]{\delta_{#1}}
\def\peq{\approx}
\def\speq{\simeq}
\def\weq{\sim_{\mathsf w}}
\def\pweq{\peq_{\mathsf w}}
\def\spweq{\speq_{\mathsf w}}
\def\teq{\sim}
\def\cohe{\mathsf{coh}}
\def\join{\vee}
\def\duasum{+}
\def\disj{\mathsf{dis}}
\def\newdisj#1#2{\mathsf{dis}(#1,#2)}
\def\newcohe#1#2{\mathsf{coh}(#1,#2)}
\title{Proofs as Execution Trees for the $\pi$-Calculus}
\author{Matteo Acclavio}{University of Sussex \and \url{https://matteoacclavio.com} }{}{https://orcid.org/0000-0002-0425-2825}{partially supported by the Villum Fonden grant no. 50079.}
\author{Giulia Manara}{University of Southern Denmark, Universitè Paris Cité, Università Roma Tre}{}{https://orcid.org/0009-0003-9583-1017}{co-funded by the European Union's Horizon 2020 research and innovation program under the Marie Sklodowska-Curie grant agreement No 945332, and by the European Union (ERC, CHORDS, 101124225).
}
\authorrunning{M. Acclavio and G. Manara} 
\keywords{Non-commutative linear logic, Proof nets, Deduction-as-computation, Proof equivalence, \picalc} 
\begin{document}

\maketitle
\begin{abstract}
	In this paper, we establish the foundations of a novel logical framework for the \picalc, based on the \emph{deduction-as-computation} paradigm.
	Following the standard proof-theoretic interpretation of logic programming,
	we represent processes as formulas, and we interpret proofs as computations. To be precise, we interpret proofs as execution trees.

	For this purpose, we define a cut-free sequent calculus for an extension of first-order multiplicative and additive linear logic.
	This extension includes a non-commutative and non-associative connective to faithfully model the prefix operator, and nominal quantifiers to represent name restriction.
	Finally, we design proof nets providing canonical representatives of derivations up to local rule permutations.
\end{abstract}

\section{Introduction}

Formal reasoning about the properties of concurrent program executions is significantly
more complex than analyzing sequential programs.
The main challenge in the concurrent setting
arises from the lack of formalisms for efficient representations
of the set of traces of a program in the presence of \emph{interleaving} concurrency, where the mutual order of certain tasks of a program is irrelevant.
This is due to the inherent limitations of languages commonly used to represent trace
reasoning, including natural language, where it can be impossible
to describe a set of events arranged in complex patterns in a canonical way,
other than by inefficiently listing all possible total orders.
A language for optimizing the trace analysis of concurrent programs should:
\begin{enumerate}
	\item\label{des:1} ignore irrelevant differences, such as the mutual order of independent events;
	\item\label{des:2} group traces that differ only in branching caused by internal choices within the program;
	\item\label{des:3} distinguish sets of traces that differ due to factors beyond the control of the program, such as race conditions and side effects.
\end{enumerate}
In this work, we develop a formalism satisfying these three desiderata to represent execution trees of processes of the \picalc in a canonical way, based on the \emph{deduction-as-computation} interpretation for proofs a logic extending first-order multiplicative additive linear logic.

\myparagraph{An approach inspired by logic programming}
In the \emph{logic programming} paradigm,
programs are interpreted as sets of formulas,
and computation is performed by applying methods (or rules) to these formulas.
In \cite{mil:uniform} Miller et al illustrate how a \emph{deduction-as-computation} interpretation of proof search in the sequent
calculus allows to account for program executions:
sequents correspond to snapshots of the state of the system,
and sequent rules can be interpreted as methods executing the instructions encoded
by logical connectives.

In the setting of deduction-as-computation, two forms of non-determinism appear in program executions that are not as easily observable in
other frameworks%
\footnote{
	In particular, within the proofs-as-processes setting that arises from the Curry-Howard isomorphism, the non-determinism resulting from an internal choice is not observable in the computations of typed processes, as the type of a process predetermines the choice.
}:
the \emph{don't care} non-determinism, depending on the possibility of applying
rules to independent sets of formulas, and the \emph{don't know} non-determinism, that arises from the possibility of applying (potentially different) rules to overlapping subsets of formulas.
In the proof-search interpretation of program execution, differences in derivations caused by don't care non-determinism are considered irrelevant, at the point that two proofs which can be transformed into one another through \emph{rule permutation}
(i.e., by {exchanging} the order of rules operating on disjoint sub-sequents) are usually identified.

This work aims to apply results in the study of proof equivalence \cite{hughes:invar,hughes:pws,hei:hug:conflict,strassburger:problem} in the framework of deduction-as-computation to provide canonical representations of sets of traces.
In particular, we develop a syntax based on results about canonical representation of proofs to uniquely model a set of traces differing in the order of independent events, in compliance with desiderata~\ref{des:1}.
We focus on a deduction-as-computation interpretation of \emph{proof nets} rather than sequent calculus derivation.

\myparagraph[?]{Why proof nets}
Various works \cite{and:maz:concPN,andreoli:focPN,acc:mai:DCM} have already highlighted the benefits of this approach where the syntax captures interleaving concurrency by default.
Proof nets were introduced as a graphical formalism for \emph{linear logic} proofs \cite{gir:ll}.
They abstract away irrelevant information contained in sequent calculus derivations, such as mutual order of independent inference rules.
This syntax allows for an optimal level of abstraction for the multiplicative fragment of linear logic ($\MLL$), providing  canonical representatives for proofs with respect to independent rule permutations,  a polynomial proof translation, and  a geometrical correctness criterion allowing to check in polynomial time if a graph is the encoding of a proof (making proof nets for $\MLL$ a proof system in the sense of \cite{cook:reckhow:79}).
However, the definition of proof nets for extensions of $\MLL$ requires trade-offs
between canonicity, the efficiency of correctness criterion and the efficiency of normalization
procedure (see \cite{heijltjes:houston:14} for $\MLL$ with units,
\cite{hug:van:sliceLICS,hug:van:slice,hughes:conflict,hei:hug:conflict} for multiplicative-additive linear logic
($\MALL$) and \cite{acc:EHPN} for multiplicative-exponential linear logic).

To provide an intuition of our approach, we show for the \proc $P$ in
\Cref{eq:introEx1} how we can annotate information about communications (and selections) performed during all the possible executions of $P$ while ignoring inessential details such as the specific order of independent transitions.
\begin{equation}\label{eq:introEx1}
	\hfill
	\adjustbox{max width=.96\textwidth}{$
	\begin{array}{c}
		P=\pnu{\vx1}\pnup {\vy1}{
			\vpz a{\psenda ya}
			\ppar
			\vpz {na}{\precva ya}
			\ppar
			\pselsv x{
				\vpz{1}{\lab_1}: \vpz{nb}{\precva xb},
				\vpz{3}{\lab_2}: \vpz{c}{\psenda xc}
			}
			\ppar
			\pbrasv x{
				\vpz{2}{\lab_1}: \vpz{b}{\psenda xb},
				\vpz{4}{\lab_2}: \vpz{nc}{\precva xc}
			}
		}
		\pzlinks{pza/pzna/12/\la/red/}
		\pzlinks{pz1/pz2/12/{\gclr{\lab_1}}/pzgreen/}
		\pzlinks{pz3/pz4/-16/{\gclr{\lab_2}}/pzgreen/}
		\pzlinks{pzb/pznb/16/\lb/blue/}
		\pzlinks{pzc/pznc/-12/\lc/violet/}
	\end{array}
	$}
	\hfill
\end{equation}
Note that the only (don't know) non-determinism during executions of $P$ is caused by an internal choice: the branching due to the choice of a label in  $L=\set{\lab_1,\lab_2}$. Thus, we have a unique proof net according to the desiderata \ref{des:1} and \ref{des:2}.
In this representation the set of links $\set{\gclr{\lab_1},\lb}$ and $\set{\gclr{\lab_2},\lc}$ are mutually exclusive -- in the terminology of event structures \cite{winskel:event}, we would say they are in conflict relation.

At the same time, our syntax still allows us to distinguish the two distinct executions of the \proc in \Cref{eq:introEx2}, as specified by desideratum \ref{des:3}, which are determined by a race condition on $x$ enforcing a (don't know) non-deterministic choice during the execution.
\begin{equation}\label{eq:introEx2}
	\adjustbox{max width=.96\textwidth}{$
	\pnup x{\vpz1{\psend xa} \vpz2{\psenda xb} \ppar \vpz3{\precva xy} \ppar \vpz4{\precva xz}}
	\pzlinks{pz1/pz3/12/\la/red/}
	\pzlinks{pz2/pz4/-12/\lb/blue/}
	\qquand
	\pnup x{\vpz1{\psend xa} \vpz2{\psenda xb} \ppar \vpz3{\precva xy} \ppar \vpz4{\precva xz}}
	\pzlinks{pz1/pz4/12/\la/red/}
	\pzlinks{pz2/pz3/-12/\lb/blue/}
	$}
\end{equation}

\myparagraph[?]{Which proof nets}
%
There are different syntaxes for proof nets for $\MALL$, each providing a different level of abstraction and capturing different subsets of the rule permutations in the sequent calculus -- rules permutations are reported in \Cref{fig:permutations1}.
The first syntax of proof nets for $\MALL$ was introduced by Girard in \cite{gir:ll}.
These are referred to as \emph{box nets} because of the way they encode the rules for the additive conjunction $\lwith$. These proof nets provide polynomial time proof translation and correctness criterion. However, they are not canonical with respect to rule permutations involving the rule for $\lwith$.
In \cite{girard:96:PN}, Girard also introduced \emph{monomial nets} for $\MALL$, lacking of a polynomial time correctness criterion (but a polynomial-time proof translation), but unable to not improve the level of abstraction in the general setting.
For this reason, Hughes and Van Glabbeek introduced in \cite{hug:van:sliceLICS,hug:van:slice} a new syntax of proof nets for $\MALL$ called \emph{slice nets}.
These proof nets capture all rule permutations in \Cref{fig:permutations1} while keeping a polynomial correctness criterion, but they lack of a polynomial proof translation.
The absence of a polynomial proof translation is unsurprising because each slice net can be conceived as a canonical representative of a class of derivations, which may include derivations whose sizes differ by an exponential factor.
This is due to the fact that the rule permutation between the two-premises multiplicative rule $\ltens$ and the two-premises additive rule $\lwith$-rules requires to duplicate an entire subtree of the derivation (see \Cref{fig:permutations1}).
In order to recover a polynomial proof translation, Hughes and Heijltjes introduced \emph{conflict nets} in \cite{hughes:conflict,hei:hug:conflict}, which can capture only \emph{local} rule permutations -- that is, all rule permutations except the one between the $\lwith$ and the $\ltens$.
This is obtained by having trees of axiom links (instead of a set of sets of axiom links as in slice nets) where axiom links are in a ``multiplicative'' concordance relation, or in a ``additive'' conflict relation.

At the same time, there are two main approaches for proof nets for first-order logic.
One is to consider the choice of witnesses for the existential quantifier (and, in our case, for the nominal quantifiers) as part of the information of a proof. This leads to the notion of what we refer to as \emph{witness nets} (as in \cite{hei:hug:str:ALL1}) developed in \cite{gir:ll,gir:quant1,gir:quant2}, which in \cite[Chapter 11]{gir:blind} Girard claims to be ``the only really satisfactory extension of proof-nets''.
However, how explained by Hughes in \cite{hughes:firstorder}, having the witnesses being part of the proof leads to undesirable consequences such as the lack of canonical proofs due to the existence of infinitely many possible witnesses (possibly of exponentially larger size); for an example, consider the formula $\lEx P(x) \limp \lEx P(x)$ which has infinitely many proofs, one for each possible witness for $x$.
Moreover, the complexity of cut-eliminination becomes exponential, and requiring non-local rewriting rules even in first-order $\MLL$.
For this reason, in  \cite{hughes:firstorder} Hughes introduced another approach for the design of proof nets for first-order logic, which abstracts away the choice of witnesses for the quantifiers, thus satisfying the principle of \emph{generality} (in the sense of Lambek \cite{lambek:deductive}) by identifying those proofs differing in the witness assignment.
These proof nets, called \emph{unification nets}, were initially developed for first-order $\MLL$, and have been recently extended for the purely additive fragment of linear logic in by Hughes, Heijltjes and Stra\ss burger in \cite{hei:hug:str:ALL1}).

In this paper we develop both conflict and slice (unification) nets for $\NL$ to provide canonical representatives of execution trees for the \picalc modulo different notions of interleaving:
\begin{itemize}
	\item Conflict nets (\Cref{subsec:CN}) abstract away the order of independent transitions, as soon as this order does not interact with branching of the execution tree.

	\item Slice nets (\Cref{subsec:SN}) abstract away the order of any independent transitions in a process, even when the execution tree branches.
\end{itemize}
In choreographic programming \cite{montesi:book}, the distinction between these two flavors of interleaving can be easily understood in terms of restriction on the possibility of performing out-of-order instruction in the operational semantics:
while slice nets identify execution trees of the standard operational semantics, where both communications and choices can be delayed, conflict nets allow to only delay communications, and not choices.

\myparagraph{Contributions of the Paper}
We develop a \emph{new} logical framework ($\NL$) based on an extension of first-order
multiplicative and additive linear logic ($\MALL^1$) with a
\emph{non-commutative non-associative connective} and \emph{nominal quantifiers},
to provide logical operators that faithfully model the high-level search instruction corresponding to the prefix composition and restriction in the \picalc.
We define a cut-free sequent calculus in which execution trees of a process can be interpreted as derivations of the corresponding formula.

We also define two syntaxes of proof nets for $\NL$ by combining the techniques used in \emph{unification nets} for first-order multiplicative linear logic \cite{hug:unification} and first-order additive linear logic \cite{hei:hug:str:ALL1},
with the techniques used in \emph{conflict nets} \cite{hughes:conflict,heijltjes:houston:14} and in \emph{slice nets} \cite{hug:van:slice} for $\MALL$.
As a restriction of our syntax, we then define unification conflict nets and unification slice nets for first-order $\MALL$.

Finally, by combining the correspondence between execution trees and derivations, and between derivations modulo local rule permutations and proof nets, we provide a syntax in which we have canonical representatives of execution trees modulo interleaving.

\begin{figure}[t]
	\adjustbox{max width=\textwidth}{$
		\vlderivation{
			\vlde{}{\set{\nuurule,\lpar}}{
				\lNu{\vx1}{
					\lNup {\vy1}{
						\rclr{\lsend ya}
						\lpar
						\lEx a{\rclr{\lrecv ya}}
						\lpar
						\begin{pmatrix}
							\gclr{\lsend x{\lab_1}} \lprec \lEx b{\bclr{\lrecv xb}}
							\\\lwith\\
							\gclr{\lsend x{\lab_2}} \lprec \vclr{\lsend xc}
						\end{pmatrix}
						\lpar
						\begin{pmatrix}
							\gclr{\lrecv x{\lab_1}} \lprec \bclr{\lsend xb}
							\\\lplus\\
							\gclr{\lrecv x{\lab_2}} \lprec \lEx c{\vclr{\lrecv xc}}
						\end{pmatrix}
					}
				}
			}{
				\vliin{\gclr\lwith}{}{
					\rclr{\lsend ya}
					,
					\lEx a{\rclr{\lrecv ya}}
					,
					\begin{pmatrix}
						\gclr{\lsend x{\lab_1}} \lprec \lEx b{\bclr{\lrecv xb}}
						\\\lwith\\
						\gclr{\lsend x{\lab_2}} \lprec \vclr{\lsend xc}
					\end{pmatrix}
					,
					\begin{pmatrix}
						\gclr{\lrecv x{\lab_1}} \lprec \bclr{\lsend xb}
						\\\lplus\\
						\gclr{\lrecv x{\lab_2}} \lprec \lEx c{\vclr{\lrecv xc}}
					\end{pmatrix}
				}{
					\vliin{\rclr{\mixr}}{}{
						\rclr{\lsend ya}
						,
						\lEx a{\rclr{\lrecv ya}}
						,
						\gclr{\lsend x{\lab_1}} \lprec \lEx b{\bclr{\lrecv xb}}
						,
						\begin{pmatrix}
							\gclr{\lrecv x{\lab_1}} \lprec \bclr{\lsend xb}
							\\\lplus\\
							\gclr{\lrecv x{\lab_2}} \lprec \lEx c{\vclr{\lrecv xc}}
						\end{pmatrix}
					}{
						\vlin{\rclr\exists}{}{
							\rclr{\lsend ya}
							,
							\lEx a{\rclr{\lrecv ya}}
						}{\vlin{\rclr\axrule}{}{\rclr{\lsend ya},\rclr{\lrecv ya}}{\vlhy{}}}
					}{
						\vlin{\gclr\lplus}{}{
							\gclr{\lsend x{\lab_1}} \lprec \lEx b{\bclr{\lrecv xb}}
							,
							\begin{pmatrix}
								\gclr{\lrecv x{\lab_1}} \lprec \bclr{\lsend xb}
								\\\lplus\\
								\gclr{\lrecv x{\lab_2}} \lprec \lEx c{\vclr{\lrecv xc}}
							\end{pmatrix}
						}{
							\vliin{\gclr\lprec}{}{
								\gclr{\lsend x{\lab_1}} \lprec \lEx b{\bclr{\lrecv xb}}
								,
								\gclr{\lrecv x{\lab_1}} \lprec \bclr{\lsend xb}
							}{
								\vlin{\gclr\axrule}{}{
									\gclr{\lsend x{\lab_1}}
									,
									\gclr{\lrecv x{\lab_1}}
								}{\vlhy{}}
							}{
								\vlin{\bclr\exists}{}{
									\lEx b{\bclr{\lrecv xb}}
									,
									\bclr{\lsend xb}
								}{
									\vlin{\bclr\axrule}{}{
										\bclr{\lrecv xb}
										,
										\bclr{\lsend xb}
									}{\vlhy{}}
								}
							}
						}
					}
				}{
					\vliin{\rclr{\mixr}}{}{
						\rclr{\lsend ya}
						,
						\lEx a{\rclr{\lrecv ya}}
						,
						\gclr{\lsend x{\lab_2}} \lprec \vclr{\lsend xc}
						,
						\begin{pmatrix}
							\gclr{\lrecv x{\lab_1}} \lprec \bclr{\lsend xb}
							\\\lplus\\
							\gclr{\lrecv x{\lab_2}} \lprec \lEx c{\vclr{\lrecv xc}}
						\end{pmatrix}
					}{
						\vlin{\rclr\exists}{}{
							\rclr{\lsend ya}
							,
							\lEx a{\rclr{\lrecv ya}}
						}{\vlin{\rclr\axrule}{}{\rclr{\lsend ya},\rclr{\lrecv ya}}{\vlhy{}}}
					}{
						\vlin{\gclr\lplus}{}{
							\gclr{\lsend x{\lab_2}} \lprec \vclr{\lsend xc}
							,
							\begin{pmatrix}
								\gclr{\lrecv x{\lab_1}} \lprec \bclr{\lsend xb}
								\\\lplus\\
								\gclr{\lrecv x{\lab_2}} \lprec \lEx c{\vclr{\lrecv xc}}
							\end{pmatrix}
						}{
							\vliin{\gclr\lprec}{}{
								\gclr{\lsend x{\lab_2}} \lprec \vclr{\lsend xc}
								,
								\gclr{\lrecv x{\lab_2}} \lprec \lEx c{\vclr{\lrecv xc}}
							}{
								\vlin{\gclr\axrule}{}{
									\gclr{\lsend x{\lab_2}}
									,
									\gclr{\lrecv x{\lab_2}}
								}{\vlhy{}}
							}{
								\vlin{\bclr\exists}{}{
									\vclr{\lsend xc}
									,
									\lEx c{\vclr{\lrecv xc}}
								}{
									\vlin{\bclr\axrule}{}{
										\vclr{\lsend xc}
										,
										\vclr{\lrecv xc}
									}{\vlhy{}}
								}
							}
						}
					}
				}
			}
		}
	$}

	\adjustbox{max width=\textwidth}{$\begin{array}{c}
	\\\\
		\begin{array}{c|c}
			\mbox{Conflict Net}
		&
			\mbox{Slice Net}
		\\\\\hline\\
			\begin{array}{c}
				\vpz1{\set{\rclr{\lsend ya},\rclr{\lrecv ya}}}
				\quad
				\begin{array}{ccc}
					\vpz5{\set{\bclr{\lsend xb},\bclr{\lrecv xb}}}
					\quad
					\vpz6{\set{\gclr{\lsend x{\lab_1}},\gclr{\lrecv x{\lab_1}}}}
					&
					\vpz7{\set{\vclr{\lsend xc,\vclr{\lrecv xc}}}}
					\quad
					\vpz8{\set{\gclr{\lsend x{\lab_2}},\gclr{\lrecv x{\lab_2}}}}
					\\
					\vpz3{\conc}
					&
					\vpz4{\conc}
					\\
					\multicolumn2c{\vpz2\conf}
				\end{array}
				\\
				\vpz0\conc\qquad\qquad
				\Gedges{
					pz0/pz1,pz0/pz2,
					pz2/pz3,pz2/pz4,
					pz3/pz5,pz3/pz6,
					pz4/pz7,pz4/pz8%
				}
			\end{array}
		&
			\begin{Bmatrix}
				\begin{Bmatrix}
					\set{\rclr{\lsend ya},\rclr{\lrecv ya}},	\\
					\set{\bclr{\lsend xb},\bclr{\lrecv xb}},	\\
					\set{\gclr{\lsend x{\lab_1}},\gclr{\lrecv x{\lab_1}}}
				\end{Bmatrix}\;,
				\\
				\begin{Bmatrix}
					\set{\rclr{\lsend ya},\rclr{\lrecv ya}},	\\
					\set{\vclr{\lsend xc,\vclr{\lrecv xc}}},	\\
					\set{\gclr{\lsend x{\lab_2}},\gclr{\lrecv x{\lab_2}}}
				\end{Bmatrix}\;\;
			\end{Bmatrix}
		\end{array}
	\end{array}$}
	\caption{
		A derivation of the formula $\fof{P}$ from \Cref{eq:introEx1} corresponding to execution tree in the right of \Cref{fig:compTreesEQ}, and its corresponding conflict net.
	}
	\label{fig:introPN}
\end{figure}

\myparagraph{Related Works on Processes as Formulas}
Following the ideas in \cite{mil:uniform},
Miller proposed in \cite{miller:pi} a theory within linear logic allowing to interpret the reduction semantics of the \picalc as implication in the theory, where parallel is internalized by the $\lpar$ and the choice operator $+$ in the original formulation of the \picalc \cite{mil:par:wal:pi} by the $\lplus$.
Guglielmi developed an extension of multiplicative linear logic with a non-commutative connective aiming at internalizing sequentiality in \cite{guglielmi:concurrecy,guglielmi:95:sequentiality}, lately leading to the design \emph{deep inference}  and the formalism of the \emph{calculus of structures} \cite{gug:SIS} to obtain a satisfactory proof system for the logic $\BV$.
In \cite{bru:02} Bruscoli established a \emph{processes-as-formulas} and \emph{computation-as-deduction} correspondence for a simple fragment of $\CCS$ (without recursion, choice, and restriction) where each successful terminating execution of a \proc corresponds to specific derivations in $\BV$.
This correspondence has been extended to the \picalc by Horne, Tiu et al. \cite{hor:tiu:19,hor:tiu:tow,hor:nom},  including the choice operator ($+$), modeled via the additive connective $\lplus$, and name restriction, modeled using \emph{nominal quantifiers} in the spirit of \cite{pitts:nominal,gabbay:pitts:nominal}.
We highlight here the main differences of our approach with respect to the aforementioned works:
\begin{itemize}
	\item
	we use a non-associative non-commutative self-dual connective $\lprec$ (instead of the non-commutative but associative $\lseq$ in $\BV$).
	This choice allows for a cut-free sequent calculus for $\NL$,
	while no sequent calculus for $\BV$ or any of its extension can exist~\cite{tiu:SIS-II};

	\item
	in \cite{hor:tiu:ama:cio:private,hor:nom} Horne et al. use
	the original version of the \picalc \cite{mil:par:wal:pi} which feature a choice operator $+$ with an undesirable ``non-local'' behavior, which requires to forwardly check that it will entail a communication rule.
	Its rule is written as follows
	$$
	+ : A + B \quad  \redsem \quad A' \qquad\mbox{only if } A \redsem A'
	$$
	That is, the choice operator $+$ is not complitely free to choose between $A$ and $B$, but it is constrained by the possibility of performing an action after such a choice. That is, if $A$ cannot perform any action, then the choice $A+B$ cannot reduce to $A$.

	A logical operator modeling such a choice operator should have a rule capable of spotting (within a given context) the sub-formulas on which some rules can be applied.
	Such a behavior, to the best of our knowledge, has never been studied in the literature of proof theory.
	For this reason, we consider the version of the \picalc from \cite{vasco:pi,gay:hole} in which the two choice operators play different roles:
	the \emph{label-send} $\pbras x {\lab:P_\lab}{\lab\in L}$ allows a \proc to choose its continuation independently of the environment (which we model with the additive conjunction $\lwith$, whose rule branches a derivation duplicating the context), while the \emph{label-receive} $\psels x {\lab:P_\lab}{\lab\in L}$ allows a \proc to choose according to the environment (which we model with the additive disjunction $\lplus$, whose rule is applied according to the context's need).
	This latter version of the \picalc is the one used in the literature of session types \cite{vasco:pi,gay:hole,hon:yos:car:multiparty} and choreographic programming \cite{montesi:book}.\footnote{
		Applications of the logical framework we develop, as well as more precise connections with session types and choreographic programming are presented in the companion paper \cite{acc:man:mon:FaP}.
	}

	\item
	in the work of Bruscoli \cite{bru:02}, and in the works of Horne and Tiu \cite{hor:nom,hor:tiu:tow} derivations correspond to executions, while in our work derivations represent execution trees.
	In the latter works, the original Milner's choice operator ($+$) is modelled in the system $\MAV$ by using the additive connective $\lplus$ from additive linear logic, since the additional information about the environment seems to be guaranteed (a-posteriori) by the fact that the correspondence is only established between successful executions and derivations.
	However, undesired behaviors may still occur in establishing the correspondence executions-as-derivations in $\MAV$.
	For an example, consider the process $A+\pnil \ppar \psenda xa \ppar \precv xa$, which is stuck, but derivable in $\MAV$.

	Note that if we restrict the label-send constructor $\pbras x {\lab:P_\lab}{\lab\in L}$ on singleton sets of labels, or, equivalently, if we redesign the rule for the additive connective $\lwith$ in such a way it only has a single premise (that is, we prune the other premise), then we can recover a correspondence between derivations and executions also in our setting.

	\item
	as Horne and Tiu in \cite{hor:tiu:19,hor:tiu:tow,hor:nom},
	we use a pair of dual nominal quantifiers
	(instead of a self-dual quantifier as in \cite{menni:nominal,mil:tiu:nabla,rov:bind})
	to model restriction.%
	\footnote{
		In \cite{hor:tiu:19} the authors report the use of a non-self-dual quantifier to model restriction was suggested them by Alessio Guglielmi in a private communication.
	}
	However, as explained in detail in \Cref{rem:nominal} and in \Cref{sec:embedding},
	our pair of dual quantifiers satisfies different proof theoretical properties.
\end{itemize}

\myparagraph{Structure of the paper}
In \Cref{sec:back} we recall standard definitions for sequent systems and syntax and semantics of the \picalc.
In \Cref{sec:new} we present formulas and sequent systems, explaining the design choices we made in the opeartors of the logic $\NL$.
In \Cref{sec:PT} we study their proof theoretical properties of our system, including relevant formula equivalences and cut-elimination.
In \Cref{sec:PN} we present the syntax of proof nets for $\NL$, providing translations from derivations to proof nets, and from proof nets to derivations (sequentialization).
In \Cref{sec:canon} we prove canonicity for our proof nets with respect to local rule permutations.
In \Cref{sec:PaF} we show how formulas in $\NL$ can be used to encode \procs of the \picalc, and how execution trees of a \proc $P$ can be represented by derivations of the corresponding formula.
Thereby, we show that equivalent execution trees (modulo interleaving) can be represented by the same proof net.
We conclude in \Cref{sec:conc} by discussing extensions of this framework and its possible applications.

\section{Preliminary Notions}\label{sec:back}

We assume the reader to be familiar with the notion of {trees} and of {formula tree}, as well as with the syntax of sequent calculus (see, e.g., \cite{troelstra_schwichtenberg_2000}), but we recall here the main definitions.
We may identify formulas with their formula-trees and we consider \defn{sequents} as forests made of formula-trees.%
\footnote{
	Said differently, a sequent is a set of occurrences of formulas.
	Note that defining a sequent as a multiset of formulas would require the introduction of additional structure to pinpoint on which occurrences of formulas rules are applied, making way more cumbersome the definition of proof nets (\Cref{sec:PN}) and preventing the confluence of cut-elimination due to the impossibility of distinguishing which occurrence of formula is active for a $\cutr$.
}
of formulas in a given grammar.

A \defn{sequent rule} $\rrule$ is an expression of the form $\vlupsmash{\vlinf{\rrule}{}{\sdash\Gamma}{}}$ , $\vlupsmash{\vlinf{\rrule}{}{\sdash\Gamma}{\sdash\Gamma_1}}$ , or $\vlupsmash{\vliinf{\rrule}{}{\sdash\Gamma}{\sdash\Gamma_1}{\sdash\Gamma_2}}$.
The sequent $\Gamma$ is called \defn{conclusion} and the sequents above the line \defn{premises}.
An occurrence of formula in the conclusion \resp{in a premise} of a rule but in none of its premises \resp{not in its conclusion} is said \defn{principal} \resp{\defn{active}}.
A \defn{sequent system} $\XS$ is a set of sequent rules.

A \defn{derivation} in $\XS$ is a non-empty tree $\dD$ of sequents, whose root is called \defn{conclusion}, such that each sequent in $\dD$ is conclusion of a rule in $\XS$, whose children are (all and only) the premises of the rule.
An \defn{open derivation} is a derivation whose leaves may be the conclusion of no rules, in which case are called \defn{open premises}.
We may denote a derivation \resp{an open derivation with an open premise $\Delta$} $\dD$ with conclusion $\Gamma$ by $\vldownsmash{\vlderivation{\vlpr{\dD}{}{\sdash\Gamma}}}$ $\left(\mbox{resp. }\vldownsmash{\vlderivation{\vlde{\dD}{}{\sdash\Gamma}{\vlhy{\sdash\Delta}}}}\right)$ .

\subsection{$\pi$-Calculus}

\begin{figure}[t]
	\centering
	\adjustbox{max width=\textwidth}{$\begin{array}{c}
			\begin{array}{c|c}
				\mbox{Processes}& \mbox{Structural Equivalence}
				\\
				\begin{array}{l@{\;}c@{\;}l@{\;}|@{\;}l}
					P,Q,R
					&\coloneqq	&
					\pnil  			&\mbox{nil}
					\\&|&
					\psend xy P 	&\mbox{send ($y$ on $x$)}
					\\&|&
					\precv xy P	&\mbox{receive ($y$ on $x$)}
					\\&|&
					P \ppar Q		&\mbox{parallel}
					\\&|&
					\pnu x P		&\mbox{restriction (or nu)}
					\\&|&
					\pbras x{\lab : P_\lab}{\lab\in L}
					&\mbox{label-send (on $x$)}
					\\&|&
					\psels x{\lab : P_\lab}{\lab\in L}
					&\mbox{label-receive (on $x$)}
				\end{array}
				&
				\begin{array}{r@{\;\steq\;}l}
					P 					& P^\alpha
					\\
					P \ppar Q 			& Q \ppar P
					\\
					(P \ppar Q) \ppar R & P \ppar (Q \ppar R)
					\\
					\pnu x \pnu y P 	&  \pnu y \pnu x P
					\\
					P \ppar \pnil 		& P
					\\
					\pnu x S	 		& S
					\\
					\pnu x P \ppar S	& \pnu x (P \ppar S)
				\end{array}
				\\\hline
				\begin{tabular}{c}
					with $x,y\in\namesset$ and $L \subset\labelsset$.
					The constructors binding variables
					\\
					are $\pnu xP$  binding $x$ in $P$, and $\precv xy P$ binding $y$ in $P$
					only
				\end{tabular}
				&
				\begin{tabular}{c}
					$P^\alpha$ $\alpha$-equivalent to $P$
					\\
					$x$ is not a name occurring free in $S$
				\end{tabular}
			\end{array}

			\\\hline\hline

			\begin{array}{c|c}
				\multicolumn{2}{c}{\text{Reduction Semantics}}
				\\
				\begin{array}{r@{:\;}r@{\;\redsem\;}ll}
					\rscomr														&
					\psend xa P \ppar \precv xb Q
					&
					P \ppar Q \fsubst ab
					&
					\\
					\rsbrar														&
					\pbras x{\lab: P_\lab}{\lab\in L}							&
					\pbra x{\lab_k}{P_{\lab_k}}									&
					\mbox{if }\lab_k\in L
					\\
					\rsselr														&
					\pbra x{\lab}{P_{\lab_k}}
					\ppar
					\psels x{\lab:Q_{\lab}}{\lab\in L}
					&
					P_{\lab_k} \ppar Q_{\lab_k}
					&
					\mbox{if } \lab_k\in L
				\end{array}
				&
				\begin{array}{r@{\;:\;}r@{\;\redsem\;}l@{\quad\mbox{if}\quad}l}
					\rsresr	& \pnu x P	&\pnu x P'	& P\redsem P'
					\\
					\rsparr	& P\ppar Q	& P'\ppar Q	& P\redsem P'
					\\
					\rsstrr	& P 		& Q 		& P\steq P' \redsem Q'\steq Q
				\end{array}
			\end{array}
		\end{array}$}
	\caption{
		Syntax for \procs, the relations generating the structural equivalence ($\steq$), and the reduction semantics of the \picalc.
		The $\alpha$-equivalence is defined in the usual way (see Appendix).
	}
	\label{fig:terms}
\end{figure}

We consider the version of \picalc presented in \cite{vasco:pi,gay:hole}, whose processes are generated from a countable set of \defn{(channel) names} $\namesset =\set{x,y,\ldots}$ and (disjoint) finite set of \defn{labels} $\labelsset$ grammar in \Cref{fig:terms}.
\footnote{As standard, may write $\psenda xy$ \resp{$\precva xy$} instead of $\psend xy \pnil$ \resp{$\precv xy \pnil$}.}
In the same figure, we recall the definition of the \defn{structural equivalence} ($\steq$), as well as the \defn{reduction semantics}.
We write $P\not\steq Q $ if $P \steq Q$  does not hold.
We may denote by $\cN\ctx[P]$ a \proc of the form $\pnu {x_1}{\cdots \pnu{x_n}{(P\ppar Q)}}$ for some names $x_1,\ldots x_n$ and a \proc $Q$, and write $a$ instead of $a.\pnil$ if $a\in\set{\psenda xy,\precva xy}$.
We denote by $\redsems$ the transitive closure of $\redsem$.

A \proc $P$ is \defn{stuck} if $P \not\steq \pnil$ and there is no $P'$ such that
$P \redsem P'$.
A \proc P is called \defn{deadlock-free} if $P$ is not stuck and there is no stuck \proc $P'$ such that $P \redsems P'$.
A \proc $P$ is \defn{race-free} if there is no $P'$ such that $P\redsems P'$ for a $P'$ structurally equivalent to one of the following processes:
\begin{equation}
	\hfill
	\begin{array}{l@{\qquad\qquad}l}
		\cN\ctx[(\psend xy R \ppar \psend xz Q  \ppar S)]
	&
		\cN\ctx[(\pbras x{P_\lab}{\lab\in L} \ppar \pbras x{P_\lab}{\lab\in L'} \ppar S)]
	\\
		\cN\ctx[(\precv xy R \ppar \precv xz Q  \ppar S)]
	&
		\cN\ctx[(\psels x{P_\lab}{\lab\in L} \ppar \psels x{P_\lab}{\lab\in L'} \ppar S)]
	\end{array}
	\hfill
\end{equation}

A \defn{execution trees} of a \proc $P$ is a trees of processes with root $P$, where a \proc $Q'$ is a child of $Q$ if $Q\redsem Q'$, and such that branching is determined by the intrinsic non-determinism of the reduction rule $\rsbrar$, that is, if two processes $Q_1$ and $Q_2$ are children of a same process $Q$, then $Q\redsem Q_1$ and $Q\redsem Q_2$ via $\rsbrar$ applied to the same minimal (w.r.t. term inclusion) sub-process of $Q$.
We may label the edges of a execution tree with the unique reduction rule in $\set{\rscomr,\rsbrar,\rsselr}$ required to reduce the term $P$ to $Q$.%
\footnote{
	The reduction rules $\rsresr$, $\rsparr$ and $\rsstrr$ are not ``meaningful'' with respect to the computation, and even if a transition step may require multiple instances of these rules to deal with the bureaucracy of the syntax and the structural congruence, only a single instance of a rule in $\set{\rscomr,\rsbrar,\rsselr}$ is required to perform a reduction step.
	For a formal definition of the labelling of the execution tree, see the definition of \emph{core-reduction} in \cite{acc:man:mon:FaP}.
}
The \defn{interleaving} equivalence relation ($\teq$) on execution trees is defined by the relations in \Cref{fig:interleaving}.
See \Cref{fig:compTreesEQ} for an example of two execution trees equivalent modulo interleaving.

\begin{figure}[t]
	\adjustbox{max width=\textwidth}{$\begin{array}{c}
		\begin{array}{c}
			\vmod3{\cN\ctx[P'\ppar \bclr{Q'}]}
			\\[30pt]
			\vmod2{\cN\ctx[\rclr{P'}\ppar \bclr{Q}]}
			\\[30pt]
			\vmod1{\cN\ctx[\rclr{P}\ppar Q]}
		\end{array}
		\Dledges{mod1/mod2/{\rclr{\alpha}},mod2/mod3/{\bclr{\beta}}}
		\teq
		\begin{array}{c}
			\vmod3{\cN\ctx[\rclr{P'}\ppar Q']}
			\\[30pt]
			\vmod2{\cN\ctx[\rclr{P}\ppar \bclr{Q'}]}
			\\[30pt]
			\vmod1{\cN\ctx[P\ppar \bclr{Q}]}
		\end{array}
		\Dledges{mod1/mod2/{\bclr{\beta}},mod2/mod3/{\rclr{\alpha}}}
	\qquad\qquad
		\begin{array}{ccc}
			\vmod3{\cN\ctx[P'\ppar \bclr{Q_1}]}
			&
			\cdots
			&
			\vmod4{\cN\ctx[P'\ppar \bclr{Q_n}]}
			\\[30pt]
			&\vmod2{\cN\ctx[\rclr{P'}\ppar \bclr{Q}]}
			\\[30pt]
			&\vmod1{\cN\ctx[\rclr{P}\ppar Q]}
		\end{array}
		\Dledges{mod1/mod2/{\rclr{\alpha}},mod2/mod3/{\bclr{\gamma_1}},mod2/mod4/{\bclr{\gamma_n}}}
		\teq
		\begin{array}{ccc}
			\vmod4{\cN\ctx[\rclr{P'}\ppar Q_1]}
			&\cdots&
			\vmod5{\cN\ctx[\rclr{P'}\ppar Q_n]}
			\\[30pt]
			\vmod2{\cN\ctx[\rclr{P}\ppar \bclr{Q_1}]}
			&
			\cdots
			&
			\vmod3{\cN\ctx[P\ppar \bclr{Q_n}]}
			\\[30pt]
			&\vmod1{\cN\ctx[P\ppar \bclr{Q}]}
		\end{array}
		\Dledges{
			mod1/mod2/{\bclr{\gamma_1}},
			mod1/mod3/{\bclr{\gamma_n}},
			mod2/mod4/{\rclr{\alpha}},
			mod3/mod5/{\rclr{\alpha}}
		}
	\\\\
		\begin{array}{c@{\!\!\!\!\!}c@{\!\!\!\!\!}c}
			\vmod4{\cN\ctx[P_1\ppar \bclr{Q_1}]}
			\cdots
			\vmod6{\cN\ctx[P_1\ppar \bclr{Q_n}]}
			&&
			\vmod5{\cN\ctx[P_m\ppar \bclr{Q_1}]}
			\cdots
			\vmod7{\cN\ctx[P_m\ppar \bclr{Q_n}]}
			\\[30pt]
			\vmod2{\cN\ctx[\rclr{P_1}\ppar \bclr{Q}]}
			&
			\cdots
			&
			\vmod3{\cN\ctx[\rclr{P_n}\ppar \bclr{Q}]}
			\\[30pt]
			&\vmod1{\cN\ctx[\rclr{P}\ppar Q]}
		\end{array}
		\Dledges{
			mod1/mod2/{\rclr{\delta_1}},
			mod1/mod3/{\rclr{\delta_m}},
			mod2/mod4/{\bclr{\gamma_1}},
			mod2/mod6/{\bclr{\gamma_n}},
			mod3/mod5/{\bclr{\gamma_1}},
			mod3/mod7/{\bclr{\gamma_n}}
		}
		\teq
		\begin{array}{c@{\!\!\!\!\!}c@{\!\!\!\!\!}c}
			\vmod4{\cN\ctx[\rclr{P_1}\ppar Q_1]}
			\cdots
			\vmod6{\cN\ctx[\rclr{P_m}\ppar Q_1]}
			&&
			\vmod5{\cN\ctx[\rclr{P_1}\ppar Q_n]}
			\cdots
			\vmod7{\cN\ctx[\rclr{P_m}\ppar Q_n]}
			\\[30pt]
			\vmod2{\cN\ctx[\rclr{P}\ppar \bclr{Q_1}]}
			&
			\cdots
			&
			\vmod3{\cN\ctx[\rclr{P}\ppar \bclr{Q_m}]}
			\\[30pt]
			&\vmod1{\cN\ctx[P\ppar \bclr{Q}]}
		\end{array}
		\Dledges{
			mod1/mod2/{\bclr{\gamma_1}},
			mod1/mod3/{\bclr{\gamma_n}},
			mod2/mod4/{\rclr{\delta_1}},
			mod2/mod6/{\rclr{\delta_m}},
			mod3/mod5/{\rclr{\delta_1}},
			mod3/mod7/{\rclr{\delta_m}}
		}
	\end{array}$}
	\caption{
		Generators of the execution tree equivalence with $\alpha,\beta\in\set{\rscomr,\rsselr}$ and $\gamma_i,\delta_j\in\set{\rsbrar}$,
		where  $\set{P_1,\ldots,P_m}$ \resp{$\set{Q_1,\ldots,Q_n}$}
		is the set of all \procs such that $P\redsem P_i$ \resp{$Q\redsem Q_j$} via $\rsbrar$.
	}
	\label{fig:interleaving}
\end{figure}

\begin{figure}[t]
	\adjustbox{max width=\textwidth}{$\begin{array}{c}
		\begin{array}{cc}
			\vmod7{\pnil}
			&
			\vmod8{\pnil}
			\\[30pt]
			\vmod5{\pnu x(\bclr{\precva xb \ppar   \psenda xb})}
			&
			\vmod6{\pnu x(\vclr{\precva xc \ppar   \psenda xc})}
			\\[30pt]
			\vmod3{\gclr{\pnu x(Q_1)}}
			&
			\vmod4{\gclr{\pnu x(Q_2)}}
			\\[30pt]
			\multicolumn{2}{c}{\vmod2{\pnu x( \gclr Q )}}
			\\[30pt]
			\multicolumn{2}{c}{\vmod1{\pnu x \pnu y(\rclr{\psenda ya \ppar \precva ya} \ppar Q)}}
		\end{array}
		\Dledges{
			mod1/mod2/{\rclr{\rscomr}},
			mod2/mod3/{\gclr{\rsbrar}},
			mod2/mod4/{\gclr{\rsbrar}},
			mod3/mod5/{\gclr{\rsselr}},
			mod5/mod7/{\bclr{\rscomr}},
			mod4/mod6/{\gclr{\rsselr}},
			mod6/mod8/{\vclr{\rscomr}}
			}
		\begin{array}{cc}
			\vmod8{\pnil}
			&
			\vmod9{\pnil}
			\\[30pt]
			\vmod6{\pnu x(\bclr{\precva xb \ppar   \psenda xb})}
			&
			\vmod7{\pnu y (\rclr{\psenda ya\ppar \precva ya})}
			\\[30pt]
			\vmod4{\gclr{\pnu x(Q_1)}}
			&
			\vmod5{\pnu x\pnu y(\psenda ya \ppar \precva ya \ppar \vclr{ \psenda xc \ppar \precva xc})}
			\\[30pt]
			\vmod2{\pnu x \pnu y(\rclr{\psenda ya \ppar \precva ya} \ppar Q_1)}
			&
			\vmod3{\pnu x\pnu y(\psenda ya \ppar \precva ya \ppar \gclr{Q_2})}
			\\[30pt]
			\multicolumn{2}{c}{\vmod1{\pnu x\pnu y( \psenda ya \ppar \precva ya \ppar \gclr Q)}}
		\end{array}
		\Dledges{
			mod1/mod2/{\gclr{\rsbrar}},
			mod1/mod3/{\gclr{\rsbrar}},
			mod2/mod4/{\rclr{\rscomr}},
			mod4/mod6/{\gclr{\rsselr}},
			mod6/mod8/{\bclr{\rscomr}},
			mod3/mod5/{\gclr{\rsselr}},
			mod5/mod7/{\vclr{\rscomr}},
			mod7/mod9/{\rclr{\rscomr}}
		}
		\\\\
		\mbox{where }
		Q_1=\pbra x{\lab_1}{\precva xb} \ppar  \pselsv x{\lab_1 : \psenda xb , \lab_2 : \precva xc}
		\mbox{ and }
		Q_2=\pbra x{\lab_2}{\psenda xc} \ppar  \pselsv x{\lab_1 : \psenda xb , \lab_2 : \precva xc}
	\end{array}$}
	\caption{
		Two equivalent execution trees of the \proc $P$ from \Cref{eq:introEx1}.
	}
	\label{fig:compTreesEQ}
\end{figure}

\section{A New Logical Framework for the \picalc}\label{sec:new}

\begin{figure}[t]
	\adjustbox{max width=\textwidth}{$
		\begin{array}{c|c|c}
			\mbox{Formulas}
			&
			\mbox{De Morgan Laws}
			&
			\mbox{$\alpha$-equivalence}
		\\
			\begin{array}{l@{\;}c@{\;}llll}
				A,B &\coloneqq	&
				\lunit
				&\mbox{unit (atom)}			\\&\mid&
				\lsend xy
				&\mbox{atom}				\\&\mid&
				\lrecv xy
				&\mbox{atom}				\\&\mid&
				A\lpar B
				&\mbox{par}					\\&\mid&
				A\ltens B
				&\mbox{tensor}				\\&\mid&
				A\lprec B
				&\mbox{prec}				\\&\mid&
				A\lplus B
				&\mbox{oplus}				\\&\mid&
				A\lwith B
				&\mbox{with}				\\&\mid&
				\lFa xA
				&\mbox{for all}				\\&\mid&
				\lEx xA
				&\mbox{exists}				\\&\mid&
				\lNu xA
				&\mbox{new}					\\&\mid&
				\lYa xA
				&\mbox{ya}
			\end{array}
		&
			\begin{array}{ccc}
				\cneg \lunit		&=&	\lunit
				\\
				\cneg{(\cneg A)}	&=&	A
				\\
				\cneg{\lsend xy}	&=& \lrecv xy
				\\
				\cneg{(A \lpar B)}	&=& \cneg A \ltens \cneg B
				\\
				\cneg{(A \lprec B)}	&=& \cneg A \lprec \cneg B
				\\
				\cneg{(A \lplus B)}	&=& \cneg A \lwith \cneg B
				\\
				\cneg{(\forall x.A)}&=& \lEx x{\cneg A}
				\\
				\cneg{(\lNu \xX A)}	&=& \lYa \xX{\cneg A}
			\end{array}
		&
			\begin{array}{c}
				a 	 = 	a
				\\
				\mbox{if $a \in \set{\lunit, \lsend xy , \lrecv xy}$}
			\\\\
				A_1 \circleddot A_2 	= B_1 \circleddot B_2
				\\
				\mbox{if $A_i = B_i$}
				\\
				\mbox{and $\circleddot \in \set{\lpar,\lprec,\ltens,\lplus,\lwith}$}
			\\\\
				\lQu x A 		= 	\lQu y A \fsubst yx
				\\
				\mbox{$y$ fresh for $A$}
				\mbox{and $\lqusymb \in \set{\lnewsymb{}, \lyasymb{}, \forall, \exists}$}
			\end{array}
		\end{array}
	$}
	\caption{Formulas (with $x,y\in \varset$), and their syntactic equivalences.}
	\label{fig:Seq}
\end{figure}


In this section we construct proof systems extending \emph{first-order multiplicative additive linear logic} (or $\MALL^1$) with new operators allowing us to fitfully capture the behavior of term constructors for processes of the \picalc w.r.t. the reduction semantics.

For this purpose, we enrich the language of $\MALL^1$ with a non-commutative connective $\lprec$ designed to capture the logical properties of the (non-commutative) prefix operator used in the \picalc \cite{mil:par:wal:pi} (but also in \CCS \cite{M80}).
Even if it would be desirable to require $\lprec$ to be associative, to capture the associativity of sequential composition of \procs, we instead let $\lprec$ being non-associative to reflect the fact that the prefix operator only allows prefixing a single atomic action at a time, and thus it does not model sequential composition because unable to compose sequentially non-atomic \procs.

To capture restriction, following the spirit of the nominal quantifiers as introduced in \cite{gabbay:pitts:nominal}, we use the nominal quantifier $\lnewsymb$ allowing variable binding.
As already explained in \cite{mil:tiu:nabla}, the universal quantifier cannot not be used to satisfactorily model restriction. For an example, consider the \procs $Q=\pnu x\pnu y( \psend\pnil xa \ppar \precv\pnil ya)$ and $R=\pnu z (\psend\pnil za \ppar \precv\pnil za)$: if we encode restriction by universal quantification, then any property for $Q$ should also be valid for $R$, because $\lFa x \lFa y P(x,y)$ entails $\lFa z P(z,z)$.
The use of the existential quantifiers to model the variable binding of the input action, and nominal quantifiers for restriction allow us to avoid an unsound semantical overlap due to the duality of existential and universal quantification.
For example, if restriction were modelled by the universal quantifier, then the deadlocked process $\pnu a {\psend xa \pnil} \ppar \precv xa\pnil$ would be encoded by the tautology $(\lFa a (\lsend xa \lprec \lunit)) \lpar (\lEx a (\lrecv xa \lprec\lunit)) = (\lEx a (\lrecv xa \lprec\lunit))\limp (\lEx a (\lrecv xa \lprec\lunit))$, while \Cref{thm:deadlock} shows that the formula encoding the process should not be provable. This mismatch is due to the translation, which allows unsound interactions between the binding of input actions with binding of restriction via a duality which is not valid in the semantics.

\begin{remark}
	In our work, we do not consider a self-dual nominal quantifier as the one studied in \cite{mil:tiu:nabla,che:nominal,rov:bind}, but we rather introduce a dual quantifier similarly to what is done in \cite{hor:tiu:19,hor:tiu:tow,hor:nom}, where such a design choice is justified in view of the semantics of the \picalc.

	In \cite{hor:nom} the authors list the three following logical properties a nominal quantifier $\lqusymb$ modeling the binder should satisfy:
	\begin{enumerate}
		\item\label{nom:1} \emph{equivariance}, $\lQu x{\lQu yA}$ and $\lQu y{\lQu xA}$ should be logically equivalent;
		\item\label{nom:2} \emph{non-diagonality}: the formula $\lQu x {\lQu y {A(x,y)}} $ should not imply $\lQu z {A(z,z)}$ or vice versa;
		\item\label{nom:3} \emph{scope extrusion}:
		if $\ominus$ is a connective modeling parallelism, then
		$(\lQu x A) \ominus B$ implies $\lQu x (A \ominus B)$ whenever $x$ does not occur in $B$.
	\end{enumerate}
	In our work, all these requirements are met (see \Cref{prop:feq}), but we also claim that the following additional condition should be added in this list, in view of how restriction and choice operators interact.
	\begin{enumerate}[4]
		\item\label{nom:4} \emph{name-choice}:
		if $\circleddot$ is a connective modeling choice, then $\lQu x A \circleddot \lQu x B $ should imply $\lQup x {A \circleddot B}$.
	\end{enumerate}
	Intuitively, this latter requirement is dictated by the observational indistinguishably of a \proc spawning a fresh name before making a global choice, and a \proc spawning a fresh name after such a choice is made.

	Note that in \cite{pitts:nominal} this latter condition is required to hold for all connectives, and in \cite{hor:nom} such condition holds for the sequential operator.
	However, such a behavior may not be desirable in certain contexts where two processes executed in sequence may behave differently if they share a communication channel or not.
	For an example, consider the case in which a communication channel become loose or vulnerable to attacks after events depending on the use of the channel.
\end{remark}

\begin{definition}\label{def:formulas}
	\defn{Formulas} are generated by a countable set of \defn{variables} ($\varSet$) by the grammar in \Cref{fig:Seq} modulo the standard
	\defn{De Morgan Laws} and
	\defn{$\alpha$-equivalence}
	from the same figure.
	A \defn{context} is a formula containing a special occurrence of an atomic variable $\chole$ (called \defn{hole}) and we denote by $\cC\ctx[A]$ the formula obtained by replacing $\chole$ with a formula $A$.
	An \defn{atom} is either the \defn{unit} $\lunit$, or a predicate $\lsend xy$ or $\lrecv xy$.
	The \defn{(linear) implication} $A\limp B$ is defined as $\cneg A \lpar B$, where the \defn{negation} is defined over formulas by extending the negation on atoms via the \defn{de Morgan laws} in \Cref{fig:Seq}.

	For each formula, we define the set $\freeof A$ of \defn{free variables} as the set of variables occurring in $A$ which are not bounded by any quantifier.
	The free variables in a sequent $\Gamma=A_1,\ldots, A_n$ is the set $\freeof\Gamma=\bigcap_{i=1}^{n}\freeof{A_i}$.
	From now on, we assume sequents to be \defn{clean}, that is, such that each variable $x$ can occur bound by at most a universal quantifier, or by two dual nominal quantifiers.
\end{definition}
\begin{remark}
	To provide a lighter presentation of our systems, as well as to highlight the connections with the \picalc, in this paper we consider formulas whose propositional atoms are generated by a limited signature containing no function symbols and two ``dual'' binary predicates $\lsend--$ and $\lrecv--$.
	However, a more expressive extension could be easily defined, and the results presented in this paper could be straightforwardly extended by addressing simple technical nuances, which we highlight in this paper whenever relevant.
\end{remark}

\begin{figure}[t]
	\centering
	\adjustbox{max width=\textwidth}{$\begin{array}{c}
		\vlinf{\axrule}{}{\sdash \lsend xy, \lrecv xy}{}
		\qquad
		\vlinf{\lpar}{}{\sdash \Gamma, A\lpar B}{\sdash \Gamma, A, B}
		\qquad
		\vliinf{\ltens}{}{\sdash[\sS_1 , \sS_2] \Gamma, A\ltens B,\Delta}{\sdash[\sS_1] \Gamma, A}{\sdash[\sS_2] B, \Delta}
		\qquad
		\vlinf{\lunit}{}{\sdash \lunit}{}
		\qquad
		\vliinf{\mixr}{}{\sdash[\sS_1 , \sS_2] \Gamma, \Delta}{\sdash[\sS_1] \Gamma}{\sdash[\sS_2] \Delta}
	\\\\
		\vlinf{\lplus}{}{\sdash\Gamma, A_1\lplus A_2}{ \sdash \Gamma, A_i}
		\qquad
		\vliinf{\lwith}{}{\sdash \Gamma,A\lwith B}{\sdash \Gamma,A}{\sdash \Gamma,B}
		\qquad
		\vlinf{\forall}{\dagger}{\sdash \Gamma, \lFa x{A}}{\sdash \Gamma, A}
		\qquad
		\vlinf{\exists}{}{\sdash \Gamma, \lEx x{A}}{\sdash \Gamma, A\fsubst yx}
	\\\\\hline\\
		\vliinf{\lprec}{}{
			\sdash[\sS_1 , \sS_2] \Gamma,\Delta, A\lprec B ,C \lprec D
		}{ \sdash[\sS_1] \Gamma, A,C}{\sdash[\sS_2] \Delta, B,D}
		\qquad
		\vliinf{\precur}{}{
			\sdash[\sS_1 , \sS_2] \Gamma,\Delta, A\lprec B
		}{ \sdash[\sS_1] \Gamma, A}{\sdash[\sS_2] \Delta, B}
	\\\\\hline\\
		\vlinf{\nuur}{\dagger}{\sdash \Gamma, \lNu x A}{\sdash[\sS] \Gamma, A}
		\qquad
		\vlinf{\nuloadr}{\dagger}{\sdash \Gamma, \lNu xA}{\sdash[\sS , \isnu x] \Gamma , A}
		\qquad
		\vlinf{\nupopr}{}{\sdash[\sS , \isnu y] \Gamma, \lYa xA}{\sdash \Gamma,A\fsubst yx}
	\\\\
		\vlinf{\yaur}{\dagger}{\sdash \Gamma, \lYa x A}{\sdash[\sS] \Gamma, A}
		\qquad
		\vlinf{\yaloadr}{\dagger}{\sdash \Gamma,\lYa xA}{\sdash[\sS , \isya x] \Gamma , A}
		\qquad
		\vlinf{\yapopr}{}{\sdash[\sS , \isya y] \Gamma, \lNu xA}{\sdash \Gamma, A\fsubst yx}
	\\\\\hline\hline\\
		\vlinf{\AXrule}{}{\sdash[\emptyset] A,\cneg A}{}
		\qquad
		\vliinf{\cutr}{}{\sdash[\sS_1,\sS_2] \Gamma, \Delta}{\sdash[\sS_1] \Gamma, A}{\sdash[\sS_2] \cneg A, \Delta}
		\qquad
		\vlinf{\nqsrule}{\dagger}{\sdash \Gamma, \lYa xA, \lNu xB}{\sdash \Gamma, A,B}
	\end{array}$}
	\caption{
		Sequent calculus rules with side conditions $\dagger\coloneqq x\notin \freeof{\Gamma}$.
		The system $\PIL$ is made of the rules above the double line. The rules below it are derivable.
	}
	\label{fig:rules}
\end{figure}

\begin{definition}
	A \defn{nominal variable} is an element of the form $\isna x$ with $x\in\varSet$ and $\nabla\in\set{\isnusymb,\isyasymb}$.
	If $\sS$ is a set of nominal variables, we say that $x$ \defn{occurs} in $\sS$ if $\isnu x$ or $\isya x$ is an element of $\sS$.
	A  \defn{(nominal) store} $\sS$ is a set of nominal variables such that each variable occurs at most once in $\sS$.
	A \defn{judgement} $\sdash\Gamma$ consists of a store $\sS$, and a clean sequent $\Gamma$.
\end{definition}

\begin{nota}
	We write judgements $\sdash\Gamma$ with
	$\sS=\emptyset$
	\resp{$\sS=\set{ \isna[1]{x_1},\ldots,\isna[n]{x_n}}$}
	simply as
	$\sdash \Gamma$ \resp{$\sdash[ {\isna[1]{x_1},\ldots,\isna[n]{x_n} }]\Gamma$, i.e., omitting parenthesis}.
	We write $\sS_1,\sS_2$ to denote the (disjoint) union of two stores such that a same variable does not occur in both $\sS_1$ and $\sS_2$.
\end{nota}

We define rule systems using rules from \Cref{fig:rules}.
The rules in \Cref{fig:rules} in the first block are standard rules for the first-order multiplicative and additive fragment of linear logic decorated with stores.
As expected, rules $\ltens$ and $\mixr$ split the context (and thus the store) among premises to enforce the linear use of resources, which is typical for multiplicative rules.
In contrast, the rule $\lwith$ (with) duplicates the context (and thus the store).
The rules for the connective $\lprec$ in the second block are also multiplicative in this sense, as they maintain the same context-splitting behavior.
The rules $\nuur$ and $\yaur$ simply remove quantification respecting the freshness condition ($\dagger$), as the standard rule $\forall$ for the universal quantifier.
We are not making use of substitution for these rules because we assume $\alpha$-renaming could be applied to the formula prior to the application of the rule, in order to satisfy the side condition $\dagger$.
Similarly, the rule $\nuloadr$ \resp{$\yaloadr$} removes quantification respecting the freshness condition $\dagger$, as the standard rule $\forall$ for the universal quantifier, but it also adds to the store the nominal variable $\isnu x$ \resp{$\isya x$}, where $x$ is the variable bound by the nominal quantifier of the principal formula.
The rule $\nupopr$ \resp{$\yapopr$} behaves similarly to the rule $\exists$ for the existential quantifier, but removing an occurrence of the dual nominal quantifier $\lyasymb$ \resp{$\lnewsymb$}.
The name is due to the fact that the variable used for the substitution has to be a nominal variable $\isnu x$ \resp{$\isya x$} in the store.

We prove in this section the admissibility of the that the rules below the double line, which are the standard rules for the general (non-atomic) axiom and cut, and a special rule $\nqsrule$ removing a pair of dual nominal quantifiers binding the same variable $x$.

\begin{definition}
	We define the following systems using rules from \Cref{fig:rules}.
	\begin{equation}\label{eq:systems}
		\begin{array}{r@{\;=\;}l@{\quad}r@{\;=\;}l@{\quad}r@{\;=\;}l}
			\MLL		&\Set{\axrule,\lpar,\ltens}
			&
			\MLLx		&\MLL	\cup\set{\lunit,\mixr}
			\\
			\MALL		& \MLL 	\cup\Set{\lplus,\lwith}
			&
			\MALL_1		&\MALL	\cup\Set{\forall,\exists}
			\\
			\NML		&\MLL	\cup\Set{\lprec,\precur,\lunit,\mixr}
			&
			\NMAL		&\MALL	\cup\Set{\lprec,\precur,\lunit,\mixr}
			\\
			\miniPIL	&\NMAL	\cup\Set{\exists,\forall,\nuur,\yaur,\nqsrule}
			&
			\PILm		&\NMAL	\cup\Set{\exists,\forall,\nuur,\yaur,\nuloadr,\nupopr}
			\\
			\multicolumn{4}{c}{\PIL		=\MALL_1	\cup\Set{\lprec,\precur,\lunit,\mixr}	\cup\Set{\nuur,\yaur,\nuloadr,\nupopr,\yaloadr,\yapopr}}
		\end{array}
	\end{equation}
	If $\XS$ is a system, we write $\proves[\XS]\Gamma$ to denote that $\sdash[\emptyset]\Gamma$ is derivable in $\XS$.
\end{definition}

\begin{remark}\label{rem:nomlink}
	During proof search, the nominal quantifier $\nabla\in\set{\lnewsymb,\lyasymb}$ could be removed by a rule $\naloadr$, storing the variable bound by the nominal quantifier in the store (as a nominal variable).
	Since the axiom rule and the unit rule have empty store, each nominal variable $\isna x$ in the store must be used for the substitution of a variable bound by the dual nominal quantifier $\cneg\nabla$.
	That is, any derivation establishes some pairing between each $\naloadr$ with some $\napopr$ rules above it in a derivation -- in absence of additive connectives, such a $\napopr$ above the $\naloadr$ is unique.

	A similar pairing can be observed in the weaker rule $\nqsrule$, which removes a pair of dual nominal quantifiers binding the same variable $x$.
	However, such a localized interaction rules out the possibility to prove quantifier swaps for nominal quantifiers and the nominal-choice laws from \Cref{eq:feq}.
\end{remark}

\begin{remark}\label{rem:nominal}
	The pair $\tuple{\lnewsymb,\lyasymb}$ of nominal quantifiers in $\NL$ behaves differently from the pair   $\tuple{\lnewsymb,\lwensymb}$ considered by Horne and Tiu for the logic $\BVq$ and its extensions \cite{hor:tiu:19,hor:tiu:tow,hor:nom}.
	One difference is the way $\lnewsymb$ and $\lyasymb$ interact in $\NL$, in which each nominal quantifier $\lqusymb$ interacts with at most  one dual quantifier $\cneg\lqusymb$,
	while in $\BVq$ a $\lnewsymb$ can interact with multiple $\lwensymb$.
	By means of example, the implication $(\lNu x{ A} \ltens \lNu x { B}) \limp \lNup x{A\lpar B}$ (i.e., the formula $\lYa x{ \cneg A} \lpar \lYa x { \cneg B} \lpar \lNup x{A\lpar B}$) is provable in $\BVq$ but not in $\NL$.

	This reminds the different ways the modalities in the modal logics $\Mrule$ and $\Krule$ interacts via the rules:
	in the former, each diamond ($\ldia$)interacts with exectly one box ($\lbox$), while in the latter, multiple diamonds can interact with a single box, as shown in the sequent rules of their sequent calculi -- see \cite{ind:monotonicModal,lel:pim:modal,acc:str:modal} for additional details.
	$$\vlinf{\Mrule}{}{\sdash \ldia B,\lbox A}{\sdash B,A}
	\qquand
	\vlinf{\Krule}{\text{\scriptsize $n\in\N$}}{\sdash \ldia B_1,\ldots, \ldia B_n,\lbox A}{\sdash B_1,\ldots, B_n,A}$$

	Moreover, the implication $\lNu xA \limp \cneg{\left(\lNu x \cneg A\right)}$ (that is, the formula $\lYa xA \lpar \lYa xA$) is not derivable in $\NL$, while it is in $\BVq$.

	Another difference depends on the way nominal quantifiers interact with the connective modeling sequentiality.
	Our nominal quantifiers do not satisfy {scope extrusion} over sequentiality, that is, the formula $\lNu x{(A\lprec B)} \feq (\lNu x A) \lprec B$ with $x\notin\freeof B$  is not derivable in $\NL$.
	However, this property is striclty needed in $\BVq$ in order to guarantee that the logical implication is a transitive relation (i.e., that if $A\limp B$ and $B\limp C$ are derivable, then also  $A\limp C$ is derivable).
\end{remark}

\section{Proof theoretical properties of $\NL$}\label{sec:PT}

In this section we prove the proof theoretical properties of the system $\NL$, including the derivability of rules and the possibility of embedding $\NL$ in $\MAVq$.

Our systems satisfy the property referred to as \emph{initial coherence} \cite{avr:canonical:01,mil:pim:13}, that is, the property that atomic axioms suffice to guarantee the possibility of deriving the general axiom rule.
Said differently, in $\NL$ we can derive any formula of the form $A\limp A$ using axiom rules restricted on atoms only.
\begin{proposition}\label{prop:nuyarule}
	Then the rule $\nqsrule$ is derivable in $\set{\nuloadr,\nupopr}$.
\end{proposition}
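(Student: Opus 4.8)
The plan is to give an explicit derivation of the rule $\nqsrule$ using a single application of $\nuloadr$ below a single application of $\nupopr$. Writing $\sS$ for the store of the judgement, the construction is the open derivation
\[
\vlderivation{
	\vlin{\nuloadr}{}{\sdash \Gamma, \lYa xA, \lNu xB}{
		\vlin{\nupopr}{}{\sdash[\sS, \isnu x] \Gamma, \lYa xA, B}{
			\vlhy{\sdash \Gamma, A, B}
		}
	}
}
\]
whose conclusion is exactly the conclusion of $\nqsrule$ and whose open premise is exactly the premise of $\nqsrule$.

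The central observation is that, since the sequent is clean, the same variable name $x$ may be bound simultaneously by the dual nominal quantifiers of $\lYa xA$ and $\lNu xB$. I would first apply $\nuloadr$ at the root to introduce $\lNu xB$, which loads the nominal variable $\isnu x$ into the store. I would then apply $\nupopr$ to introduce $\lYa xA$, consuming precisely this $\isnu x$; because the quantified variable and the loaded nominal variable carry the same name $x$, the substitution $A\fsubst xx$ demanded by $\nupopr$ is the identity, so its premise reduces to $\sdash \Gamma, A, B$ (up to reordering, which is immaterial since sequents are sets of formula occurrences). This is the localized load/pop pairing anticipated in \Cref{rem:nomlink}. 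The order of the two rules is not accidental: $\nupopr$ cannot be applied first, as the ambient store offers no nominal variable to pop, so $\nuloadr$ must supply one below it.

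It then remains to check the side conditions. The rule $\nupopr$ carries no freshness proviso, so nothing is required there. For the $\nuloadr$ step, the side condition $\dagger$ asks that $x \notin \freeof{\Gamma, \lYa xA}$; this holds because $x \notin \freeof{\Gamma}$ is precisely the hypothesis $\dagger$ of $\nqsrule$, while $x \notin \freeof{\lYa xA}$ since $x$ is bound there. This completes the argument. There is no genuine obstacle: the proof is bookkeeping of stores and freshness, the only subtle point being the recognition that sharing the name $x$ between the two dual binders makes the pop-substitution trivial and lets the loaded variable match exactly what $\nupopr$ needs.
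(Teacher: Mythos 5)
Your proof is correct and follows exactly the same route as the paper, which likewise replaces each instance of $\nqsrule$ by a $\nuloadr$ followed (bottom-up) by a $\nupopr$; your version merely spells out the store bookkeeping and the freshness check that the paper leaves implicit.
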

\begin{proof}
	It suffices to remark that each instance of $\nqsrule$ can be replaced by a  $\nuloadr$ followed (bottom-up) by a $\nupopr$.
\end{proof}
\begin{proposition}\label{prop:AX}
	Then the rule $\AXrule$ is derivable in $\MALL^1\cup\set{\lunit,\lprec,\nuloadr,\nupopr}$.
\end{proposition}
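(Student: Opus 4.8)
The plan is to prove the slightly more general statement that $\sdash[\emptyset] A, \cneg A$ is derivable for \emph{every} formula $A$, by structural induction on $A$; the rule $\AXrule$ is then exactly this family of derivations. Throughout I maintain as induction hypothesis that $\sdash[\emptyset] B, \cneg B$ is already derivable for every proper subformula $B$ of $A$, and I read each derivation bottom-up, checking the store and the freshness side conditions as I go.

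For the base cases, if $A$ is a predicate $\lsend xy$ (resp.\ $\lrecv xy$) then $\cneg A = \lrecv xy$ (resp.\ $\lsend xy$), and $\sdash[\emptyset] \lsend xy, \lrecv xy$ is an instance of $\axrule$. The remaining atom is the self-dual unit $\lunit$, where $\cneg\lunit = \lunit$: here $\sdash[\emptyset] \lunit, \lunit$ is obtained by $\mixr$ from two instances of the $\lunit$ rule. Since none of the other rules in the stated system can produce the two-atom sequent $\sdash[\emptyset]\lunit,\lunit$ without a principal connective to introduce, this is the one place where the argument genuinely needs $\mixr$; it is harmless since $\mixr$ is present in the ambient systems $\NMAL$ and $\NL$.

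The multiplicative, additive and first-order cases are the standard initial-coherence constructions. For $A = B \lpar C$ (the case $A = B \ltens C$ is symmetric, since $\cneg{}$ swaps $\lpar$ and $\ltens$ and sequents are sets), I apply $\ltens$ to the hypotheses $\sdash[\emptyset] B, \cneg B$ and $\sdash[\emptyset] C, \cneg C$ to reach $\sdash[\emptyset] B, C, \cneg B \ltens \cneg C$, then $\lpar$ to obtain $\sdash[\emptyset] B \lpar C, \cneg B \ltens \cneg C$. For $A = B \lplus C$ (and symmetrically $A = B \lwith C$), I first push the two hypotheses through the $\lplus$ rule with the appropriate injection into $\sdash[\emptyset] \cneg B, B \lplus C$ and $\sdash[\emptyset] \cneg C, B \lplus C$, and then join them with $\lwith$ into $\sdash[\emptyset] B \lplus C, \cneg B \lwith \cneg C$. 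For $A = \lFa x B$ (and symmetrically $A = \lEx x B$), I introduce the existential first with the trivial witness $y=x$, turning $\sdash[\emptyset] B, \cneg B$ into $\sdash[\emptyset] B, \lEx x{\cneg B}$, and only then apply $\forall$ to $B$: the side condition $x \notin \freeof{\lEx x{\cneg B}}$ now holds because $x$ has become bound in the context. The order of these last two steps is essential.

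The two genuinely new cases, which I expect to be the crux, are the non-commutative connective and the nominal quantifiers. For $A = B \lprec C$ the De Morgan law gives $\cneg A = \cneg B \lprec \cneg C$, so the one-premise rule $\precur$ does not apply; instead I use the binary $\lprec$ rule, which introduces \emph{both} prec-formulas simultaneously. Setting its left and right parts to $B, \cneg B$ and $C, \cneg C$ respectively, a single application to the hypotheses $\sdash[\emptyset] B, \cneg B$ and $\sdash[\emptyset] C, \cneg C$ yields $\sdash[\emptyset] B \lprec C, \cneg B \lprec \cneg C$. For $A = \lNu x B$, with $\cneg A = \lYa x{\cneg B}$, I build the derivation as $\nuloadr$ below $\nupopr$: from the hypothesis $\sdash[\emptyset] B, \cneg B$, the rule $\nupopr$ with witness $y=x$ produces $\sdash[\isnu x] B, \lYa x{\cneg B}$, loading $\isnu x$ into the store, and then $\nuloadr$ consumes exactly that $\isnu x$ to conclude $\sdash[\emptyset] \lNu x B, \lYa x{\cneg B}$, the freshness condition holding since $x$ is bound in $\lYa x{\cneg B}$. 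The dual case $A = \lYa x B$ needs no further rules: relabelling $B$ as $\cneg B$ in the schema above derives $\sdash[\emptyset] \lNu x{\cneg B}, \lYa x B$, which is the required sequent because sequents are sets. The delicate points to verify are precisely the store bookkeeping — the loaded nominal variable must be the same one that is popped, which is guaranteed by the cleanness assumption forcing $\lNu x$ and $\lYa x$ to bind a common $x$ — and that every $\dagger$ condition is discharged by the fact that the relevant variable is already bound in the rest of the judgement.
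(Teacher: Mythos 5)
Your proof is correct and follows essentially the same route as the paper's: a structural induction on $A$, with the $\lunit$ case handled by $\mixr$ over two $\lunit$-rules, the $\lprec$ case by the binary $\lprec$ rule, and the nominal case by the $\nuloadr$/$\nupopr$ pair (the paper phrases this last case via the derived rule $\nqsrule$ and then appeals to \Cref{prop:nuyarule}, which is exactly the combination you inline). Your explicit remarks about the necessity of $\mixr$ for the unit case and the order of the $\exists$/$\forall$ steps are accurate refinements of details the paper leaves implicit.
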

\begin{proof}
	We that the judgement $\sdash \cneg A, A$ is derivable in $\MALL^1\cup\set{\lunit,\lprec,\nqsrule}$ for any formula $A$ by induction on the structure of $A$:
	\begin{itemize}
		\item if $A=\lunit$, then such a derivation is made of a $\mixr$ with premises of the form $\sdash \lunit$, each conclusion of a $\lunit$-rule;

		\item if $A= B\lprec C$, then $\cneg A =\cneg B\lprec\cneg C$.
		We apply a rule $\lprec$, and we conclude by inductive hypothesis;

		\item if $A=\lQu x B$ with $\nabla\in\set{\lnewsymb,\lyasymb}$, then we apply (bottom-up) a rule $\nqsrule$-rule and we conclude by inductive hypothesis;

		\item otherwise, we proceed as standard in $\MALL_1$.

	\end{itemize}
	The statement follow from \Cref{prop:nuyarule}.
\end{proof}

We have the following properties for our connectives, quantifiers and unit. Note that the list in \Cref{eq:feq} is not complete, since additional implications and equivalences immediately follow by duality.
\begin{proposition}\label{prop:feq}
	The following logical equivalences and implications are derivable in $\NL$.
	\begin{equation}\label{eq:feq}\adjustbox{max width=.94\textwidth}{$\begin{array}{c}
		\begin{array}{c|c}
			\mbox{Unit Laws}
		&
			\mbox{Monoidal Laws}
		\\
			\begin{array}{r@{\feq}c@{\feq}l}
				(A\lpar \lunit)  		&  (A\ltens \lunit)  	& A
				\\
				(A\lprec \lunit)  		&  (\lunit \lprec A) 	& A
				\\
				(\lunit\lwith \lunit)	& (\lunit \lplus \lunit)& \lunit
			\end{array}
		&
			\begin{array}{c}
				A\circleddot B \feq B\circleddot A
				\\
				(A\circleddot B) \circleddot C \feq A \circleddot (B \circleddot C)
				\\
				\mbox{with } \circleddot \in \set{\lpar,\ltens,\lplus,\lwith}
			\end{array}
		\\
			\lFa x\lunit\feq \lNu x\lunit \feq \lYa x\lunit\feq\lEx x\lunit	\feq  \lunit
		\\\hline
			\mbox{Scope extrusion}
		&
			\mbox{Quantifier Swap}
		\\
			\begin{array}{c}
				\lNu x{(A\lpar B)}		\feq (\lNu x A) \lpar B
			\\
				\lNu x B \feq B
			\\
				\mbox{if $x \notin \freeof B$}
			\end{array}
		&
			\begin{array}{r@{\feq}l}
				\lQu x {\lQu y A} 	& \lQu y {\lQu x A}
				\\
				\multicolumn{2}{c}{\mbox{with }\lqusymb\in\set{\exists,\lyasymb, \lnewsymb,\forall}}
			\end{array}
		\\\hline
			\mbox{Multiplicative refinement}
		&
			\mbox{Quantifier refinement}
		\\
			\begin{array}{c@{\limp}c}
				(A\ltens B) & (A\lprec B)
			\\
				(A\lprec B)	& (A\lpar B)
			\end{array}
		&
			\begin{array}{c@{\qquad}c}
				\lFa x A \limp \lNu x A
				&
				\lNu x A \limp \lEx x A
				\\
				\lFa x A \limp \lYa x A
				&
				\lYa x A \limp \lEx x A
			\end{array}
		\\\hline
			\mbox{Nominal-choice}
		&
			\mbox{Distributivity of choice}
		\\
			\begin{array}{c@{\;}c@{\;}c}
				(\lNu x A \lwith \lNu x B)	& \feq&
				(\lNu x (A \lwith B))
				\\
				(\lNu x A \oplus \lNu x B) 	&\feq&  (\lNu x (A \oplus B))
			\end{array}
		&
			\begin{array}{c}
				((A\lpar B) \lwith (A\lpar C)) \limp (A\lpar (B\lwith C))
				\\
				(A\lpar (B\lwith C)) \limp ((A\lpar B) \lwith (A\lpar C))
				\\
				(A \lpar (B\lplus C) )\limp ((A \lplus B) \lpar (A\lplus C))
			\end{array}
		\end{array}
	\end{array}$}
	\end{equation}
\end{proposition}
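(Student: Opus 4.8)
The plan is to establish each line of \Cref{eq:feq} by exhibiting explicit cut-free derivations in $\NL$. Every equivalence $X\feq Y$ unfolds into the two implications $X\limp Y$ and $Y\limp X$, and proving $X\limp Y$ means deriving the sequent $\cneg X\lpar Y$; since $\feq$ is transitive (transitivity of $\limp$ follows from admissibility of $\cutr$), the three-term chains such as $(A\lpar\lunit)\feq(A\ltens\lunit)\feq A$ reduce to relating each outer member to the common middle one. The uniform recipe for a single implication is to read the formula top-down: apply $\lpar$ to expose the principal connectives of $\cneg X$ and of $Y$, apply the corresponding rule to each, and close every leaf of shape $\cneg C,C$ by the general axiom, whose derivability is \Cref{prop:AX}. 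So the bulk of the proposition is a bookkeeping exercise in pushing this recipe through each connective.

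For the unit, monoidal, multiplicative-refinement and distributivity laws this is routine $\MALL$-with-$\mixr$ reasoning. Commutativity and associativity of $\lpar,\ltens,\lplus,\lwith$ exploit that sequents are sets of occurrences, so no exchange rule is needed: e.g. $A\lpar B\limp B\lpar A$ reduces, after the $\lpar$'s, to splitting $(\cneg A\ltens\cneg B),B,A$ by $\ltens$ into the axioms $\cneg A,A$ and $\cneg B,B$. The only points to check are that $\ltens$ and $\mixr$ split the context so each leaf is a genuine axiom, while $\lwith$ duplicates the context so the $\lplus$-choices are resolved independently in its two branches. The two refinements $A\ltens B\limp A\lprec B$ and $A\lprec B\limp A\lpar B$ are both closed by the single-premise rule $\precur$ after the outer $\lpar$'s, using that $\lprec$ is self-dual ($\cneg{(A\lprec B)}=\cneg A\lprec\cneg B$). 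The collapses $\lFa x{\lunit}\feq\lNu x{\lunit}\feq\lYa x{\lunit}\feq\lEx x{\lunit}\feq\lunit$ are instances of the vacuous-quantifier laws, since $x\notin\freeof\lunit$.

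The genuinely new content concerns the nominal quantifiers, and the arguments are driven by the store. The quantifier refinements ($\lFa x{A}\limp\lNu x{A}$, $\lNu x{A}\limp\lEx x{A}$, and their $\lYa$ analogues) and the collapse $\lNu x{B}\feq B$ for $x\notin\freeof B$ use the erasing rules $\nuur,\yaur$ on the positive side together with $\exists$/$\forall$ on the dual side, instantiating the witness with the bound variable itself so that each leaf is of the form $\cneg A,A$. Equivariance (the swap for $\lnewsymb$ and $\lyasymb$) and scope extrusion are handled by the load/pop pairing: for $\lNu x{(A\lpar B)}\feq(\lNu x{A})\lpar B$ one loads the $\lNu$ via $\nuloadr$ (the side condition $\dagger$ holds because $x$ is bound inside $A$ and $x\notin\freeof B$), placing $\isnu x$ in the store, and later discharges it against the dual $\lYa$ by $\nupopr$ with substitution $\fsubst xx$, reducing to an axiom; the nested-quantifier swaps load both binders before popping them against the two duals, which is legitimate because the derivation thread is linear and the store bookkeeping is unambiguous.

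The main obstacle, and the reason the load/pop rules are needed rather than the localized $\nqsrule$, is the nominal-choice law $(\lNu x{A}\lwith\lNu x{B})\feq\lNu x{(A\lwith B)}$ (and its $\oplus$ dual), where the matching $\lNu$ and $\lYa$ are separated by the branching rule for $\lwith$. The key observation is that $\lwith$ \emph{duplicates the store}: after loading the single outer $\lNu$ with $\nuloadr$ to put $\isnu x$ into $\sS$ and decomposing $A\lwith B$, the variable $\isnu x$ is copied into both premises, where it is independently popped by $\nupopr$ against the $\lYa$ surfacing after an $\lplus$-choice, closing each branch with an axiom; the converse direction dually loads inside each $\lwith$-branch (via $\yaloadr$) and pops against the shared $\lYa$ (via $\yapopr$). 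This crossing of one load/pop pair over an additive branching is exactly what $\nqsrule$ forbids (\Cref{rem:nomlink}), so the care required is to verify that every load and its matching pop respect both $\dagger$ and the store discipline, store duplications under $\lwith$ included. The remaining cases follow by the De Morgan duality noted just before the statement.
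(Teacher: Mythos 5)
Your proposal is correct and follows essentially the same route as the paper: explicit cut-free derivations for each law, with the $\MALL{+}\mixr$ cases routine, the $\lprec$-refinements closed by $\precur$ via self-duality, the nominal refinements via $\nuur/\yaur$ against $\exists$, scope extrusion and equivariance via $\nuloadr/\nupopr$ pairing, and — the key point you correctly isolate — the nominal-choice laws relying on the fact that $\lwith$ duplicates the store so a single loaded $\isnu x$ can be popped independently in both additive branches. The only cosmetic difference is your appeal to transitivity of $\limp$ (hence to cut admissibility, proved later in the same section but independently of this proposition) to organize the three-term chains, where the paper simply exhibits the pairwise derivations directly.
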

\begin{proof}
	Unit laws follows by the existence of the following derivations.
	$$
	\adjustbox{max width=\textwidth}{$\begin{array}{c}
		\vlderivation{
			\vlin{\lpar}{}{
				\vdash (\cneg A \ltens \lunit )\lpar A
			}{
				\vliin{\ltens}{}{
					\vdash (\cneg A \ltens \lunit ), A
				}{
					\vlin{\AXrule}{}{\vdash \cneg A, A}{\vlhy{}}
				}{
					\vlin{\lunit}{}{\vdash \lunit}{\vlhy{}}
				}
			}
		}
	\qquad
		\vlderivation{
			\vliq{\lpar \times 2}{}{
				\vdash \cneg A \lpar (A \lpar \lunit)
			}{
				\vliin{\mixr}{}{
					\vdash \cneg A, A, \lunit
				}{
					\vlin{\AXrule}{}{\vdash \cneg A, A}{\vlhy{}}
				}{
					\vlin{\lunit}{}{\vdash \lunit}{\vlhy{}}
				}
			}
		}
	\qquad
		\vlderivation{
			\vlin{\lpar}{}{
				\vdash (\cneg A \lprec \lunit) \lpar A
			}{
				\vliin{\precur}{}{
					\vdash \cneg A \lprec \lunit, A
				}{
					\vlin{\AXrule}{}{\vdash \cneg A, A}{\vlhy{}}
				}{
					\vlin{\lunit}{}{\vdash \lunit}{\vlhy{}}
				}
			}
		}
	\\
		\vlderivation{
			\vlin{\lpar}{}{
				\vdash (\lunit \lplus \lunit) \lpar \lunit
			}{
				\vlin{\lplus}{}{
					\vdash \lunit \lplus \lunit , \lunit
				}{
					\vlin{\AXrule}{}{\vdash \lunit, \lunit}{\vlhy{}}
				}
			}
		}
	\qquad
		\vlderivation{
			\vlin{\lpar}{}{
				\vdash (\lunit \lwith \lunit) \lpar \lunit
			}{
				\vliin{\lwith}{}{
					\vdash \lunit \with \lunit , \lunit
				}{
					\vlin{\AXrule}{}{\vdash \lunit, \lunit}{\vlhy{}}
				}{
					\vlin{\AXrule}{}{\vdash \lunit, \lunit}{\vlhy{}}
				}
			}
		}
	\qquad
		\vlderivation{
			\vliin{\ltens}{}{\vdash \lQu x \lunit \feq \lunit}{
				\vlin{\lpar}{}{\vdash \lunit\lpar  \lQu x\lunit}{
					\vlin{\Qrule}{}{\vdash \lunit,\lQu x\lunit}{
						\vlin{\AXrule}{}{\vdash \lunit, \lunit}{\vlhy{}}
					}
				}
			}{
				\vlin{\lpar}{}{\vdash \lnQu x\lunit \lpar \lunit}{
					\vlin{\dQrule}{}{\vdash \lnQu x\lunit, \lunit}{
						\vlin{\AXrule}{}{\vdash \lunit, \lunit}{\vlhy{}}
					}
				}
			}
		}
	\end{array}$}
	$$

	Monodail laws are proven as standard in $\MALL$.
	Scope extrusion and nominal quantifiers swaps are proven as shown in \Cref{eq:FEQscope} below.
	\begin{equation}\label{eq:FEQscope}
	\hfill
		\vlderivation{
			\vlin{2\times\lpar}{}{
				\vdash \lYa x{(\cneg A \ltens \cneg B)} \lpar ((\lNu x A) \lpar B )
			}{
				\vlin{\nuloadr}{}{
					\vdash \lYa x{(\cneg A \ltens \cneg B)} , \lNu x A , B
				}{
					\vlin{\nupopr}{}{
						\sdash[\isnu x] \lYa x{(\cneg A \ltens \cneg B)} ,  A , B
					}{
						\vliin{\ltens}{}{
							\vdash (\cneg A \ltens \cneg B),A,B
						}{
							\vlin{\AXrule}{}{\vdash \cneg A, A}{\vlhy{}}
						}{
							\vlin{\AXrule}{}{\vdash \cneg B, B}{\vlhy{}}
						}
					}
				}
			}
		}
		\qquad
		\vlderivation{
			\vliq{2\times \nuloadr}{}{
				\vdash \lNu x{\lNu y A} \lpar \lYa y{\lYa x \cneg A}
			}{
				\vliq{2\times \nupopr}{}{
					\sdash[\isnu x,\isnu y] \lYa y {\lYa x {\cneg A}}
				}{
					\vlin{\AXrule}{}{\vdash \cneg A, A}{\vlhy{}}
				}
			}
		}
	\hfill
	\end{equation}
	Quantifier swap for universal and existential quantifiers are standard as in in $\MALL_1$.
	For multiplicative refinement we only show the derivation for $(A\ltens B) \limp (A\lprec B)$ on the left of \Cref{eq:derFE}, since the derivation for $(A\lprec B)\limp (A\lpar B)$ is similar.
	Nominal refinements are proven as shown in the right of \Cref{eq:derFE} below.
	\begin{equation}\label{eq:derFE}
	\hfill
		\vlderivation{
			\vlin{\lpar}{}{\vdash (\cneg A \lpar \cneg B) \lpar (A \lprec B)
			}{
				\vlin{\lpar}{}{
					\vdash \cneg A \lpar \cneg B , A \lprec B
					}{
						\vliin{\precur}{}{
							\vdash \cneg A, \cneg B, A \lprec B
						}{
							\vlin{\AXrule}{}{\vdash A, \cneg  A}{\vlhy{}}
						}{
							\vlin{\AXrule}{}{\vdash B , \cneg B}{\vlhy{}}
						}
					}
			}
		}
	\hfill
		\vlderivation{
			\vlin{\lpar}{}{
					\vdash \lEx x{\cneg A} \lpar \lNa xA
				}{
					\vlin{\nqur}{}{
						\vdash \lEx x{\cneg A} , \lNa xA
					}{
						\vlin{\exists}{}{
							\vdash \lEx x{\cneg A} , A
							}{
								\vlin{\AXrule}{}{\vdash \cneg A , A}{\vlhy{}}}
					}
				}
		}
	\hfill
	\end{equation}
	Finally, distributivity of the choice are standard in $\MALL$ (they are proven by applying (bottom-up) $\lpar$ and $\lwith$ rules first, followed by $\lplus$ and $\ltens$ rules.), and nominal-choice laws are proven as follows.
	\begin{equation}\label{eq:derFE1}\adjustbox{max width=.9\textwidth}{
		$\begin{array}{c}
		\vlderivation{
			\vliin{\ltens}{}{
				\vdash (\lNu x A \lwith \lNu x B) \feq (\lNu x (A \lwith B))
			}{
				\vlin{\lpar}{}{
					\vdash \lYap x{\cneg A \lplus \cneg B} \lpar (\lNu x{A} \lwith \lNu x{B})
				}{
					\vliin{\lwith}{}{
						\vdash \lYap x{\cneg A \lplus \cneg B} , \lNu x{A} \lwith \lNu x{B}
						}{
							\vlin{\nuloadr}{}{
								\vdash \lYap x{\cneg A \lplus \cneg B} , \lNu x{A}
							}{
								\vlin{\nupopr}{}{
									\sdash[\isnu x] \lYap x{\cneg A \lplus \cneg B} , {A}
								}{
									\vlin{\lplus}{}{
										\vdash {\cneg A \lplus \cneg B} , {A}
										}{
											\vlin{\AXrule}{}{\vdash \cneg A, A}{\vlhy{}}
										}
								}
							}
						}{
							\vlin{\nuloadr}{}{
								\vdash \lYap x{\cneg A \lplus \cneg B} , \lNu x{B}
							}{
								\vlin{\nupopr}{}{
									\sdash[\isnu x] \lYap x{\cneg A \lplus \cneg B} , {B}
								}{
									\vlin{\lplus}{}{
										\vdash {\cneg A \lplus \cneg B} , {B}
										}{
											\vlin{\AXrule}{}{\vdash \cneg B, B}{\vlhy{}}
										}
								}
							}
						}
				}
			}{
				\vlin{\lpar}{}{
					\vdash (\lYa x{\cneg A} \lplus \lYa x{\cneg B}) \lpar \lNup x{A \lwith B}
				}{
					\vlin{\nuloadr}{}{
							\vdash \lYa x{\cneg A} \lplus \lYa x{\cneg B} ,  \lNup x{A \lwith B}
						}{
							\vliin{\lwith}{}{
									\sdash[\isnu x] \lYa x{\cneg A} \lplus \lYa x{\cneg B},  {A \lwith B}
								}{
									\vlin{\lplus}{}{
											\sdash[\isnu x] \lYa x{\cneg A} \lplus \lYa x{\cneg B}, A
										}{
											\vlin{\nupopr}{}{
													\sdash[\isnu x] \lYa x{\cneg A} , A
												}{
													\vlin{\AXrule}{}{\vdash \cneg A , A}{\vlhy{}}
												}
										}
								}{
									\vlin{\lplus}{}{
											\sdash[\isnu x] \lYa x{\cneg A} \lplus \lYa x{\cneg B}, B
										}{
											\vlin{\nupopr}{}{
													\sdash[\isnu x] \lYa x{\cneg B} , B
												}{
													\vlin{\AXrule}{}{\vdash \cneg B , B}{\vlhy{}}
												}
										}
								}
						}
				}
			}
		}
	\\\\
		\vlderivation{
			\vliin{\ltens}{}{
				\vdash (\lNu x A \oplus \lNu x B) 	\feq  (\lNu x (A \oplus B))
			}{
				\vlin{\lpar}{}{
					\vdash (\lYa x{\cneg A} \lwith \lYa x{\cneg B}) \lpar \lNup x{A \lplus B}
				}{
					\vliin{\lwith}{}{
						\vdash \lYa x{\cneg A} \lwith \lYa x{\cneg B} , \lNup x{A \lplus B}
					}{
						\vlin{\nuloadr}{}{
							\vdash \lYa x{\cneg A} , \lNup x{A \lplus B}
						}{
							\vlin{\nupopr}{}{
								\sdash[\isnu x] \lYa x{\cneg A} , {A \lplus B}
							}{
								\vlin{\lplus}{}{
									\vdash{\cneg A} , {A \lplus B}
								}{
									\vlin{\AXrule}{}{\vdash{\cneg A} , A
								}{\vlhy{}}}
							}
						}
					}{
						\vlin{\nuloadr}{}{
							\vdash \lYa x{\cneg B} , \lNup x{A \lplus B}
						}{
							\vlin{\nupopr}{}{
								\sdash[\isnu  x] \lYa x{\cneg B} , {A \lplus B}
							}{
								\vlin{\lplus}{}{
									\vdash{\cneg B} , {A \lplus B}
								}{
									\vlin{\AXrule}{}{\vdash{\cneg B} , B
								}{\vlhy{}}}
							}
						}
					}
				}
			}{
				\vlin{\lpar}{}{
					\vdash (\lNu x{A} \lwith \lNu x{B}) \lpar \lYap x{\cneg A \lplus \cneg B}
				}{
					\vlin{\yaloadr}{}{
						\vdash \lNu x{A} \lwith \lNu x{B}, \lYap x{\cneg A \lplus \cneg B}
					}{
						\vliin{\lwith}{}{
							\sdash[\isya x] \lNu x{A} \lwith \lNu x{B}, \cneg A \lplus \cneg B
						}{
							\vlin{\yapopr}{}{
								\sdash[\isya x] \lNu xA, \cneg A \lplus \cneg B
							}{
								\vlin{\nupopr}{}{
									\vdash  A, \cneg A \lplus \cneg B
								}{
									\vlin{\AXrule}{}{
										\vdash  A, \cneg A
									}{\vlhy{}}
								}
							}
						}{
							\vlin{\yapopr}{}{
								\sdash[\isya x] \lNu xA, \cneg A \lplus \cneg B
							}{
								\vlin{\nupopr}{}{
									\vdash  B, \cneg A \lplus \cneg B
								}{
									\vlin{\AXrule}{}{
										\vdash  B,\cneg B
									}{\vlhy{}}
								}
							}
						}
					}
				}
			}
		}
	\end{array}$}
	\end{equation}
\end{proof}

\begin{remark}\label{rem:PILm}
	The system $\miniPIL$ satisfies all the equivalences and implications in \Cref{prop:feq}, except for quantifier swap and nominal choices,
	while the system $\PILm$ satisfies all the equivalences and implications in \Cref{prop:feq}, except for the bottom-most nominal choice.
	For this latter, it only satisfies the left-to-right implication $\proves[\PILm] (\lNu x A \oplus \lNu x B) 	\limp (\lNu x (A \oplus B))$ (see the right branch of the bottom-most derivation in \Cref{eq:derFE1}), but not the converse.
\end{remark}

\subsection{Cut-Elimination}\label{sec:cut}

\begin{figure}[t]
	\adjustbox{max width=\textwidth}{$\begin{array}{c}
		\vlderivation{
			\vliin{\cutr}{}{
				\sdash \Gamma , a
			}{
				\vlhy{\sdash \Gamma, a}
			}{
				\vlin{\axrule}{}{\sdash \cneg a, a}{}
			}
		}
		\;\lcutelim\;
		\vlderivation{\sdash \Gamma, a}
	\qquad
		\vlderivation{
			\vliin{\mixr}{}{\sdash \Gamma}{
				\vlhy{\sdash \Gamma}
			}{
				\vliin{\cutr}{}{
					\sdash \quad
				}{
					\vlin{\urule}{}{\sdash \lunit}{\vlhy{}}
				}{
					\vlin{\urule}{}{\sdash \lunit}{\vlhy{}}
				}
			}
		}
		\;\lcutelim\;
		\vlderivation{\vlhy{\sdash\Gamma}}
	\\[20pt]
		\vlderivation{
			\vliin{\cutr}{}{
				\sdash[\sS,\sS_1,\sS_2] \Gamma, \Delta_1, \Delta_2
			}{
				\vlin{\lpar}{}{
					\sdash \Gamma, A \lpar B
				}{
					\vlhy{\sdash\Gamma, A, B}
				}
			}{
				\vliin{\ltens}{}{
					\sdash[\sS_1,\sS_2] \cneg A \ltens \cneg B, \Delta_1 , \Delta_2
				}{
					\vlhy{\sdash[\sS_1] \cneg A, \Delta_1}
				}{
					\vlhy{\sdash[\sS_2] \cneg B, \Delta_2}
				}
			}
		}
	\;\lcutelim\;
		\vlderivation{
			\vliin{\cutr}{}{
				\sdash \Gamma, \Delta_1, \Delta_2
			}{
				\vliin{\cutr}{}{
					\sdash \Gamma, \Delta_1, A
				}{
					\vlhy{\sdash \Gamma, A, B}
				}{
					\vlhy{\sdash[\sS_1] \cneg A, \Delta_1}
				}
			}{
				\vlhy{\sdash[\sS_2] \cneg B, \Delta_2}
			}
		}
	\\[20pt]
		\vlderivation{
			\vliin{\cutr}{}{
				\sdash[\sS_1,\sS_2] \Gamma, \Delta
			}{
				\vlin{\lplus}{}{
					\sdash[\sS_1] \Gamma, A_1 \lplus A_2
				}{
					\vlhy{\sdash[\sS_1] \Gamma, A_i}
				}
			}{
				\vliin{\lwith}{}{
					\sdash[\sS_2] \cneg{A_1} \lwith \cneg{A_2} , \Delta
				}{
					\vlhy{\sdash[\sS_2] \cneg{A_1} , \Delta}
				}{
					\vlhy{\sdash[\sS_2] \cneg{A_2} , \Delta}
				}
			}
		}
		\;\lcutelim\;
		\vlderivation{
			\vliin{\cutr}{}{\sdash[\sS_1,\sS_2] \Gamma, \Delta}{
				\vlhy{\sdash[\sS_1] \Gamma , A_i}
			}{
				\vlhy{\sdash[\sS_2] \cneg{A_i} , \Delta}
			}
		}
	\\[20pt]
		\vlderivation{
			\vliin{\cutr}{}{
				\sdash \Gamma, \Delta
			}{
				\vlin{\forall}{}{
					\sdash[\sS_1] \Gamma,\lFa x{\cneg A}
				}{\vlpr{}{\dD}{{\sdash[\sS_1] \Gamma,\cneg A}}}
			}{
				\vlin{\exists}{}{
					\sdash[\sS_2] \lEx x{A},\Delta
				}{\vlhy{\sdash[\sS_2] A\fsubst{c}{x},\Delta}}
			}
		}
		\;\lcutelim\;
		\vlderivation{
			\vliin{\cutr}{}{\sdash \Gamma, \Delta}{
				\vlpr{}{\dD \fsubst cx}{{\sdash[\sS_1] \Gamma, \cneg A\fsubst{c}{x}}}
			}{
				\vlhy{\sdash[\sS_2] A\fsubst{c}{x},\Delta}
			}
		}
	\end{array}$}
	\caption{Cut-elimination steps for $\MALL_1$ and $\MLLx$, where $\dD\fsubst cx$ is the derivation obtained by replacing all occurrences of $x$ in $\dD$ with $c$.}
	\label{fig:cut-elimMALL}
\end{figure}

\begin{figure}[t]
	\adjustbox{max width=\textwidth}{$\begin{array}{c}
		\vlderivation{
			\vliin{\cutr}{}{
				\sdash[\sS_1,\sS_2,\sS_3,\sS_4] \Gamma_1, \Gamma_2, C \lprec D , E \lprec F, \Delta_1, \Delta_2
			}{
				\vliin{\lprec}{}{
					\sdash[\sS_1,\sS_2] \Gamma_1, \Gamma_2, C \lprec D , A \lprec B
				}{
					\vlhy{\sdash[\sS_1] \Gamma_1 , C , A}
				}{
					\vlhy{\sdash[\sS_2] \Gamma_2 , D , B}
				}
			}{
				\vliin{\lprec}{}{
					\sdash[\sS_3,\sS_4] \cneg A \lprec \cneg B, E \lprec F , \Delta_1, \Delta_2
				}{
					\vlhy{\sdash[\sS_3] \cneg A, E, \Delta_1}
				}{
					\vlhy{\sdash[\sS_4] \cneg B, F, \Delta_2}
				}
			}
		}
	\;\lcutelim\;
		\vlderivation{
			\vliin{\lprec}{}{
				\sdash[\sS_1,\sS_2,\sS_3,\sS_4] \Gamma_1, \Gamma_2, C \lprec D , E \lprec F, \Delta_1, \Delta_2
			}{
				\vliin{\cutr}{}{
					\sdash[\sS_1,\sS_3] \Gamma_1 ,C, E, \Delta_1
				}{
					\vlhy{\sdash[\sS_1] \Gamma_1 , C, A}
				}{
					\vlhy{\sdash[\sS_3] \cneg A, E, \Delta_1}
				}
			}{
				\vliin{\cutr}{}{
					\sdash[\sS_2,\sS_4] \Gamma_2, D, F, \Delta_2
				}{
					\vlhy{\sdash[\sS_2] \Gamma_2, D, B}
				}{
					\vlhy{\sdash[\sS_4] \cneg B, F, \Delta_2}
				}
			}
		}
	\\[20pt]
		\vlderivation{
			\vliin{\cutr}{}{
				\sdash[\sS_1,\sS_2,\sS_3,\sS_4] \Gamma_1, \Gamma_2, C \lprec D , \Delta_1, \Delta_2
			}{
				\vliin{\lprec}{}{
					\sdash[\sS_1,\sS_2] \Gamma_1, \Gamma_2, C \lprec D , A \lprec B
				}{
					\vlhy{\sdash[\sS_1] \Gamma_1 , C , A}
				}{
					\vlhy{\sdash[\sS_2] \Gamma_2 , D , B}
				}
			}{
				\vliin{\precur}{}{
					\sdash[\sS_3,\sS_4]\cneg A \lprec \cneg B, \Delta_1, \Delta_2
				}{
					\vlhy{\sdash[\sS_3] \cneg A, \Delta_1}
				}{
					\vlhy{\sdash[\sS_4] \cneg B, \Delta_2}
				}
			}
		}
	\;\lcutelim\;
		\vlderivation{
			\vliin{\precur}{}{
				\sdash[\sS_1,\sS_2,\sS_3,\sS_4] \Gamma_1, \Gamma_2, C \lprec D , \Delta_1, \Delta_2
			}{
				\vliin{\cutr}{}{
					\sdash[\sS_1,\sS_3] \Gamma_1 ,C,\Delta_1
				}{
					\vlhy{\sdash[\sS_1] \Gamma_1 , C, A}
				}{
					\vlhy{\sdash[\sS_3]\cneg A, \Delta_1}
				}
			}{
				\vliin{\cutr}{}{
					\sdash[\sS_2,\sS_4] \Gamma_2, D,\Delta_2
				}{
					\vlhy{\sdash[\sS_2] \Gamma_2, D, B}
				}{
					\vlhy{\sdash[\sS_4] \cneg B, \Delta_2}
				}
			}
		}
	\\[20pt]
		\vlderivation{
			\vliin{\cutr}{}{
				\sdash[\sS_1,\sS_2,\sS_3,\sS_4] \Gamma_1,\Gamma_2, \Delta_1, \Delta_2
			}{
				\vliin{\precur}{}{
					\sdash[\sS_1,\sS_2]\Gamma_1,\Gamma_2, A \lprec B
				}{
					\vlhy{\sdash[\sS_1]\Gamma_1, A}
				}{
					\vlhy{\sdash[\sS_2] \Gamma_2, B}
				}
			}{
				\vliin{\precur}{}{
					\sdash[\sS_3,\sS_4] \cneg A \lprec \cneg B, \Delta_1 , \Delta_2
				}{
					\vlhy{\sdash[\sS_3] \cneg A, \Delta_1}
				}{
					\vlhy{\sdash[\sS_4] \cneg B, \Delta_2}
				}
			}
		}
	\;\lcutelim\;
		\vlderivation{
			\vliin{\mixr}{}{
				\sdash[\sS_1,\sS_2,\sS_3,\sS_4] \Gamma_1, \Gamma_2, \Delta_1, \Delta_2
			}{
				\vliin{\cutr}{}{
					\sdash[\sS_1,\sS_3] \Gamma_1, \Delta_1
				}{
					\vlhy{\sdash[\sS_1] \Gamma_1,A}
				}{
					\vlhy{\sdash[\sS_3] \cneg A, \Delta_1}
				}
			}{
				\vliin{\cutr}{}{
					\sdash[\sS_2,\sS_4] \Gamma_2, \Delta_1
				}{
					\vlhy{\sdash[\sS_2] \Gamma_2, B}
				}{
					\vlhy{\sdash[\sS_4] \cneg B,\Delta_2}
				}
			}
		}
	\end{array}$}
	\caption{Cut-elimination steps for the connective $\lprec$ and its rules.}
	\label{fig:cut-elimPREC}
\end{figure}

We prove the admissibility of the rule $\cutr$, we provide a cut-elimination procedure adapting the one for $\MALL_1$.
In absence of the nominal quantifier, the proof taking into account the connective $\lprec$ is straightforward.
In the presence of the nominal quantifier, the proof is more intricate because of the implicit links between $\nqloadr$-rules and $\nqpopr$-rules in a derivation we discuss in \Cref{rem:nomlink}.
For example, consider the derivation with $\cutr$ of the formula $\lNu x a\limp \lNa x a$ in the left of \Cref{eq:exNomCutElim}, where we marked the flows of the nominal variables.
\begin{equation}\label{eq:exNomCutElim}\hfill
	\adjustbox{max width=.94\textwidth}{$
		\vlderivation{
			\vlin{\lpar}{}{\sdash\lNu {\vx0} a\limp \lNa {\vx{11}} a}{
				\vliin{\cutr}{}{
					\vdash \lYa {\vx1}\cneg a, \lNu {\vx{10}}a
				}{
					\vlin{\yaloadr}{}{
						\vdash \lYa {\vx2}\cneg a , \lNu {\vx5}a
					}{
						\vlin{\yapopr}{}{
							\sdash[\isya {\vx3}] \cneg a , \lNu {\vx4}a
						}{
							\vlin{\AXrule}{}{
								\vdash a, \cneg a
							}{\vlhy{}}
						}
					}
				}{
					\vlin{\yaloadr}{}{
						\vdash \lYa {\vx6}\cneg a , \lNu {\vx9}a
					}{
						\vlin{\yapopr}{}{
							\sdash[\isya {\vx7}] \cneg a , \lNu {\vx8}a
						}{
							\vlin{\AXrule}{}{
								\vdash a, \cneg a
							}{\vlhy{}}
						}
					}
				}
			}
		}
		\pzflows{x3.center/x4.center/8/magenta}
		\pzflows{x7.center/x8.center/8/magenta}
		\pzflows{x5.center/x6.center/-6/magenta}
		\flowedges{x0.center/x1.center,x1.center/x2.center,x2.center/x3.center,x4.center/x5.center,x6.center/x7.center,x8.center/x9.center,x9.center/x10.center,x10.center/x11.center}
		\qquad\lcutelim^*\qquad
		\vlderivation{
			\vlin{\lpar}{}{\sdash\lNu {\vx0} a\limp \lNa {\vx{11}} a}{
				\vliq{\yaloadr}{}{
					\vdash \lYa {\vx1}{\cneg a} , \lNu {\vx2}{a}
				}{
					\vlin{\yapopr}{}{
						\sdash[\isya{\vx3}] \cneg a , \lNu {\vx4}{a}
					}{
						\vlin{\AXrule}{}{
							\sdash a, \cneg a
						}{\vlhy{}}
					}
				}
			}
		}
		\pzflows{x3.center/x4.center/12/magenta}
		\flowedges{x0.center/x1.center,x1.center/x3.center,x11.center/x2.center,x2.center/x4.center}
	$}
	\hfill
\end{equation}
In order to perform cut-elimination, we need to be able to keeping track of the variables bound by dual nominal quantifiers, which are supposed to be linked by the $\cutr$-rule, even when the nominal quantifiers are removed.
For this purpose, we introduce the following auxiliary \defn{store-cut} rule we use during the rewriting process of cut-elimination.
\begin{equation}\label{eq:nqscutr}
	\hfill
	\vlinf{\nqscutr}{}{
		\sdash \Gamma
	}{
		\sdash[\sS,\isnu x,\isya x] \Gamma
	}
	\hfill
\end{equation}

\begin{restatable}[Cut elimination]{theorem}{cutelim}\label{thm:cutelim}
	Let $\Gamma$ a non-empty sequent.
	If $\proves[\NL\wcut]\Gamma$, then $\proves[\NL]\Gamma$.
\end{restatable}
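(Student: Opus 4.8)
**The plan is to prove cut-elimination by the standard strategy of eliminating cuts one at a time, reducing a topmost $\cutr$ into cuts on smaller formulas, while handling the nominal quantifiers via the auxiliary $\nqscutr$ rule.**

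First I would set up a measure for the induction. As usual for $\MALL$-style cut-elimination, I would work with a lexicographic measure combining the \emph{cut-rank} (the size of the cut-formula $A$) and the \emph{cut-height} (the sum of the heights of the two subderivations above a topmost cut). I would pick a topmost instance of $\cutr$ in the derivation — i.e.\ one with no $\cutr$ above it — so that both premises are $\NL$-derivations. The goal is to rewrite this cut so that the measure strictly decreases, and then argue termination and that the conclusion is unchanged.

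The bulk of the argument splits into the usual three families of reduction steps. \emph{Key cases (principal/principal):} when the cut-formula is principal on both sides, I would apply the reductions already displayed in \Cref{fig:cut-elimMALL} and \Cref{fig:cut-elimPREC}. These cover the atomic/unit cases, the $\lpar$-vs-$\ltens$ case, the $\lplus$-vs-$\lwith$ case (which lowers the rank and discards one $\lwith$-premise), the $\forall$-vs-$\exists$ case (using the substitution $\dD\fsubst cx$), and the three cases for $\lprec$ and $\precur$ against their duals — each replacing one cut on $A\circledcirc B$ by cuts on the immediate subformulas $A$ and $B$, strictly lowering the rank. \emph{Commutative cases:} when the cut-formula is not principal in one premise, I would permute the $\cutr$ up past the last rule of that premise, which strictly lowers the cut-height while preserving the rank; here the store-splitting of the multiplicative rules ($\ltens$, $\mixr$, $\lprec$, $\precur$) and the store-duplicating $\lwith$ must be checked to recombine correctly, but these are routine bookkeeping on the stores $\sS_1,\sS_2$.

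\textbf{The genuinely hard case — and the main obstacle — is the nominal-quantifier key case,} namely a $\cutr$ on $A=\lNu x B$ against its dual $\cneg A=\lYa x \cneg B$, as illustrated in \Cref{eq:exNomCutElim}. The difficulty, flagged in \Cref{rem:nomlink}, is that the two dual quantifiers are typically \emph{not} principal at the cut: they have each been decomposed by a $\loadr$/$\popr$ pair somewhere above, and the variable they bind flows through the store until it is consumed by the dual $\popr$-rule. My plan here is to introduce the auxiliary store-cut $\nqscutr$ from \Cref{eq:nqscutr} precisely to track this flow: I would first permute the $\cutr$ upward until it meets the $\naloadr$ (respectively $\yaloadr$) rules that decompose the two dual quantifiers, then replace the $\cutr$ on $\lNu x B$ by a $\nqscutr$ carrying both nominal variables $\isnu x,\isya x$ together with an ordinary cut on the smaller formula $B$. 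The nominal variables then travel up the derivation inside the store of $\nqscutr$; when each meets its matching $\popr$-rule, the corresponding substitution and annihilation step removes it. I would therefore need an auxiliary lemma asserting that $\nqscutr$ itself is admissible (permutes up through every rule and is discharged at the $\popr$ steps), and I would extend the termination measure so that these store-cut reductions are accounted for — e.g.\ by adding the number of $\nqscutr$ instances, or the total length of the nominal-variable flows, as an extra component of the lexicographic measure. The clean-sequent assumption and the fact that in the additive-free pairing each $\naloadr$ matches a unique $\napopr$ (\Cref{rem:nomlink}) are what make this flow well-defined; I would need to check carefully that the presence of $\lwith$ (which duplicates the store) does not break the pairing, which is the most delicate point of the whole proof. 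Once all ordinary cuts and all store-cuts are eliminated, the resulting derivation is in $\NL$ with the same conclusion $\Gamma$, completing the induction.
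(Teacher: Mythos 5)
Your proposal follows essentially the same route as the paper's proof: reduce a topmost cut via the principal steps of Figures \ref{fig:cut-elimMALL}, \ref{fig:cut-elimPREC} and \ref{fig:cut-elimNQ}, permute non-principal cuts upward, and handle the nominal key case by converting the cut on $\lNu xB$ into the auxiliary $\nqscutr$ above a cut on $B$, discharging the store-cut when it meets the matching $\napopr$ rules. The only cosmetic difference is bookkeeping: where you propose a lexicographic rank/height measure with an extra component for store-cuts, the paper assigns each $\cutr$ and $\nqscutr$ a pair $\tuple{\text{distance},\text{complexity}}$ and takes the multiset of these, which also absorbs the duplication caused by commuting past $\lwith$ — the point you rightly flag as delicate.
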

\begin{proof}
	We define the \emph{weight} of a $\cutr$-rule $\rrule$ in a derivation $\dD$ as a pair $\tuple{\ldist\rrule\dD,\lcomp\rrule\dD}$ where $\ldist\rrule\dD$ is the maximal distance of $\rrule$ from a leaf above it in the derivation tree,
	and $\lcomp\rrule\dD$ is the complexity of the active formula(s) of $\rrule$.
	The \defn{weight} of a $\nqscutr$-rule $\rrule$ is defined similarly as $\tuple{\ldist\rrule\dD,0}$.
	The \emph{weight} of a derivation is the multiset of the weights of its $\cutr$-rules and $\nqscutr$-rules.

	To prove cut-elimination it suffices to apply the \emph{cut-elimination steps} in \Cref{fig:cut-elimMALL,fig:cut-elimPREC,fig:cut-elimNQ} to a top-most $\cutr$-rule or $\nqscutr$-rule in the derivation tree.
	The fact that we consider the procedure to operate on a derivation whose conclusion and premises judgements have empty stores and non-empty sequents ensures that the case analysis we consider covers all the possible cases.
	In particular, the case $\cutstep\urule\urule$ in \Cref{fig:cut-elimMALL} for the $\lunit$ (because the sequent in the conclusion cannot be empty), and the bottom-most case in \Cref{fig:cut-elimNQ} (because the store in the conclusion and in the premises is empty).

	In order to be able to apply this strategy, as standard in the literature, we consider the \emph{commutative cut-elimination steps}, that is, rule permutations as the ones in \Cref{fig:permutations1} involving a $\cutr$-rule or a $\nqscutr$-rule, allowing us to permute an instance of such rule above another rules.
	The termination of cut-elimination follows by the fact that each cut-elimination step applied to a top-most $\cutr$-rule $\rrule$ decreases $\lcomp\rrule\dD$, while each commutative cut-elimination step applied to the top-most $\cutr$-rule, or to a $\nqscutr$-rule reduces $\ldist\rrule\dD$.
	Note that a commutative step moving a $\cutr$-rule above a $\lwith$-rule duplicate the $\cutr$-rule.
	This is why we have to define the weight as a multiset: even if the complexity does not change, the maximal distance from these two new $\cutr$-rules from the leaves is strictly smaller than the one of the original one.
\end{proof}
\begin{remark}
	In systems containing a self-dual unit $\lunit$, which is the same unit for conjunction and disjunction, such as multiplicative linear logic with mix \cite{gir:ll}, $\pomset$ logic \cite{retore:phd} and $\BV$ \cite{gug:SIS}, it is possible to derive the empty sequent.
	This depends on the fact the empty sequent is not interpreted as false (as in classical logic), but rather as the unit $\lunit$, which is provable, and that the non-admissibility of the weakening rule (as in relevant logics \cite{relevant,acc:str:relevant}) would not entail the possibility of deriving any sequent.
	Citing Girard (as reported in \cite{bellin:subnets}) ``if one were to accept this rule [$\mixr$], the good taste would require to add the void sequent as an axiom (without weakening this has no dramatic consequence)''.
	This explains the structure of the cut-elimination step $\urule$-vs-$\urule$ in \Cref{fig:cut-elimMALL}.
\end{remark}
\begin{remark}
	If we consider a system where the only rule for nominal quantifier is the rule $\nqsrule$, then all judgements in derivations have empty store.

	Note that the system $\NL\setminus\set{\nupopr,\yaloadr}$ presented in \cite{manara:phd} also satisfy cut-elimination, and it already expressive enought to support the interpreting derivations as execution trees (see \Cref{sec:PaF}).
	Indeed, the only difference is that in such a system the nominal-choice law $(\lNu x A \oplus \lNu x B) \feq  (\lNu x (A \oplus B))$ does not hold, but only the left-to-right implication is provable as explained in \Cref{rem:PILm}.
\end{remark}

\begin{figure}[t!]
	\centering
	\adjustbox{max width=.6\textheight}{$\begin{array}{c}
		\vlderivation{
			\vliin{\cutr}{}{
				\sdash[\sS_1,\sS_2,\isna x] \Gamma, \Delta
			}{
				\vlin{\naloadr}{}{
					\sdash[\sS_1] \Gamma, \lNa xA
				}{\vlhy{{\sdash[\sS_1,\isna x] \Gamma, A}}}
			}{
				\vlin{\napopr}{}{
					\sdash[\sS_2,\isna x] \lnNa x{\cneg A},\Delta
				}{
					\vlhy{\sdash[\sS_2] \cneg A,\Delta}
				}
			}
		}
	\;\lcutelim\;
		\vlderivation{
			\vliin{\cutr}{}{\sdash[\sS_1,\sS_2,\isna x] \Gamma, \Delta}{
				\vlhy{{\sdash[\sS_1,\isna x] \Gamma, \cneg A}}
			}{
				\vlhy{\sdash[\sS_2] \cneg A,\Delta}
			}
		}
	\\\\
	%
	\vlderivation{
			\vliin{\cutr}{}{
				\sdash[\sS_1,\sS_2] \Gamma, \Delta
			}{
				\vlin{\nqloadr}{}{
					\sdash[\sS_1] \Gamma, \lNa xA
				}{\vlpr{}{\dD\ctx[\isna x]}{{\sdash[\sS_1,\isna x] \Gamma, A}}}
			}{
				\vlin{\nnaur}{}{
					\sdash[\sS_2] \lnNa x{\cneg  A},\Delta
				}{\vlhy{\sdash[\sS_2] \cneg A,\Delta}}
			}
		}
	\;\lcutelim\;
		\vlderivation{
			\vliin{\cutr}{}{\sdash[\sS_1,\sS_2] \Gamma, \Delta}{
				\vlpr{}{\dD\fsubstup \emptyset{\isna x}}{
					\sdash[\sS_1] \Gamma, A
				}
			}{
				\vlhy{\sdash[\sS_2] \cneg A,\Delta}
			}
		}
	\\\\
	%
		\vlderivation{
			\vlin{\nqloadr}{}{
				\sdash[\sS_1,\sS_2] \Gamma',\lNa x B
			}{
				\vlde{}{\dD\ctx[\isna x]}{
					\sdash[\sS_1,\sS_2,\isna x] \Gamma',B
				}{
					\vliin{\cutr}{}{
						\sdash[\sS_1,\sS_2,\isna x] \Gamma,\Delta
					}{
						\vlin{\nnapopr}{}{
							\sdash[\sS_1,\isna x] \Gamma, \lnNa xA
						}{\vlhy{\sdash[\sS_1]  \Gamma, A}}
					}{
						\vlin{\naur}{}{
							\sdash[\sS_2] \lNa x\cneg A,\Delta
						}{
							\vlhy{\sdash[\sS_2] \cneg A,\Delta}
						}
					}
				}
			}
		}
	\;\lcutelim\;
		\vlderivation{
			\vlin{\nqur}{}{
				\sdash[\sS_1,\sS_2] \Gamma',\lNa y B
			}{
				\vlde{}{\dD\fsubst \emptyset{\isna x}}{
					\sdash[\sS_1,\sS_2] \Gamma',B
				}{
					\vliin{\cutr}{}{\sdash[\sS_1,\sS_2] \Gamma, \Delta}{
						\vlhy{\sdash[\sS_1] \Gamma,  A}
					}{
						\vlhy{\sdash[\sS_2] \cneg A, \Delta  }
					}
				}
			}
		}
	\\\\
	%
		\vlderivation{
			\vliin{\cutr}{}{
				\sdash[\sS_1,\sS_2] \Gamma,\Delta
			}{\vlin{\nuur}{}{
					\sdash[\sS_1] \Gamma , \lNu xA
				}{\vlhy{\sdash[\sS_1] \Gamma,A}}
			}{\vlin{\yaur}{}{
					\sdash[\sS_2] \lYa x{\cneg A}, \Delta
				}{\vlhy{\sdash[\sS_2] \cneg A , \Delta}}
			}
		}
	\;\lcutelim\;
		\vlderivation{
			\vliin{\cutr}{}{\sdash[\sS_1,\sS_2] \Gamma, \Delta}{
				\vlhy{\sdash[\sS_1] \Gamma, A}
			}{
				\vlhy{\sdash[\sS_2] \Delta, \cneg A}
			}
		}
	\\\\\hline\\
	%
		\vlderivation{
			\vliin{\cutr}{}{
				\sdash[\sS_1,\sS_2] \Gamma, \Delta
			}{
				\vlin{\nuloadr}{}{
					\sdash[\sS_1] \Gamma, \lNu x A
				}{
					\vlhy{{\sdash[\sS_1,\isnu x] \Gamma,A}}
				}
			}{
				\vlin{\yaloadr}{}{
					\sdash[\sS_2] \lYa x{\cneg A},\Delta
				}{
					\vlhy{\sdash[\sS_2,\isya x] \cneg A,\Delta}
				}
			}
		}
	\;\lcutelim\;
		\vlderivation{
			\vlin{\nqscutr}{}{
				\sdash[\sS_1,\sS_2] \Gamma, \Delta
			}{
				\vliin{\cutr}{}{
					\sdash[\sS_1,\sS_2,\isnu x,\isya x] \Gamma, \Delta
				}{
					\vlhy{\sdash[\sS_1,\isnu x] \Gamma, A}
				}{
					\vlhy{\sdash[\sS_2,\isya x] \cneg A,\Delta}
				}
			}
		}
	\\\\
	%
		\vlderivation{
			\vlin{\nqscutr}{}{\sdash[\sS_1,\sS_2] \Gamma, \Delta}{
				\vliin{\cutr}{}{
				\sdash[\sS_1,\sS_2,\isya x ,\isnu x] \Gamma, \Delta
			}{
				\vlin{\nupopr}{}{
					\sdash[\sS_1, \isya x] \Gamma, \lNu xA
				}{
					\vlhy{\sdash[\sS_1] \Gamma, A}
				}
			}{
				\vlin{\yapopr}{}{
					\sdash[\sS_2,\isnu x] \lYa x\cneg A,\Delta
				}{
					\vlhy{\sdash[\sS_2] \cneg A ,\Delta}
				}
			}
			}
		}
	\;\lcutelim\;
	\vlderivation{
		\vliin{\cutr}{}{
			\sdash[\sS_1,\sS_2] \Gamma, \Delta
		}{
			\vlhy{\sdash[\sS_1] \Gamma, A}
		}{
			\vlhy{\sdash[\sS_2] \cneg A,\Delta}
		}
	}
	\end{array}$}
	\caption{
		Cut-elimination steps for the nominal quantifiers where
		$\dD\fsubst\emptyset{\isna x}$ is the derivation obtained by removing all occurrences of $\isna x$ in $\dD$,
		and
		$\dD\fsubstup{\emptyset}{\isna x}$ is the derivation obtained by removing all occurrences of $\isna x$ in $\dD$, and replacing any rule $\nqpopr$ introducing $\isna x$ in the store with a rule $\nqur$.
	}
	\label{fig:cut-elimNQ}
\end{figure}

\begin{corollary}\label{cor:transImp}
	The linear implication ($\limp$) in $\NL$ defines a transitive relation, that is, if $\proves[\NL]A\limp B$ and $\proves[\NL]B\limp C$, then $\proves[\NL]A\limp C$.
\end{corollary}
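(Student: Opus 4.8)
The plan is to construct a derivation of $A \limp C$ that glues the two hypotheses together with a single instance of the $\cutr$-rule, and then to discharge the cut by appealing to cut-elimination (\Cref{thm:cutelim}). In other words, transitivity is the expected corollary of the admissibility of cut, so the real content has already been established in the previous theorem.

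First I would unfold the definition of linear implication. Since $A \limp B$ abbreviates $\cneg A \lpar B$ and negation is involutive ($\cneg{(\cneg A)} = A$), the two hypotheses read $\proves[\NL] \cneg A \lpar B$ and $\proves[\NL] \cneg B \lpar C$, and the goal becomes $\proves[\NL] \cneg A \lpar C$. Next I would expose the cut formula by inverting the outermost $\lpar$, turning the hypotheses into $\proves[\NL] \cneg A, B$ and $\proves[\NL] \cneg B, C$. The invertibility of $\lpar$ (if $\proves[\NL] \Gamma, D \lpar E$ then $\proves[\NL] \Gamma, D, E$) is itself a consequence of the preceding results: from the derivation of $\proves[\NL] \cneg D \ltens \cneg E, D, E$, obtained by an $\ltens$-rule on the two general axioms $\proves[\NL] D, \cneg D$ and $\proves[\NL] E, \cneg E$ (derivable by \Cref{prop:AX}), one cuts against $\Gamma, D \lpar E$ on the dual pair $D \lpar E \,/\, \cneg D \ltens \cneg E$ and removes the resulting cut via \Cref{thm:cutelim}.

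With both inverted judgements in hand (each carrying the empty store), a single application of $\cutr$ on the pair $B \,/\, \cneg B$, with $\sS_1 = \sS_2 = \emptyset$, produces $\proves[\NL\wcut] \cneg A, C$; a final $\lpar$-rule then gives $\proves[\NL\wcut] \cneg A \lpar C$, that is $\proves[\NL\wcut] A \limp C$. Since the sequent $\cneg A \lpar C$ is non-empty and the conclusion has empty store, the hypotheses of \Cref{thm:cutelim} are met, and it returns a cut-free derivation $\proves[\NL] A \limp C$, as required.

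I do not expect a genuine obstacle in this argument, as all the difficulty is absorbed by cut-elimination. The only points demanding care are bookkeeping ones: checking that every judgement involved carries the empty store, so that the $\cutr$-rule splits trivially and \Cref{thm:cutelim} is applicable; and making sure the invertibility of $\lpar$ used to expose the cut formula is justified (via \Cref{prop:AX} and \Cref{thm:cutelim}) rather than silently assumed. It is worth noting that this corollary is precisely the transitivity property that \Cref{rem:nominal} observes must fail for a hypothetical scope-extrusion law of the nominal quantifiers over $\lprec$; here it holds unconditionally because the derivation above never relies on any such interaction.
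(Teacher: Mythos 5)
Your proposal is correct and follows essentially the same route as the paper: invert the outer $\lpar$ in both hypotheses, join the resulting judgements with a single $\cutr$ on $B$, reapply $\lpar$, and discharge the cut via \Cref{thm:cutelim}. The only difference is that you explicitly justify the invertibility of $\lpar$ (by cutting against $\cneg D\ltens\cneg E$ built from general axioms), whereas the paper simply asserts it; this is a harmless and welcome elaboration.
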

\begin{proof}
	If $\proves[\NL]A\limp B$, then there is a derivation $\dD^-_{A\limp B}$ with conclusion $\sdash \cneg A, B$ because the rule $\lpar$ is invertible (that is, its conclusion is derivable iff its premise is so).
	For the same reason, by hypothesis, there is a derivation $\dD^-_{B\limp C}$ in $\NL$ with conclusion $\sdash \cneg B, C$.
	Thus a derivation with conclusion $\sdash A\limp C$ made of (bottom-up) a $\lpar$-rule followed by a $\cutr$-rule whose premises are the conclusion of $\dD^-_{A\limp B}$ and $\dD^-_{B\limp C}$.
	We conclude by applying cut-elimination.
\end{proof}

We conclude this section stating that $\NL$ can be embedded in $\MAVq$ \cite{hor:nom} using a translation $\trbv{\cdot}$ replacing each occurrence of $\lprec$ with a $\lseq$, and each occurrence of $\lyasymb$ with a $\lwensymb$.
Formal definitions and details of the proof are provided in \Cref{sec:embedding}.

\begin{restatable}{theorem}{thmMAVembed}\label{thm:embedding}
	Let $A_1,\ldots, A_n$ be formulas.
	If $\proves[\NL] A_1,\ldots, A_n$, then $\proves[\MAVq] \bigparr_{i=1}^n\trbv{A_i}$.
\end{restatable}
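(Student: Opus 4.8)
The plan is to proceed by induction on the $\NL$-derivation of $A_1,\ldots,A_n$. Since $\NL$ contains no $\cutr$ rule (see \Cref{eq:systems}), every $\NL$-derivation is already cut-free, so no appeal to \Cref{thm:cutelim} is needed and I may analyse the bottom-most rule directly. The one structural mismatch is that $\NL$ is a sequent calculus carrying nominal stores whereas $\MAVq$ is a deep-inference system without stores, so I first fix a translation of \emph{judgements}: for a store $\sS$ let $Q_\sS$ be the quantifier prefix obtained by taking one $\lnewsymb x$ for each $\isnu x\in\sS$ and one $\lwensymb x$ for each $\isya x\in\sS$ (the order is immaterial by equivariance in $\MAVq$, which is exactly the translation $\trbv{\cdot}$ applied to $\lnewsymb$ and $\lyasymb$). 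I then strengthen the statement to: if $\NL$ derives a judgement $\sdash\Gamma$ with store $\sS$, then $\proves[\MAVq] Q_\sS\!\left(\bigparr_{A\in\Gamma}\trbv A\right)$. The theorem is the special case $\sS=\emptyset$, and the strengthening is what lets the induction pass through the internal judgements of the derivation, whose stores are nonempty.

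The routine cases are the multiplicative, additive and first-order rules. The base cases $\axrule$ and $\lunit$ translate to the atomic identity and the unit of $\MAVq$ (atoms are left unchanged by $\trbv{\cdot}$), and $\lpar$ is transparent since $\trbv{A\lpar B}=\trbv A\lpar\trbv B$. For $\ltens$, $\mixr$ and $\precur$ I follow the standard simulation of multiplicative sequent rules inside a calculus of structures: from the two premise-translations supplied by the induction hypothesis, deep inference lets me place them inside an $\ltens$- (resp.\ $\lseq$-) context, after which the switch and seq rules of $\MAVq$ move the side sequents $\Gamma,\Delta$ out of the newly created connective; the $\lprec$ rule is handled in the same way after splitting its two principal formulas, using $\trbv{A\lprec B}=\trbv A\lseq\trbv B$. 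The additive rules $\lplus,\lwith$ and the first-order rules $\forall,\exists$ are simulated directly by the corresponding rules of $\MAVq$.

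The crux is the six nominal rules, which must reconcile the store bookkeeping of $\NL$ with the quantifier manipulation of $\MAVq$. The rules $\nuur$ and $\yaur$ correspond to introducing a (possibly vacuous) fresh $\lnewsymb$- or $\lwensymb$-binder, which $\MAVq$ admits. A load rule such as $\nuloadr$, whose premise records $\isnu x$ in the store, is simulated by taking the $\lnewsymb x$ that $Q_{\sS,\isnu x}$ supplies at the top, using equivariance to bring it next to the principal formula, and then applying scope extrusion to restrict its scope to $\trbv A$, producing $Q_\sS(\cdots\lpar\lnewsymb x\trbv A)=Q_\sS(\cdots\lpar\trbv{\lNu xA})$; the freshness side condition $\dagger$ is exactly what extrusion requires. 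Dually, a pop rule such as $\nupopr$, which in $\NL$ consumes a stored $\isnu y$ to instantiate a $\lyasymb x$ (translated to $\lwensymb x$) with the fresh name $y$, is simulated by the interaction rule between $\lnewsymb$ and $\lwensymb$ in $\MAVq$, again with equivariance to position the binders; the $\yaloadr,\yapopr$ cases are symmetric. It is decisive here that, as discussed in \Cref{rem:nominal}, $\MAVq$ permits a single $\lnewsymb$ to interact with several $\lwensymb$: the one-to-one pairing enforced by the $\NL$ store is therefore a special case of what $\MAVq$ allows, so the simulation never gets stuck, and the failure of the converse embedding is consistent with this. I expect the main obstacle to be precisely this step — checking that every pending store variable introduced by a load is matched at its pop by an available new/wen interaction, and that all freshness and scope conditions line up through the deep-inference rewriting — while everything else reduces to the by-now standard translation of a sequent calculus into the calculus of structures.
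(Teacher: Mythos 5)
Your overall strategy --- induction on the $\NL$-derivation, simulating each rule by deep-inference steps of $\MAVq$, with the nominal rules isolated as the only delicate cases --- is the same as the paper's, whose proof is precisely the inductive translation of \Cref{fig:embedding}; your treatment of the multiplicative, additive, first-order and quantifier-unit cases is fine. The gap is in the invariant you chose for the store. Translating $\sS$ as a quantifier prefix placed over the \emph{entire} translated sequent forces, for instance, the intermediate claim that the $\NL$-derivable judgement $\sdash[\isnu x] \lsend xa, \lYa y{\lrecv ya}$ (the middle line of the left derivation in \Cref{eq:EXweq1}) yields $\proves[\MAVq]\lNup x{\lsend xa\lpar\lWe y{\lrecv ya}}$, where the $\lwensymb$ created by the pop sits \emph{inside} the scope of the still-pending $\lnewsymb x$. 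But the rule you invoke to justify the pop case, $\lnewsymb\mhyphen\lwensymb$, rewrites $\lNup x{A\lpar B}$ into $(\lNu xA)\lpar(\lWe xB)$: it necessarily \emph{discharges} the pending $\lnewsymb$ at the very moment it creates the $\lwensymb$, and leaves the $\lwensymb$-part \emph{outside} the scope of the residual $\lnewsymb$. No rule of $\MAVq$ produces a $\lwensymb$ nested strictly inside the $\lnewsymb$ binding its witness (this is essentially the non-diagonality requirement), so the strengthened induction hypothesis is not derivable at the judgements lying strictly between a pop and its matching load, and the induction breaks at $\nupopr$/$\yapopr$.

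The paper sidesteps this by never materialising the $\lwensymb$ while its witness is still in the store: \Cref{fig:embedding} translates a load together with its matching pop as a single block (the $\nqsrule$ case), introducing $\lnewsymb x$ at the top via $\lNu x\lunit=\lunit$, running the entire sub-derivation between pop and load inside that context with the popped formula kept in its instantiated form $\trbv B\fsubst yx$, and only at the load applying $\lnewsymb\mhyphen\lwensymb$ followed by scope extrusion to reach $(\lNu x{\trbv A})\lpar(\lWe x{\trbv B})$. If you want a genuinely rule-by-rule induction over judgements with nonempty stores, the invariant must be adjusted accordingly: while $\isnu y$ is in the store, every $\lyasymb$-formula already popped against it should be translated by its substituted body (no $\lwensymb$ yet), the $\lwensymb$ being created only when the matching $\nuloadr$ fires; you must also account for the fact that the $\lwith$-rule duplicates the store, so a single load can match several pops --- one $\lnewsymb$ interacting with several $\lwensymb$'s, which $\MAVq$ does permit, as you observe.
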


\section{Proof Nets for $\NL$}\label{sec:PN}

In this section, we define proof nets for $\NL$ and we prove soundness and completeness of this syntax by providing sequentialization and proof translation (desequentialization) procedures.

To handle the interaction between multiplicative ($\lpar$, $\ltens$ and $\lprec$) and additive ($\lplus$ and $\lwith$) connectives in $\NL$ in a canonical way, we develop two syntaxes for proof nets:
\begin{itemize}
	\item \emph{conflict nets} (extending \cite{hughes:conflict,hei:hug:conflict}):
	these proof nets provide canonical representative for proofs modulo local rule permutations, with polynomial-time correctness criterion, sequentialization, and proof translation; and
	\item \emph{slice nets} \cite{hug:van:slice} (extending \cite{hug:van:sliceLICS,hug:van:slice}):
	these proof nets provide canonical representative for proofs modulo rule permutations (inlcuding non-local ones), with polynomial-time correcness criterion and sequentialization, but no polynomial time proof translation.
\end{itemize}
In particular, the former syntax is optimal for the application we aim at in this paper (see \Cref{sec:canon}).

In the reminder of this section, we first provide some shared definitions on links and dualizers.
We then define conflict nets, proving soundness and completeness by providing procedures for sequentialization and proof translation.
We then define slice nets and we prove soundness and completeness using the results on conflict nets.

We first adapt the definitions of links to take into account stores.
\begin{definition}
	A \defn{link} $\la$ on a judgement $\sdash \Gamma$ (or a sequent $\Gamma$) is either
	\begin{itemize}
		\item a \defn{sub-judgement} of $\sdash \Gamma$, that is, a judgement $\sdash[\sS']\Gamma'$ such that $\Gamma'$ is an induced sub-forest of $\Gamma$ and $\sS'\subseteq S$; or
		\item \defn{nominal link}: a pair $\set{x,y}$ of variables occurring in $\Gamma$ with $x$ bound by $\lnewsymb$ and $y$ by a $\lyasymb$, or a pair of the form $\set{\isnu x,y}$ \resp{$\set{\isya x,y}$} made of a nominal variable $\isnu x$ \resp{$\isya x$} in the store, and a variable $y$ occurring in $\Gamma$ bound by the nominal quantifier $\lyasymb$ \resp{$\lnewsymb$};
	\end{itemize}
	A link is \defn{axiomatic} if it is nominal link containing no nominal variables, or a sub-sequent made of a single occurrence of $\lunit$ or a pair of the atoms of the form $\set{\lsend xy,\lrecv zt}$.
	A \defn{linking} \resp{\defn{axiomatic linking}} on $\sdash \Gamma$ is a set of links \resp{axiomatic links} on $\sdash \Gamma$.
\end{definition}
\begin{nota}
	We represent a link $\la$ by drawing a horizontal line labeled by $\la$ connected via vertical segments to the roots of each variable or subformula in the link.
\end{nota}
\begin{example}
	In the left of \Cref{eq:exPreLink} we show a decorated with the links provided on the right.
	\begin{equation}\label{eq:exPreLink}
		\begin{array}{c|c}
			\sdash[\isya w , \isnu{\vz1}] \lYa {\vv2}{\vA1\ltens \vB1}, (\vNu1 \vx1.{\vC1})\vpz1{\lpar}(\vD1\vlplus1 \vE1), \vYa1 \vy1.\vF1
			\pzlinks{A1/C1/12/\la/red/lplus1}
			\pzlinks{x1/y1/-10/\lc/violet/}
			\pzlinks{B1/Ya1/-16/\lb/blue/pz1}
			\pzlinks{z1/v2/12/z/magenta/}
		\quad&\quad
			\begin{array}{l}
				\sclr z =\set{\isnu z, v}
				\\
				\la=A, C, D \lplus E
				\\
				\lb=B,\lFa xF,(\lNu xC)\lpar(D\oplus E)
				\\
				\lc= \set{x,y}
			\end{array}
		\end{array}
	\end{equation}
	The sequent $\Gamma'=D,D\oplus E$ is not a link for the given judgement because it is not a sub-judgement since the formula $D$ is repeated twice.
\end{example}

We then fix a notation for substitutions which we use to encode the information about the witnesses of the quantifiers.

\def\wmaps{\sigma}
\def\wmapt{\tau}
\def\wmapr{\rho}

\begin{nota}
	We use the standard notation $\wmaps=\fsubsts{x_1/y_1,\ldots,x_n/y_n}$  for \defn{substitutions}, i.e., (partial) maps over the set of variables\footnote{In the language of $\NL$ we have no function symbols, but this definition could be generalized by defining a substitution as a map from variables to terms, as done in \cite{hei:hug:str:ALL1}.} with \defn{domain} $\set{y_1,\ldots, y_n}$.
	Moreover, we use the following denotations:
	\begin{itemize}
		\item $\dualizerof[\emptyset]$ is the \defn{empty substitution};

		\item $\wmaps\wmapt$ is the \defn{composition} of $\wmaps$ and $\wmapt$ (in which $\wmaps$ is applied after $\wmapt$);

		\item
		$\wmaps\fsubminus{x}$ is the substitution obtained from $\wmapt$
		by removing the substitution of the variable $x$;

		\item
		$\wmaps$ is \defn{more general} than $\wmapt$
		(denoted $\wmaps \leq \wmapr$)
		if there is a map $\wmapr$ such that $\wmaps\wmapr=\wmapt$;

		\item
		$\wmaps$ and $\wmapt$ are \defn{disjoint}
		(denoted $\newdisj\wmaps{\wmapt}$)
		if so are their domains.
		We may write
		$\wmaps\duasum\wmapt$ to denote $\wmaps\wmapt=\wmapt\wmaps$ whenever $\newdisj\wmaps{\wmapt}$;

		\item
		$\wmaps$ and $\wmapt$ are \defn{coherent}
		(denoted $\newcohe\wmaps{\wmapt}$)
		if there is $\wmapr$ such that $\wmaps\wmapr=\wmapt\wmapr$;

		\item
		the \defn{join} of $\wmaps$ and $\wmapt$ (denoted $\wmaps\join\wmapt$) is the least map $\wmapr$ such that $\wmaps\leq\wmapr$  and $\wmapt\leq\wmapr$;

	\end{itemize}
\end{nota}

\begin{definition}
	A \defn{dualizer} $\dualizerof[\la]$ for a link $\la $ on a judgement $\sdash \Gamma$ (or a sequent $\Gamma$) is a substitution with domain variables either occurring in $\sS$, or occurring bound by an existential quantifier ($\exists$) or by a nominal quantifier ($\lnewsymb$ or $\lyasymb$) in $\Gamma$.

	A \defn{linking with witnesses} $\tuple{\linking,\dualizerof^{\linking}}$ is a linking $\linking$ on $\sdash \Gamma$ provided with a \defn{witness map} $\dualizerof^{\linking}$ associating to each link in $\linking$ a (possibly empty) dualizer.
	
	The \defn{initial witness map} of an axiomatic linking $\linking$ is defined as follows for each $\la\in\linking$:
	\begin{itemize}
		\item if $\la=\set{\lsend xy,\lrecv zt}$, then $\dualizerof[\la]^{\linking}=\fsubsts{x/z,y/t}$;
		\item if $\la=\set{x,y}$ is a nominal link with $x$ bound by $\lnewsymb$ and $y$ bound by $\lyasymb$, then $\dualizerof[\la]^{\linking}=\fsubsts xy$;
		\item if $\la=\set{\lunit}$, then $\dualizerof[\emptyset]$.
	\end{itemize}
\end{definition}

%
\def\negspace{\!\!\!\!\!\!\!\!\!\!\!\!}
\def\mif{\mbox{ if }}
\begin{figure}[t]\adjustbox{max width=\textwidth}{$\begin{array}{c@{\!\!}cl}
		\begin{array}{c}
			\begin{array}{ccc}
				\vpz1{\la}  && \vic1 \cdots \vic{k}
				\\[5pt]
				&\vpz2{\bullet}
			\end{array}
			\multiGedges{pz2}{pz1,c1,ck}
			\\\coaldo\\
			\begin{array}{ccc}
				\vpz1{\lc} & &\vic1 \cdots \vic{k}
				\\[5pt]
				&\vpz2{\bullet}
			\end{array}
			\multiGedges{pz2}{pz1,c1,ck}
			\\
			\mbox{with }
			\bullet \in \set{\conc,\conf}
		\end{array}
	&\mbox{via}&
		\begin{cases}
			(\lpar): &\mif
			\vA1 \vpz1{\lpar} \vB1, \viC1,\ldots ,\viC n
			\pzlinks{A1/B1/12/\la/red/{C1,Cn}}
			\mbox{ and }
			\vA1 \vpz1{\lpar} \vB1, \viC1,\ldots ,\viC n
			\pzlinks{pz1/C1/12/\lc/violet/{Cn}}
		\\\\
			(\lplus): &\mif
			\viA1 \vpz1{\lplus} \viA2, \viB1,\ldots ,\viB n
			\pzlinks{A1/B1/12/\la/red/{Bn}}
			\left(\mbox{or }\viA1 \vpz1{\lplus} \viA2, \viB1,\ldots ,\viB n
			\pzlinks{A2/B1/12/\la/red/{Bn}}\right)
			\mbox{ and }
			\viA1 \vpz1{\lplus} \viA2, \viB1,\ldots ,\viB n
			\pzlinks{pz1/B1/12/\lc/violet/{Bn}}
		\\\\
			(\exists): &\mif
			\vpz1{\exists}x.\vA1,\quad \viB1,\ldots ,\viB n
			\pzlinks{A1/B1/12/\la/red/{Bn}}
			\mbox{ and }
			\vpz1{\exists}x.\vA1, \viB1,\ldots ,\viB n
			\pzlinks{pz1/B1/12/\lc/violet/{Bn}}
			\mbox{ and }
			\dualizerof[\lc]=\dualizerof[\la]\fsubminus{x}
		\\\\
			(\forall): &\mif
			\vpz1{\forall}x.\vA1, \viB1,\ldots ,\viB n
			\pzlinks{A1/B1/12/\la/red/{Bn}}
			\quand
			\vpz1{\forall}x.\vA1, \viB1,\ldots ,\viB n
			\pzlinks{pz1/B1/12/\lc/violet/{Bn}}
			\mbox{ and }
			x\notin\freeof{C_1,\ldots,C_n}
			\mbox{ and }
			x\in\domof{\dualizerof[\la]}
		\\\\
			(\naloadr): &\mif
			\vNa1 \vx1.\vA1,\viB1,\ldots ,\viB n
			\pzlinks{A1/B1/12/\la/red/{Bn}}
			\mbox{ and }
			\vNa1 \vx1.\vA1, \viB1,\ldots ,\viB n
			\pzlinks{Na1/B1/12/\lc/violet/{Bn}}
			\mbox{ and }
			x\notin\freeof{C_1,\ldots,C_n}
		\\\\
			(\naur): &\mif
			\vNa1 \vx1.\vA1,\viB1,\ldots ,\viB n
			\pzlinks{A1/B1/12/\la/red/{Bn}}
			\mbox{ and }
			\vNa1 \vx1.\vA1, \viB1,\ldots ,\viB n
			\pzlinks{Na1/B1/12/\lc/violet/{Bn}}
			\mbox{ and }
			x\notin\freeof{C_1,\ldots,C_n}
			\mbox{ and }
			\set{x,y}\notin\linktree
		\end{cases}
	\\\hline\\
		\begin{array}{c}
			\begin{array}{ccc}
				\vpz1{\la} \quad \vpz2{\lb} && \vic1 \cdots \vic k
				\\[5pt]
				&\vconc1
			\end{array}
			\multiGedges{conc1}{pz1,pz2,c1,ck}
		\\\coaldo\\
			\begin{array}{ccc}
				\vpz1{\lc} & &\vic1 \cdots \vic k
				\\[5pt]
				&\vconc1
			\end{array}
			\multiGedges{conc1}{pz1,c1,ck}
		\\
			\mbox{
				with $\disj(\dualizerof[{\la}], \dualizerof[{\lb}])$
			}
		\end{array}
	&\mbox{via}&
		\underbrace{\begin{cases}
			(\ltens):&\mif
			\vA1 \vpz1{\ltens} \vB1, \viC1,\ldots ,\viC n,\viD1,\ldots ,\viD m
			\pzlinks{A1/C1/12/\la/red/{Cn}}
			\pzlinks{B1/D1/-12/\lb/blue/{Dm}}
			\mbox{ and }
			\vA1 \vpz1{\ltens} \vB1, \viC1,\ldots ,\viC n,\viD1,\ldots ,\viD m
			\pzlinks{pz1/C1/12/\lc/violet/{Cn,D1,Dm}}
		\\\\
			(\lprec): &\mif
			\viA1 \vpz1{\lprec} \viB1,\viA2 \vpz2{\lprec} \viB2, \viC1,\ldots ,\viC n,\viD1,\ldots ,\viD m
			\pzlinks{A1/A2/12/\la/red/{C1,Cn}}
			\pzlinks{B1/B2/-12/\lb/blue/{D1,Dm}}
			\mbox{ and }
			\viA1 \vpz1{\lprec} \viB1,\viA2 \vpz2{\lprec} \viB2, \viC1,\ldots ,\viC n,\viD1,\ldots ,\viD m
			\pzlinks{pz1/pz2/12/\lc/violet/{C1,Cn,D1,Dm}}
		\\[10pt]
			(\precur):&\mif
			\vA1 \vpz1{\lprec} \vB1, \viC1,\ldots ,\viC n,\viD1,\ldots ,\viD m
			\pzlinks{A1/C1/12/\la/red/{Cn}}
			\pzlinks{B1/D1/-12/\lb/blue/{Dm}}
			\mbox{ and }
			\vA1 \vpz1{\lprec} \vB1, \viC1,\ldots ,\viC n,\viD1,\ldots ,\viD m
			\pzlinks{pz1/C1/12/\lc/violet/{Cn,D1,Dm}}
			\quad
			\begin{cases}
				\mbox{only if no other rule }
				\\
				\mbox{except $\mixr$ can be applied}
			\end{cases}
		\\
			(\mixr): &\mif
			\viA1 , \ldots ,\viA n,\viB1,\ldots, \viB m
			\pzlinks{A1/An/12/\la/red/{}}
			\pzlinks{B1/Bm/12/\lb/blue/{}}
			\mbox{ and }
			\viA1 , \ldots ,\viA n,\viB1,\ldots, \viB m
			\pzlinks{A1/Bm/12/\lc/violet/{An,B1}}
			\qquad\qquad\qquad\qquad
			\begin{cases}
				\mbox{only if no other rule}
				\\
				\mbox{can be applied}
			\end{cases}
		\\
		\end{cases}}_{\mbox{only if $\dualizerof[{\lc}] = \dualizerof[{\la}] \duasum \dualizerof[{\lb}]$}}
	\\\\
		&& (\napopr): \mif
		\vnNa1 \vx1.\vA1, \vNa1 \vy1.\vB1, \viC1,\ldots ,\viC n
		\pzlinks{A1/B1/12/\la/red/{C1,Cn}}
		\pzlinks{x1/y1/-12/\lb/blue/}
		\mbox{ and }
		\vnNa1 x.\vA1, \vNa1 \vy1.\vB1, \viC1,\ldots ,\viC n
		\pzlinks{nNa1/B1/12/\lc/violet/{C1,Cn}}
		\quad
		\mbox{
			only if
			$x,y\notin\freeof{C_1,\ldots,C_n}$,
			with $\dualizerof[\lc]=\dualizerof[\la]\fsubminus{y}$
		}
	\\\\\hline\\
		\begin{array}{c}
			\begin{array}{c}
				\vpz1{\la_1}\cdots \vpz3{\la_h}
				\qquad
				\vpz2{\lb_1}\cdots\vpz4{\lb_k}
				\\[5pt]
				\vconf1
			\end{array}
			\multiGedges{conf1}{pz1,pz2,pz3,pz4}
			\\\coaldo\\
			\begin{array}{ccc}
				\vpz1{\la_1}\cdots \vpz3{\la_h}
				&&
				\vpz2{\lb_1}\cdots\vpz4{\lb_k}
				\\[5pt]
				\vconf2&& \vconf3
				\\[5pt]
				&\vconf1
			\end{array}
			\multiGedges{conf1}{conf2,conf3}
			\multiGedges{conf2}{pz1,pz3}
			\multiGedges{conf3}{pz2,pz4}
			\\
			\mbox{with $\cohe(\dualizerof[{\la}], \dualizerof[{\lb}])$ and $\dualizerof[{\lc}] = \dualizerof[{\la}] \join \dualizerof[{\lb}]$}
			\\
			\mbox{$h,k >0$ and $h+k>2$}
		\end{array}
	&\mbox{via}&
		(\conf): \mif
		\mbox{exists $A\lwith B$ in $\Gamma$ such that }
		\begin{cases}
			A\in \la_i \mbox{ and } B\notin \la_i \mbox{ for all }i\in\intset1h
			\\
			A\notin \lb_j \mbox{ and } B\in \lb_j \mbox{ for all }j\in\intset1k
		\end{cases}
	\\\\\hline\\
		\begin{array}{c}
			\begin{array}{c}
				\vpz1{\la} \qquad \vpz2{\lb}
				\\[5pt]
				\vconf1
			\end{array}
			\multiGedges{conf1}{pz1,pz2}
			\\\coaldo\\
			\begin{array}{c}
				\vpz1{\lc}
				\\[10pt]
				\vconf1
			\end{array}
			\multiGedges{conf1}{pz1}
		\\
			\mbox{
				with
				$\cohe(\dualizerof[{\la}], \dualizerof[{\lb}])$
				and
				$\dualizerof[{\lc}] = \dualizerof[{\la}] \join \dualizerof[{\lb}]$
			}
		\end{array}
	&\mbox{via}&
		(\lwith): \mif
			\vA1 \vlwith1 \vB1, \viC1,\ldots ,\viC n
			\pzlinks{A1/C1/12/\la/red/{Cn}}
			\pzlinks{B1/Cn/-12/\lb/blue/{C1}}
			\mbox{ and }
			\vA1 \vlwith1 \vB1, \viC1,\ldots ,\viC n
			\pzlinks{lwith1/C1/12/\lc/violet/{Cn}}
	\qquad\qquad\qquad
		\begin{array}{c}
			\vpz1{\la}
			\\[10pt]
			\vpz2{\bullet}\multiGedges{pz2}{pz1}
			\\\coaldo\\
			\la
			\\
			\mbox{with }\bullet \in\set{\conc,\conf}
		\end{array}
		\quad\mbox{via}\quad
		(\bullet)
	\end{array}$}
	\caption{
		Coalescence steps for \cotrees, with $\nabla\in\set{\lnewsymb,\lyasymb}$ and $\la,\lb,c_1,\ldots, c_n$ leaves.
	}
	\label{fig:coalescence}
\end{figure}

\def\canon#1{\left\lfloor #1 \right\rfloor}
\def\cconf{\dot{\conf}}
\def\cconc{\dot{\conc}}

\subsection{Conflict Nets}\label{subsec:CN}

Conflict nets for $\MALL$ are trees alternating \emph{concord} ($\conc$) and \emph{conflict} ($\conf$) nodes, having the elements of an axiomatic linking as leaves, and satisfying a contractibility criterion with respect to a rewriting procedure called \emph{coalescence}.

\begin{definition}\label{def:cotree}
	A \defn{concord-conflict tree} (or \defn{\cotree} for short) for a linking $\linking$ on a judgement $\sdash \Gamma$ is a tree $\linktree$ with leaves labeled by links $\linking$, and internal nodes labeled by $\conc$ (\defn{concord} nodes) or by $\conf$ (\defn{conflict} nodes).
	It is \defn{axiomatic} if $\linking$ is axiomatic.
	We denote by $\canon{\linktree}$ the \cotree obtained by merging adjacent $\conc$-nodes \resp{$\conf$-nodes}, and by removing nodes with a single child (by attaching its child to its parent).
	A \cotree $\linktree$ is \defn{canonical} if $\linktree=\canon{\linktree}$.
	We may denote $\linktree_1 \conc \linktree_2$ or $\conc\left(\linktree_1,\ldots, \linktree_n\right)$
	\resp{$\linktree_1 \conf \linktree_2$ or $\conf\left(\linktree_1,\ldots, \linktree_n\right)$ }
	a \cotree with root a $\conc$-node \resp{$\conf$-node} having as children the roots of $\linktree_1,\ldots,\linktree_n$.

	As for linkings with witnesses, we define a \defn{\cotree with witnesses} $\tuple{\linktree,\dualizerof}$ as a \cotree $\linktree$ equipped with a \defn{witness map} $\dualizerof$ associating to each leaf $\la$ of $\linktree$ a (possibly empty) dualizer $\dualizerof[\la]$.
\end{definition}

In order to define the coalescence criterion, we need to consider \cotrees for linkings with witnesses, rather than \cotrees for linkings.

\def\dlinktree{\linktree^{\dualizerof}}
\begin{definition}
	In \Cref{fig:coalescence} we define \defn{coalescence steps} over \cotrees with witnesses.
	A \cotree $\linktree$ for a linking $\linking$ on $\vdash \Gamma$ is \defn{coalescent} if there is a \defn{coalescence path}, that is, a sequence of coalescence steps from the \cotree with witnesses $\tuple{\linktree,\dualizerof^{\linking}}$ (where $\dualizerof^{\linktree}$ is the initial witness map for $\linking$) to a \cotree with witnesses consisting of a single leaf $\vdash\Gamma$ with empty dualizer.

	A \defn{conflict net} for a sequent $\Gamma$ is a coalescent axiomatic \cotree $\linktree$ on $\vdash \Gamma$.
\end{definition}

\begin{theorem}\label{thm:CNsoundAndComplete}
	Let $\Gamma$ be a sequent.
	Then $\proves[\NL]\Gamma$ iff there is a \conet $\linktree$ on $\Gamma$.
\end{theorem}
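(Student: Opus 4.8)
The plan is to prove the two implications separately, pairing \emph{proof translation} (left-to-right) with \emph{sequentialization} (right-to-left), using as the central bridge the observation that the coalescence steps of \Cref{fig:coalescence} are in exact correspondence with the inference rules of $\NL$: reading a coalescence path from its small axiomatic leaves down to the single leaf $\vdash\Gamma$ traces a derivation read \emph{top-down}, from its axioms to its conclusion. Concretely, each unary rule — $\lpar$, $\lplus$, $\exists$, $\forall$, $\nuur$, $\yaur$, $\nuloadr$, $\yaloadr$ — corresponds to a unary coalescence step performed under a $\conc$- or $\conf$-node; each binary multiplicative rule $\ltens$, $\lprec$, $\precur$, $\mixr$ corresponds to a merge step under a $\conc$-node; the additive rule $\lwith$ corresponds to the $(\lwith)$ and $(\conf)$ steps under a $\conf$-node; and the pop rules $\nupopr$, $\yapopr$, together with the nominal links, realize the $(\napopr)$ step, encoding the $\nuloadr$–$\nupopr$ pairing of \Cref{rem:nomlink}. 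Since $\NL$ is cut-free by definition, the nets represent exactly cut-free proofs.

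For the implication from $\proves[\NL]\Gamma$ to the existence of a \conet, I would proceed by structural induction on a derivation $\dD$ of $\vdash\Gamma$, simultaneously building an axiomatic \cotree with witnesses and a coalescence path reducing it to $\vdash\Gamma$. The axiom and unit rules produce the axiomatic leaves together with their initial dualizers; the unary rules extend the enclosing sub-forest of the relevant leaf (recording the chosen witness in the dualizer for $\exists$ and for the nominal pop rules) and append the matching unary step; the binary multiplicative rules join the two premise co-trees under a fresh $\conc$-node and append the corresponding merge step, with the nominal pop rules additionally consuming the matching nominal link; and $\lwith$ joins the premise co-trees under a $\conf$-node. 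The side conditions of the coalescence steps hold for structural reasons: the multiplicative rules split the context, hence split the domains of the witness maps, yielding the disjointness requirement $\disj(\dualizerof[\la],\dualizerof[\lb])$ for $\conc$-merges, whereas $\lwith$ duplicates the context, making the two branches' witness maps coherent ($\cohe$) and allowing the join $\dualizerof[\lc]=\dualizerof[\la]\join\dualizerof[\lb]$; the freshness conditions $x\notin\freeof{\cdot}$ are precisely the proviso $\dagger$ of the quantifier and nominal rules. Concatenating the premises' paths, which act on disjoint portions of the co-tree, and appending the step for the bottom rule gives the required coalescence path.

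For the converse, \emph{sequentialization}, I would argue by induction on the length of a coalescence path witnessing that the given axiomatic \cotree $\linktree$ on $\vdash\Gamma$ is coalescent. If the path is empty then $\Gamma$ is a single axiomatic link and an $\axrule$- or $\lunit$-rule concludes. Otherwise I would establish a \textbf{Splitting Lemma}: in a coalescent net whose root admits no unary coalescence step, there is a principal connective of $\Gamma$ whose associated binary step (a $\conc$-merge for $\ltens$/$\lprec$/$\precur$/$\mixr$/$\napopr$, or a $\conf$-step for $\lwith$) can be taken as the \emph{last} step of some coalescence path, thereby splitting $\linktree$ into one or two strictly smaller coalescent sub-nets on the corresponding sub-judgements. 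Unary steps are handled directly by invertibility, reducing $\Gamma$ to a smaller provable sequent. In each case the induction hypothesis sequentializes the sub-nets, and applying the identified rule at the bottom assembles a derivation of $\vdash\Gamma$.

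The main obstacle is the Splitting Lemma, and specifically the interplay of three features absent from ordinary $\MALL$ proof nets. First, the connective $\lprec$ is non-commutative and non-associative, so its merge step must respect the left/right ordering of its two sub-links; showing that a \emph{splitting} $\lprec$ (or $\precur$) compatible with this ordering exists is the delicate multiplicative point, made harder by the fact that the $(\precur)$ and $(\mixr)$ steps may be fired only when no other step applies, which constrains the order of the path. Second, the nominal quantifiers force a global bookkeeping through the store and the dualizers: as noted in \Cref{rem:nomlink}, each $\nuloadr$ is paired with a $\nupopr$ (and dually) across the net, so a split must not separate a nominal variable in the store from the dual quantifier that consumes it, and must preserve the $\disj$ and $\cohe$ conditions on witness maps. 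Third, the $\conf$-structure must ultimately factor through a single $\lwith$-rule at each additive branching, so the splitting has to reconcile the resolved conflict tree with a genuine principal $\lwith$-formula of $\Gamma$. Once the Splitting Lemma is in place, the remainder follows the standard contractibility-based sequentialization pattern.
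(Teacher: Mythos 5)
Your translation direction coincides with the paper's: $\cnof{\dD}$ is built by the top-down translation of \Cref{fig:deseq}, and coalescence of the result is witnessed by replaying the rules of $\dD$ in the order they occur in the derivation (the paper only adds the caveat that $\mixr$ and $\precur$ steps may have to be postponed and fired out of order, because of their ``only if no other step applies'' side conditions). Where you diverge is sequentialization. The paper does not prove, and does not need, a Splitting Lemma: a \conet is \emph{by definition} a \cotree admitting a coalescence path, and that path is itself the sequentialization certificate. Concretely, the paper introduces \emph{deductive \cotrees}, whose judgement-leaves carry derivations (initially just $\axrule$- or $\lunit$-instances), and shows by induction on the length of the given coalescence path that each step extends these derivations by the corresponding sequent rule (\Cref{fig:coacot}); the final single leaf $\vdash\Gamma$ then carries a complete derivation. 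This runs the induction forward from the axioms rather than backward from the conclusion, so there is no need to exhibit a splitting principal connective, to project the path onto sub-nets, or to re-verify the $\disj$ and $\cohe$ conditions on the pieces. The three difficulties you list under your Splitting Lemma are real, but they belong elsewhere: they surface when proving that \emph{different} coalescence paths for the same net yield $\pweq$-equivalent derivations (\Cref{lem:confCoal}), where the critical pairs involving $\lprec$, $\mixr$, $\precur$ and the nominal store must be analysed --- not in establishing soundness and completeness. If you do intend to prove your Splitting Lemma independently of the given coalescence path (i.e., geometrically, as one must for switching-based criteria), that is a substantially stronger statement than the theorem requires, and your proposal leaves it unproved; if instead you extract the splitting step from the last step of the given path, the lemma trivialises and you should say so, at which point your argument collapses into the paper's.
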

\begin{proof}
	For each derivation $\dD$ of $\Gamma$, we define the (axiomatic) \cotree $\cnof{\dD}$ by translating top-down rules $\dD$ as shown in \Cref{fig:deseq}.
	Prove that $\linktree_\dD$ is coalescent is trivial: it suffices to consider a coalescence path where coalescence steps, which are in correspondence with rules in $\NL$, respect the order in which we translate the proof -- Note that rules $\mixr$ and $\precur$ may require to be postponed during such translation, and applied out-of-order because of the side conditions we have on coalescence steps.

	To prove the converse, we define \emph{deductive \cotrees} as \cotrees with witnesses such that the leaves which are judgements are labeled by derivations with conclusion the corresponding judgement.
	For the base case, we consider the initial axiomatic \cotree $\linktree$ with witnesses $\dualizerof^{\linktree}$ whose leaves which are judgements are labeled by rules $\axrule$ or $\urule$.
	We then define a derivation $\dD_{\linktree}$ with conclusion $\vdash \Gamma$ as shown in \Cref{fig:coacot}, reasoning by induction on the length of a given coalescence path for $\linktree$, which exists by definition of conflict net.
\end{proof}
\begin{corollary}[Conflict nets for $\MALL^1$]\label{cor:confForMALL1}
	A sequent $\Gamma$ is derivable in $\MALL^1$ if it admits a coalescent slice net $\linktree$ whose coalescence path only containing steps $\lpar$, $\lplus$, $\exists$, $\forall$, $\ltens$, $\lwith$, $\conf$, and $\bullet$.
\end{corollary}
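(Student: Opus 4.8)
The plan is to obtain this as a direct specialization of the completeness (``conflict net $\Rightarrow$ derivation'') direction of \Cref{thm:CNsoundAndComplete}. First I would recall from the proof of that theorem the sequentialization procedure: given a coalescence path for a coalescent \cotree $\linktree$, one builds a derivation $\dD_\linktree$ of $\vdash\Gamma$ by induction on the length of the path, following \Cref{fig:coacot}, starting from the base case in which the judgement-leaves of the axiomatic \cotree are labelled by $\axrule$ or $\urule$. The whole argument then rests on reading off \emph{which} sequent rules this construction can possibly produce, rather than on any new combinatorics.

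The key step is to observe that the correspondence implemented by \Cref{fig:coacot} is \emph{name-preserving}: a coalescence step of \Cref{fig:coalescence} carrying a logical label $\rrule\in\set{\lpar,\lplus,\exists,\forall,\ltens,\lwith,\lprec,\precur,\mixr,\naloadr,\naur,\napopr}$ is sequentialized by introducing (bottom-up) exactly one instance of the homonymous rule $\rrule$, while the purely structural steps $\conf$ and $\bullet$ (which merge adjacent $\conf$-/$\conc$-nodes and delete single-child nodes) introduce no logical rule at all. Consequently the multiset of rules occurring in $\dD_\linktree$ is contained in $\set{\axrule,\urule}$ together with the set of non-structural step labels appearing along the chosen path, the two extra rules being precisely those admitted on the leaves in the base case.

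I would then restrict to a path using only steps among $\set{\lpar,\lplus,\exists,\forall,\ltens,\lwith,\conf,\bullet}$, discard the rule-free steps $\conf$ and $\bullet$, and read off that every rule in $\dD_\linktree$ lies in $\set{\axrule,\urule,\lpar,\ltens,\lplus,\lwith,\forall,\exists}$. The last thing to eliminate is $\urule$: a $\urule$-leaf is an axiomatic link $\set{\lunit}$, and since the statement concerns derivability in $\MALL^1$ I would take $\Gamma$ over the language of $\MALL^1$ (which carries no $\lunit$, and—consistently with the excluded $\lprec$, $\precur$, $\naloadr$, $\naur$, $\napopr$ steps—no $\lprec$ and no nominal quantifier), so no such leaf exists and every judgement-leaf is an atomic link $\set{\lsend xy,\lrecv zt}$ labelled by $\axrule$. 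Hence $\dD_\linktree$ uses only the rules of $\MALL^1=\set{\axrule,\lpar,\ltens,\lplus,\lwith,\forall,\exists}$, giving $\proves[\MALL^1]\Gamma$. The main obstacle I anticipate is not difficulty but bookkeeping: I must check that the step-to-rule correspondence is genuinely exhaustive and conservative, i.e.\ that no rule outside $\MALL^1$ can be smuggled in indirectly—in particular that a stray $\mixr$ or $\urule$ cannot appear through a unit leaf being absorbed by an $\ltens$ step—which is exactly what confining $\Gamma$ to the $\MALL^1$ language rules out.
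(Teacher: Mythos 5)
Your proposal is correct and follows exactly the route the paper intends: the corollary is stated without proof precisely because it is read off from the sequentialization construction of \Cref{thm:CNsoundAndComplete} (\Cref{fig:coacot}), where each logical coalescence step introduces the homonymous sequent rule and the steps $\conf$ and $\bullet$ introduce none. Your additional care about excluding $\urule$ leaves by taking $\Gamma$ in the $\MALL^1$ language is a sensible piece of bookkeeping that the paper leaves implicit.
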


\begin{figure}
	\centering\adjustbox{max width=\textwidth}{$
		\begin{array}{r@{=}l@{\qquad}r@{=}l}
			\cnof{
				\vlderivation{\vlin{\lunit}{}{\sdash \lunit}{\vlhy{}}}
			}
			&
			\biggl\{\vpz1{\lunit}\biggr\}
			\pzlinks{pz1/pz1/16/\la/red/}
		&
			%
			\cnof{
				\vlinf{\axrule}{}{\sdash \lsend xy, \lrecv xy}{\vlhy{}}
			}
			&
			\biggl\{\vpz1{\lsend xy},\vpz2{ \lrecv xy}\biggr\}
			\pzlinks{pz1/pz2/16/\la/red/}
		\\[24pt]
			%
			\cnof{
				\vlderivation{
					\vlin{\nqpopr}{}{\sdash[\Sigma, \isna y] \Gamma, \lnNa xA}{
						\vlpr{\dD_1}{}{\sdash \Gamma, A\fsubst yx}
					}
				}
			}
			&
			\canon{
				\cnof{\dD_1}\fsubst yx
				\conc
				\biggl\{\vx1\;,\;\vy1\biggr\}
				\pzlinks{x1/y1/16/\la/red/}
			}
		&
			%
			\cnof{
				\vlderivation{
					\vlin{\exists}{}{\sdash \Gamma, \lEx xA}{
						\vlpr{\dD_1}{}{\sdash \Gamma, A\fsubst yx}
					}
				}
			}
			&
			\cnof{\dD_1}\fsubst yx
		\\[24pt]
			%
			\cnof{
				\vlderivation{\vlin{\rrule[1]}{}{\sdash \Gamma}{\vlpr{\dD_1}{}{\sdash \Gamma_1}}}
			}
			&
			\cnof{\dD_1}
		&
			%
			\cnof{
				\vlderivation{\vlin{\naloadr}{}{\sdash \Gamma,\lNa xA}{\vlpr{\dD_1}{}{\sdash[\sS,\isna x] \Gamma, A }}}
			}
			&
			\cnof{\dD_1}
		\\[24pt]
			%
			\cnof{
				\vlderivation{
					\vliin{\rrule[2]}{}{\sdash[\sS_1,\sS_2] \Gamma}{
						\vlpr{\dD_1}{}{\sdash[\sS_1] \Gamma_1}
					}{
						\vlpr{\dD_2}{}{\sdash[\sS_2] \Gamma_2}
					}
				}
			}
			&
			\canon{\cnof{\dD_1} {\conc} \cnof{\dD_2}}
		&
			%
			\cnof{
				\vlderivation{
					\vliin{\lwith}{}{\sdash \Gamma}{
						\vlpr{\dD_1}{}{\sdash \Gamma_1}
					}{
						\vlpr{\dD_2}{}{\sdash \Gamma_2}
					}
				}
			}
			&
			\canon{\cnof{\dD_1} {\conf} \cnof{\dD_2}}
		\\[24pt]
		\multicolumn{4}{c}{
			\mbox{with }
			\rrule[1]\in\set{\lpar,\oplus,\forall,\nuurule,\yurule}
			\quand
			\rrule[2]\in\set{\ltens,\lprec,\precur,\mixr}
			\quand
			\nabla\in\set{\lnewsymb,\lyasymb}
		}
	\end{array}$}
	\caption{
		Translation of a derivation in $\NL$ into a conflict net,
		where $\cnof{\dD}\fsubst yx$ is the \cotree obtained by applying the substitution $\fsubst yx$ to all its links in $\cnof{\dD}$.
	}
	\label{fig:deseq}
\end{figure}

\begin{figure}[t]
	\centering\adjustbox{max width=\textwidth}{$\begin{array}{c|c}
		\begin{array}[t]{c|c|c}
			\dD_{\la}& \mbox{step} & \dD_{\lc}
		\\\hline
		 	\vlderivation{
		 		\vlpr{\pi}{}{\dualizerof[{\la}](\sdash A, B, \Gamma)}
	 		}
		 	&\lcoalto{\lpar}&
		 	\vlderivation{
		 		\vlin{\lpar}{}{
					\dualizerof[{\lc}](\sdash A \lpar B, \Gamma)
				}{
		 			\vlpr{\pi}{}{
						\dualizerof[{\la}](\sdash A, B, \Gamma)
					}
	 			}
		 	}
 		\\
	 		\vlderivation{
				\vlpr{\pi }{}{\dualizerof[{\la}](\sdash A_i, \Gamma)}
			}
	 		&\lcoalto{\oplus}&
	 		\vlderivation{
	 			\vlin{\lpar}{}{
					\dualizerof[{\lc}](\sdash A_1 \oplus A_2, \Gamma)
				}{
	 				\vlpr{\pi }{}{\dualizerof[{\la}](\sdash A_i , \Gamma)}
	 			}
	 		}
 		\\
 			\vlderivation{
 				\vlpr{\pi}{}{\dualizerof[{\la}](\sdash A , \Gamma)}
 			}
 			&\lcoalto{\forall}&
 			\vlderivation{
				\vlin{\forall}{}{
					\dualizerof[{\lc}](\sdash \lFa xA , \Gamma)
 				}{
					\vlpr{\pi}{}{\dualizerof[{\la}](\sdash A , \Gamma)}
				}
 			}
		\\
			\vlderivation{
				\vlpr{\pi}{}{\dualizerof[{\la}](\sdash A , \Gamma)}
			}
			&\lcoalto{\naur}&
			\vlderivation{
			   \vlin{\naur}{}{
				   \dualizerof[{\lc}](\sdash \lNa xA , \Gamma)
				}{
				   \vlpr{\pi}{}{\dualizerof[{\la}](\sdash A , \Gamma)}
			   }
			}
	   \\
			\vlderivation{
				\vlpr{\pi}{}{\dualizerof[{\la}](\sdash A\fsubst yx , \Gamma)}
			}
			&\lcoalto{\exists}&
			\vlderivation{
				\vlin{\exists}{}{
						\dualizerof[{\lc}](\sdash \lEx xA, \Gamma)
					}{
					\vlpr{\pi}{}{
						\dualizerof[{\la}](\sdash A\fsubst yx , \Gamma)
					}
				}
			}
		\\
 			\vlderivation{
 				\vlpr{\pi}{}{\dualizerof[{\la}](\sdash A\fsubst yx , \Gamma)}
 			}
 			&\lcoalto{\napopr}&
 			\vlderivation{
				\vlin{\napopr}{}{
 					 \dualizerof[{\lc}]( \sdash[\sS,\isna x]  \lnNa xA, \Gamma)
 				}{
					\vlpr{\pi}{}{\dualizerof[{\la}](\sdash A\fsubst yx, \Gamma)}
				}
 			}
 		\\
			\vlderivation{
				\vlpr{\pi}{}{\dualizerof[{\la}](\sdash[\sS,\isna x] A , B, \Gamma)}
			}
			&\lcoalto{\naloadr}&
			\vlderivation{\vlin{\naloadr}{}{
					\dualizerof[{\lc}](\sdash \lNa xA , \Gamma)
				}{
					\vlpr{\pi}{}{
						\dualizerof[{\la}](\sdash[\sS,\isna x] A , \Gamma)
					}
				}
			}
		\end{array}
	&
		\begin{array}[t]{c|c|c|c}
			\dD_{\la}& \dD_{\lb}&  \mbox{step} & \dD_{\lc}
		\\\hline
			\vlderivation{\vlpr{\pi_1}{}{\dualizerof[{\la}](\sdash[\sS_1] A, \Gamma)}}
		&
			\vlderivation{\vlpr{\pi_2}{}{\dualizerof[{\lb}](\sdash[\sS_2] B, \Delta)}}
		&\lcoalto{\ltens}&
			\vlderivation{
				\vliin{\ltens}{}{
					\dualizerof[{\lc}](\sdash[\sS_1,\sS_2]A \ltens B, \Gamma,\Delta)
				}{
					\vlpr{\pi_1}{}{\dualizerof[{\la}](\sdash[\sS_1]A, \Gamma)}
				}{
					\vlpr{\pi_2}{}{\dualizerof[{\lb}](\sdash[\sS_2]B, \Delta)}
				}
			}
		\\
			\vlderivation{\vlpr{\pi_1}{}{\dualizerof[{\la}](\sdash[\sS_1] A, \Gamma)}}
		&
			\vlderivation{\vlpr{\pi_2}{}{\dualizerof[{\lb}](\sdash[\sS_2] B, \Delta)}}
		&\lcoalto{\precur}&
			\vlderivation{
				\vliin{\precur}{}{
					\dualizerof[{\lc}](\sdash[\sS_1,\sS_2]A \lprec B, \Gamma,\Delta)
				}{
					\vlpr{\pi_1}{}{\dualizerof[{\la}](\sdash[\sS_1]A, \Gamma)}
				}{
					\vlpr{\pi_2}{}{\dualizerof[{\lb}](\sdash[\sS_2]B, \Delta)}
				}
			}
		\\
			\vlderivation{\vlpr{\pi_1}{}{\dualizerof[{\la}](\sdash[\sS_1]A_1,A_2 \Gamma)}}
			&
			\vlderivation{\vlpr{\pi_2}{}{\dualizerof[{\lb}](\sdash[\sS_2]B_1,B_2 \Delta)}}
			&\lcoalto{\lprec}&
			\vlderivation{
				\vliin{\lprec}{}{
					\dualizerof[{\lc}](\sdash[\sS_1,\sS_2]A_1 \lprec B_1,A_2 \lprec B_2, \Gamma,\Delta)
				}{
					\vlpr{\pi_1}{}{\dualizerof[{\la}](\sdash[\sS_1]A_1,A_2, \Gamma)}
				}{
					\vlpr{\pi_2}{}{\dualizerof[{\lb}](\sdash[\sS_2]B_1,B_2, \Delta)}
				}
			}
		\\
			\vlderivation{\vlpr{\pi_1}{}{\dualizerof[{\la}](\sdash[\sS_1]\Gamma)}}
			&
			\vlderivation{\vlpr{\pi_2}{}{\dualizerof[{\lb}](\sdash[\sS_2]\Delta)}}
			&\lcoalto{\mixr}&
			\vlderivation{
				\vliin{\mixr}{}{
					\dualizerof[{\lc}](\sdash[\sS_1,\sS_2]\Gamma,\Delta)
				}{
					\vlpr{\pi_1}{}{\dualizerof[{\la}](\sdash[\sS_1]\Gamma)}
				}{
					\vlpr{\pi_2}{}{\dualizerof[{\lb}](\sdash[\sS_2]\Delta)}
				}
			}
		\\
			\vlderivation{\vlpr{\pi_1}{}{\dualizerof[{\la}](\sdash[\sS]A, \Gamma)}}
			&
			\vlderivation{\vlpr{\pi_2}{}{\dualizerof[{\lb}](\sdash[\sS]B, \Delta)}}
			&\lcoalto{\lwith}&
			\vlderivation{
				\vliin{\lwith}{}{
					\dualizerof[{\lc}](\sdash[\sS] A \lwith B, \Gamma,\Delta)
				}{
					\vlpr{\pi_1}{}{\dualizerof[{\la}](\sdash[\sS]A, \Gamma)}
				}{
					\vlpr{\pi_2}{}{\dualizerof[{\lb}](\sdash[\sS]B, \Delta)}
				}
			}
		\end{array}
	\\
		\mbox{with }
		\dualizerof[{\lc}]=
		\begin{cases}
			\dualizerof[{\la}]\fsubminus x
			&
			\mbox{for steps $\exists$ and $\napopr$,}
			\\
			\dualizerof[{\la}]
			&
			\mbox{otherwise}
		\end{cases}
	&
		\mbox{with }
		\dualizerof[{\lc}]=
		\begin{cases}
			\dualizerof[{\la}]\join\dualizerof[{\lb}]
		&
			\mbox{for step $\lwith$}
		\\
			\dualizerof[{\la}]\duasum\dualizerof[{\lb}]
		&
			\mbox{otherwise}
		\end{cases}
	\end{array}$}
	\caption{
		Effect of coalescence steps in \Cref{fig:coalescence} on \cotrees with leaves labeled by derivations.
		The steps $\bullet$ and $\conf$ change no link labels.
	}
	\label{fig:coacot}
\end{figure}

As in \cite{hei:hug:conflict}, we define the \defn{size} of a proof net $\linktree$ on $\Gamma$ as the number $\sizeof{\linktree}$ of nodes in the \cotree $\linktree$ plus the number $\sizeof\Gamma$ of nodes in the forest $\Gamma$.
\begin{proposition}\label{prop:coal}
	The coalescence criterion is polynomial in the size of the proof net.
\end{proposition}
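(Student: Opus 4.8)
The plan is to decide coalescence by a single \emph{greedy} reduction and to bound its cost. Starting from $\tuple{\linktree,\dualizerof^{\linking}}$ with the initial witness map, I would repeatedly fire any applicable coalescence step of \Cref{fig:coalescence}, deferring the guarded steps $\precur$ and $\mixr$ until no other step is enabled, and declare $\linktree$ coalescent exactly when the reduction reaches the single leaf $\vdash\Gamma$ with empty dualizer $\dualizerof[\emptyset]$. To show this is both correct and polynomial I would establish three facts: every coalescence path has polynomially bounded length; each step can be located and fired in polynomial time; and the reduction is confluent, so that the order of firing (and in particular the greedy order) is irrelevant.

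First I would bound the length. Take as primary measure the number of connective and quantifier occurrences of $\Gamma$ not yet reconstructed at the roots of the current links, which is $\sizeof\Gamma$ initially and $0$ at the terminal leaf. Every \emph{formula-consuming} step, namely $\lpar,\lplus,\exists,\forall,\naloadr,\naur,\napopr,\ltens,\lprec,\precur$ and $\lwith$, reconstructs at least one such occurrence and so strictly decreases this measure; hence at most $\sizeof\Gamma$ of them occur along any path. The remaining steps consume no formula material: the single-child elimination $\bullet$ and the merge $\mixr$ strictly decrease the node count, while the conflict-splitting step $\conf$ is the only step that \emph{enlarges} the \cotree (by two nodes). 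I would control $\conf$ by observing that, for a fixed occurrence $A\lwith B$, it can be fired at a conflict node only while that node's leaves are not yet sorted according to $A$ versus $B$, and each firing strictly refines this partition; counting pairs of (conflict node, $\lwith$-occurrence) bounds the number of $\conf$ steps by $O(\sizeof\Gamma^2)$, keeping the \cotree size polynomial and hence the number of $\bullet$ and $\mixr$ steps polynomial as well. Thus every coalescence path has length polynomial in $\sizeof\linktree+\sizeof\Gamma$, under a lexicographic measure whose first component is formula-reconstruction progress and whose later components control the structural steps.

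Next I would bound the per-step cost. Scanning the \cotree and $\Gamma$ to locate a candidate and verify its side conditions (the freshness conditions $x\notin\freeof{\cdot}$, the membership $x\in\domof{\dualizerof[\la]}$, the sorting condition of $\conf$, and the ``no other rule applies'' guards of $\precur$ and $\mixr$) is polynomial. The only nontrivial witness computations are the disjointness test $\newdisj{\dualizerof[\la]}{\dualizerof[\lb]}$ and, for $\conf$ and $\lwith$, the coherence test $\newcohe{\dualizerof[\la]}{\dualizerof[\lb]}$ together with the join $\dualizerof[\lc]=\dualizerof[\la]\join\dualizerof[\lb]$; all of these reduce to (most general) unification and are computable in polynomial time using a shared representation of substitutions. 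For confluence I would use the standard local-confluence-plus-termination argument (Newman's lemma): two enabled steps at disjoint positions of the \cotree commute and produce the same result up to the canonicalisation $\canon{\cdot}$, while overlapping candidates are excluded by the determinism of the positions and by the guards postponing $\precur$ and $\mixr$. Confluence then guarantees that the greedy reduction reaches the terminal leaf iff \emph{some} coalescence path does, i.e. iff $\linktree$ is a \conet (cf.\ \Cref{thm:CNsoundAndComplete}). Multiplying the polynomial number of steps by the polynomial per-step cost yields the claim.

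The hard part will be the conflict-splitting step $\conf$: it is the sole step that enlarges the \cotree, so the naive size measure fails and one must justify the lexicographic measure above, proving that $\conf$ is well-founded and bounded by the additive structure of $\Gamma$; the same duplication is what creates the nontrivial critical pairs whose joinability secures confluence. A secondary subtlety is to confirm that the unifications underlying $\join$ keep the witness maps of polynomial size under sharing, rather than blowing up through repeated composition along a path.
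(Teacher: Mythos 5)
Your overall architecture (a greedy saturation procedure, a polynomial bound on the number of steps, and a polynomial per-step cost dominated by unification on dualizers) matches the argument the paper imports from \cite{hei:hug:conflict}, and the step-counting and unification parts are broadly fine. The genuine gap is in your third ingredient. You justify order-independence of the greedy reduction by ``local confluence plus termination (Newman's lemma)'', but coalescence is \emph{not} locally confluent on arbitrary \cotrees: \Cref{remark:notcritical} exhibits critical pairs whose two reducts are not joinable. Newman's lemma therefore does not apply, and without order-independence a greedy run that gets stuck decides nothing, since the criterion quantifies existentially over coalescence paths and there may be exponentially many of them.

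What actually holds, and what you need, is the weaker preservation property of \Cref{lem:confCoal}: if $\linktree$ is coalescent and $\linktree\coalto^*\linktree'$, then $\linktree'$ is still coalescent (indeed it still sequentializes). This is established by checking critical pairs only on \emph{deductive}, hence coalescent, \cotrees, where the divergences of \Cref{remark:notcritical} cannot arise; on non-coalescent \cotrees divergence is harmless because no branch reaches the terminal leaf anyway. Replacing your Newman-style argument by an appeal to this preservation property repairs the proof. A secondary caveat: your bound on the number of $\conf$ steps counts pairs of a conflict node and a $\lwith$-occurrence, but conflict nodes are created dynamically by $\conf$ itself, so the count is circular as stated; it is cleaner to observe that the set of leaves never grows, bound the size of any \cotree reachable along the reduction in terms of the number of leaves and of $\lwith$-occurrences, and derive the bound on $\conf$, $\bullet$ and $\mixr$ steps from that, as in \cite{hei:hug:conflict}.
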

\begin{proof}
	The result follows from the same argument (and algorithms) used in \cite{hei:hug:conflict} in the proof of the similar result for $\MALL$.
	The new multiplicative coalescence steps (the ones involving the $\lprec$) are as complex as the $\ltens$.
	Coalescence steps involving quantifiers require to perform operations on dualizers which are linear in the size of the dualizer (see \cite{mar:mon:unification}), and the size of the dualizer is linear in the size of the formula.
	Thus the complexity is at most $\mathcal O(n^5)$ where $n$ is the size of the proof net.
\end{proof}

\subsection{Slice Nets}\label{subsec:SN}

In this paper we define slice nets using the correctness criterion from \cite{hughes:conflict} based on \emph{erasing steps} rather than on slicing and switchings as in \cite{hug:van:slice,hug:van:sliceLICS}.
The criterion presented here is more similar to the criterion we used for conflict nets, it requires fewer definitions to be stated, and it is polynomial on the size of of a slice net.

\def\gG{\mathcal G}
\def\gof#1{\gG\left(#1\right)}
\newcommand{\sgof}[2][\sS]{\gG\left(#1\vdash #2\right)}
\newcommand{\lgof}[3][\sS]{\gG\left(#1\vdash #2::#3\right)}
\def\linkings{\Lambda\!\!\!\Lambda}
\def\emptygraph{\emptyset}
\newcommand{\erasto}[1][]{\rightarrow^{#1}_{\mathsf{er}}}
\newcommand{\derasto}[1][]{\downarrow_{\mathsf{er}}\mbox{ via } #1}
\def\deferrule#1#2#3{\begin{array}{c}#2\\\derasto[#1]\\#3\end{array}}
\newcommand{\erastos}{\erasto[*]}
\def\eof#1{\mathsf{E}(#1)}
\def\vof#1{\mathsf{V}(#1)}
\def\restr#1{|_{#1}}
\begin{definition}
	The \defn{parse graph}\footnote{Indeed, parse graphs are always forests.} of a judgement $\sdash \Gamma$ is defined as the graph $\sgof{\Gamma}$ with:
	\begin{itemize}
		\item set of vertices $\vof{\sdash \Gamma}$ contains the store $\sS$, and the set of occurrences of atoms, connectives, quantifiers, and bound variables in the sequent $\Gamma$; and
		\item set of edges $\eof{\sdash \Gamma}$ containing an edge $\set{v,w}$ whenever $v$ is the main operator of a formula $A=A_1\circleddot A_2$ \resp{$A=\lQu xA_1$} and $w$ is the main operator of the $A_1$ or $A_2$ \resp{of the subformula $A_1$ or the variable $x$}.
	\end{itemize}
	The parse graph of a sequent $\Gamma$ or a formula $A$ is defined similarly.

	If $\linkings$ is a set of axiomatic linkings on $\sdash \Gamma$, then the \defn{linked parse graph} is the graph $\lgof\Gamma\linkings$ obtained by adding to $\sgof\Gamma$ the set of edges $\eof\linkings$ containing an edge between each pair of vertices corresponding to a pair of atoms or variables in link in one of the linking in $\linkings$ -- therefore, no edge has to be considered for a link of the form $\set\lunit$. 
\end{definition}

\begin{figure}[t]
	\centering
	\adjustbox{max width=.9\textwidth}{$\begin{array}{c}
		\begin{array}{c|l}
			\mbox{Erasing rule}& \mbox{Side conditions}
		\\\hline\hline
			\deferrule{\axrule}{
				\lgof{\lsend xy,\lrecv zt}{\set{\la}}
			}{
				~
			}
			&
			\la=\set{\lsend xy,\lrecv zt}
			\qwith
			\dualizerof[\la]^{\set{\la}}=\fsubsts{y/x,t/z}
		\\\hline
			\deferrule{\lunit}{
				\lgof{\lunit}{\set{\emptyset}}
			}{
				~
			}
			&
		\\\hline
			\deferrule{\lpar}{
				\lgof{\Gamma, A\lpar B}{\linkings}
			}{
				\lgof{\Gamma, A, B}{\linkings}
			}
			&
		\\\hline
			\deferrule{\mixr}{
				\lgof[\sS_1,\sS_2]{\Gamma,\Delta}{\linkings}
			}{
				\lgof[\sS_1]{\Gamma}{\linkings\restr{\sdash[\sS_1]\Gamma}}
				,
				\lgof[\sS_2]{\Delta}{\linkings\restr{\sdash[\sS_2]\Delta}}
			}
			&
			\begin{tabular}{l}
				if there is no $\set{v,w}\in \eof\linkings$ such that
				\\
				$v\in\vof{\sdash[\sS_1]\Gamma}$ and $w\in\vof{\sdash[\sS_2]\Delta}$
			\end{tabular}
		\\\hline
			\deferrule{\lprec}{
				\lgof[\sS_1,\sS_2]{\Gamma,\Delta, A\lprec B,C\lprec D}{\linkings}
			}{
				\lgof[\sS_1]{\Gamma, A,C}{\linkings\restr{\sdash[\sS_1]\Gamma,A}}
				,
				\lgof[\sS_2]{\Delta, B,D}{\linkings\restr{\sdash[\sS_2]\Delta,B}}
			}
			&
			\begin{tabular}{l}
				$\lgof[\sS_1,\sS_2]{\Gamma,\Delta, A\lprec B,C\lprec D}{\linkings}$ is connected and \\
				$\lgof[\sS_1,\sS_2]{\Gamma,\Delta, A, B,C, D}{\linkings}$ is not connected
			\end{tabular}
		\\\hline
			\deferrule{\precur}{
				\lgof[\sS_1,\sS_2]{\Gamma,\Delta, A\lprec B}{\linkings}
			}{
				\lgof[\sS_1]{\Gamma, A}{\linkings\restr{\sdash[\sS_1]\Gamma,A}}
				,
				\lgof[\sS_2]{\Delta, B}{\linkings\restr{\sdash[\sS_2]\Delta,B}}
			}
			&
			\begin{tabular}{l}
				$\lgof[\sS_1,\sS_2]{\Gamma,\Delta, A\lprec B}{\linkings}$ is connected and \\
				$\lgof[\sS_1,\sS_2]{\Gamma,\Delta, A, B}{\linkings}$ is not connected
			\end{tabular}
		\\\hline
			\deferrule{\ltens}{
				\lgof[\sS_1,\sS_2]{\Gamma,\Delta, A\ltens B}{\linkings}
			}{
				\lgof[\sS_1]{\Gamma, A}{\linkings\restr{\sdash[\sS_1]\Gamma,A}}
				,
				\lgof[\sS_2]{\Delta, B}{\linkings\restr{\sdash[\sS_2]\Delta,B}}
			}
			&
			\begin{tabular}{l}
				$\lgof[\sS_1,\sS_2]{\Gamma,\Delta, A\ltens B}{\linkings}$ is connected and \\
				$\lgof[\sS_1,\sS_2]{\Gamma,\Delta, A, B}{\linkings}$ is not connected
			\end{tabular}
		\\\hline
			\deferrule{\lplus}{
				\lgof{\Gamma, A_1\lplus A_2}{\linkings}
			}{
				\lgof{\Gamma, A_i}{\linkings}
			}
			&
			\mbox{if no vertex in $\vof{A_{1-i}}$ occurs in $\eof\linkings$ for $i\in\set{1,2}$}
		\\\hline
			\deferrule{\lwith}{
				\lgof{\Gamma, A_1\lwith A_2}{\linkings_1\uplus\linkings_2}
			}{
				\lgof{\Gamma, A_1}{\linkings_1}
				,
				\lgof{\Gamma, A_2}{\linkings_2}
			}
			&
			\mbox{if no vertex in $\vof{A_{1-i}}$ occurs in $\eof{\linkings_i}$ for $i\in\set{1,2}$}
		\\\hline
			\deferrule{\exists}{
				\lgof{\Gamma, \lEx xA}{\linkings}
			}{
				\lgof{\Gamma, A\fsubst yx}{\linkings}
			}
			&
			\begin{tabular}{l}
				if $x$ occurs in a $v\in\vof{A}$ and $v\in\la\in\linking\in\linkings$,
				\\
				then $\dualizerof[\la]^{\linkings}(x)=\dualizerof[\la]^{\linkings}(y)$
			\end{tabular}
		\\\hline
			\deferrule{\forall}{
				\lgof{\Gamma, \lFa xA}{\linkings}
			}{
				\lgof{\Gamma, A}{\linkings}
			}
			&
			\mbox{if } x\notin \freeof{\Gamma}
		\\\hline
			\deferrule{\naur}{
				\lgof{\Gamma, \lNa xA}{\linkings}
			}{
				\lgof{\Gamma, A}{\linkings}
			}
			&
			\begin{tabular}{l}
				if $x\notin \freeof{\Gamma}$
				\quand
				\\
				$x$ does not occur in a nominal link in any $\linking\in\linkings$
			\end{tabular}
		\\\hline
			\deferrule{\naloadr}{
				\lgof{\Gamma, \lNa xA}{\linkings}
			}{
				\lgof[\sS,\isna x]{\Gamma, A}{\linkings\fsubst{\set{\isna x,y}}{\set{x,y}}}
			}
			&
			\mbox{if } x\notin \freeof{\Gamma}
			\mbox{ and } \set{x,y} \in \eof{\linkings}
		\\\hline
			\deferrule{\napopr}{
				\lgof[\sS ,\isna y]{\Gamma, \lnNa xA}{\linkings}
			}{
				\lgof{\Gamma, A\fsubst ya}{\linkings\setminus\la}
			}
			&
			\begin{tabular}{l}
				$\la=\set{x,\isna y} \in \eof{\linkings}$
				\\
				where $\linkings\setminus\la\coloneqq \set{\linking\setminus\set\la \mid \linking\in\linkings}$
			\end{tabular}
		\end{array}
	\end{array}$}

	\caption{Erasing steps for linked parse graphs.}
	\label{fig:erasing}
\end{figure}

\begin{nota}
	Let $\linkings=\set{\linking_i}_{i\in I}$ be a set of atomic linkings for a judgement $\sdash \Gamma$.
	If $\sdash[\sS']\Gamma'$ is a sub-judgement of $\sdash \Gamma$, then we define $\linkings\restr{\sdash[\sS'] \Gamma'}$ as the set of non-empty linkings $\set{\linking'_i}_{j\in J}$ (for a $J\subseteq I$) such that each linking $\linking_i'$ contains all links in $\linking_i$ which are pairs of vertices in $\vof{\sdash[\sS'] \Gamma'}$,
	that is,
	$
	\linkings\restr{\sdash[\sS'] \Gamma'}=\SET{\linking_i\restr{\sdash[\sS'] \Gamma'}\coloneqq\set{\lambda\in\linking_i\mid \lambda\subseteq\vof{\sdash[\sS'] \Gamma'}}\mid i\in I, \; \linking_i\restr{\sdash[\sS'] \Gamma'}\neq \emptyset}
	$.
\end{nota}
\begin{definition}
	In  \Cref{fig:erasing} we define \defn{erasing steps} as the rewriting rules over sets of linkings with witnesses%
	\footnote{
		We represent erasing steps without explicitly writing the context, that is, if $L\erasto R_1,\ldots,R_k$ is a rule, then it can be applied in a set in such a way $\set{L,x_1,\ldots, x_n}$ can be rewritten as $\set{R_1\ldots,R_k,x_1,\ldots, x_n}$.
	}%
	.
	We say that a set of axiomatic linkings $\linkings$ on a sequent $\Gamma$ is \defn{erasable} if there is a sequence of erasing steps (called \defn{erasing path}) starting from the singleton $\set{\lgof[\emptystore]\Gamma{\linkings,\dualizerof^{\linkings}}}$ and ending with the singleton containing $\emptygraph=\lgof[\emptystore]\emptyset{\emptyset,\dualizerof[\emptyset]}$.

	A \defn{slice net} for $\Gamma$ is an erasable  set of axiomatic linkings $\linkings$.
\end{definition}
\begin{remark}
	The side conditions of our erasing steps ensures that the connective(s) removed by the step is ``ready'' to be sequentialized, similarly to how the readiness is defined in \cite{hughes:conflict} for the slice nets for $\MALL$.
\end{remark}

In \Cref{fig:erasingExample} we show an erasing path for the slice net from \Cref{fig:introPN} in the introduction.
\begin{figure}[t]
	\centering
	\adjustbox{max width=.8\textwidth}{$\begin{array}{c}
		\Set{
			\lgof[]{
				\lNu{\vx1}{
					\lNup {\vy1}{
						\rclr{\lsend ya}
						\lpar
						\lEx a{\rclr{\lrecv ya}}
						\lpar
						\begin{pmatrix}
							\gclr{\lsend x{\lab_1}} \lprec \lEx b{\bclr{\lrecv xb}}
							\\\lwith\\
							\gclr{\lsend x{\lab_2}} \lprec \vclr{\lsend xc}
						\end{pmatrix}
						\lpar
						\begin{pmatrix}
							\gclr{\lrecv x{\lab_1}} \lprec \bclr{\lsend xb}
							\\\lplus\\
							\gclr{\lrecv x{\lab_2}} \lprec \lEx c{\vclr{\lrecv xc}}
						\end{pmatrix}
					}
				}
			}{
				\linkings
			}
		}
	\\
		\derasto[(2\times\nuur)+(2\times\lpar)]
	\\
		\Set{
			\lgof[]{
				\rclr{\lsend ya}
				,
				\rclr{\lrecv ya}
				,
				\begin{pmatrix}
					\gclr{\lsend x{\lab_1}} \lprec \lEx b{\bclr{\lrecv xb}}
					\\\lwith\\
					\gclr{\lsend x{\lab_2}} \lprec \vclr{\lsend xc}
				\end{pmatrix}
				,
				\begin{pmatrix}
					\gclr{\lrecv x{\lab_1}} \lprec \bclr{\lsend xb}
					\\\lplus\\
					\gclr{\lrecv x{\lab_2}} \lprec \lEx c{\vclr{\lrecv xc}}
				\end{pmatrix}
			}{
				\linkings
			}
		}
	\\
		\derasto[\mixr]
	\\
		\SET{
			\lgof[]{
				\rclr{\lsend ya}
				,
				\rclr{\lrecv ya}
			}{
				\Set{\set{\rclr{\lsend ya},\rclr{\lrecv ya}}}
			}
			,
		\\
			\lgof[]{
				\begin{pmatrix}
					\gclr{\lsend x{\lab_1}} \lprec \lEx b{\bclr{\lrecv xb}}
					\\\lwith\\
					\gclr{\lsend x{\lab_2}} \lprec \vclr{\lsend xc}
				\end{pmatrix}
				,
				\begin{pmatrix}
					\gclr{\lrecv x{\lab_1}} \lprec \bclr{\lsend xb}
					\\\lplus\\
					\gclr{\lrecv x{\lab_2}} \lprec \lEx c{\vclr{\lrecv xc}}
				\end{pmatrix}
			}{
				\begin{Bmatrix}
					\begin{Bmatrix}
						\set{\bclr{\lsend xb},\bclr{\lrecv xb}},	\\
						\set{\gclr{\lsend x{\lab_1}},\gclr{\lrecv x{\lab_1}}}
					\end{Bmatrix}\;,
					\\
					\begin{Bmatrix}
						\set{\vclr{\lsend xc,\vclr{\lrecv xc}}},	\\
						\set{\gclr{\lsend x{\lab_2}},\gclr{\lrecv x{\lab_2}}}
					\end{Bmatrix}\;\;\;
				\end{Bmatrix}
			}
		}
	\\
		\derasto[\axrule]
	\\
		\Set{
			\lgof[]{
				\begin{pmatrix}
					\gclr{\lsend x{\lab_1}} \lprec \lEx b{\bclr{\lrecv xb}}
					\\\lwith\\
					\gclr{\lsend x{\lab_2}} \lprec \vclr{\lsend xc}
				\end{pmatrix}
				,
				\begin{pmatrix}
					\gclr{\lrecv x{\lab_1}} \lprec \bclr{\lsend xb}
					\\\lplus\\
					\gclr{\lrecv x{\lab_2}} \lprec \lEx c{\vclr{\lrecv xc}}
				\end{pmatrix}
			}{
				\begin{Bmatrix}
					\begin{Bmatrix}
						\set{\bclr{\lsend xb},\bclr{\lrecv xb}},	\\
						\set{\gclr{\lsend x{\lab_1}},\gclr{\lrecv x{\lab_1}}}
					\end{Bmatrix}\;,
					\\
					\begin{Bmatrix}
						\set{\vclr{\lsend xc,\vclr{\lrecv xc}}},	\\
						\set{\gclr{\lsend x{\lab_2}},\gclr{\lrecv x{\lab_2}}}
					\end{Bmatrix}\;\;\;
				\end{Bmatrix}
			}
		}
	\\
		\derasto[\lwith+(2\times \lplus)]
	\\
		\SET{
			\lgof[]{
				\gclr{\lsend x{\lab_1}} \lprec \lEx b{\bclr{\lrecv xb}}
				,
				\gclr{\lrecv x{\lab_1}} \lprec \bclr{\lsend xb}
			}{
				\begin{Bmatrix}
					\begin{Bmatrix}
						\set{\bclr{\lsend xb},\bclr{\lrecv xb}},	\\
						\set{\gclr{\lsend x{\lab_1}},\gclr{\lrecv x{\lab_1}}}
					\end{Bmatrix}
				\end{Bmatrix}
			}
			\;,
			\\
			\lgof[]{
				\gclr{\lsend x{\lab_2}} \lprec \vclr{\lsend xc}
				,
				\gclr{\lrecv x{\lab_2}} \lprec \lEx c{\vclr{\lrecv xc}}
			}{
				\begin{Bmatrix}
					\begin{Bmatrix}
						\set{\vclr{\lsend xc,\vclr{\lrecv xc}}},
						\\
						\set{\gclr{\lsend x{\lab_2}},\gclr{\lrecv x{\lab_2}}}
					\end{Bmatrix}
				\end{Bmatrix}
			}
		}
	\\
		\derasto[2\times\lprec]
	\\
		\SET{
			\lgof[]{
				\gclr{\lsend x{\lab_1}}
				,
				\gclr{\lrecv x{\lab_1}}
			}{
				\begin{Bmatrix}
					\begin{Bmatrix}
						\set{\bclr{\lsend xb},\bclr{\lrecv xb}}
					\end{Bmatrix}
				\end{Bmatrix}
			}
			\;,
			\\
			\lgof[]{
				\lEx b{\bclr{\lrecv xb}}
				,
				\bclr{\lsend xb}
			}{
				\begin{Bmatrix}
					\begin{Bmatrix}
						\set{\gclr{\lsend x{\lab_1}},\gclr{\lrecv x{\lab_1}}}
					\end{Bmatrix}
				\end{Bmatrix}
			}
			\;,
			\\
			\lgof[]{
				\gclr{\lsend x{\lab_2}}
				,
				\gclr{\lrecv x{\lab_2}}
			}{
				\begin{Bmatrix}
					\begin{Bmatrix}
						\set{\gclr{\lsend x{\lab_2}},\gclr{\lrecv x{\lab_2}}}
					\end{Bmatrix}
				\end{Bmatrix}
			}
			\;,
			\\
			\lgof[]{
				\vclr{\lsend xc}
				,
				\lEx c{\vclr{\lrecv xc}}
			}{
				\begin{Bmatrix}
					\begin{Bmatrix}
						\set{\vclr{\lsend xc,\vclr{\lrecv xc}}},
					\end{Bmatrix}
				\end{Bmatrix}
			}
		}
	\\
	\derasto[4\times\axrule]
	\\
		\SET{\set{\lgof[]\emptygraph\emptyset}}
	\end{array}$}
	\caption{A possible erasing path for the slice net from \Cref{fig:introPN}.}
	\label{fig:erasingExample}
\end{figure}

\begin{theorem}
	Let $\Gamma$ be a non-empty sequent.
	Then $\proves[\NL]\Gamma$ iff there is a slice net $\linkings$ on $\Gamma$.
\end{theorem}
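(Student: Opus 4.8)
The plan is to prove the two implications separately, leaning on the soundness-and-completeness result for conflict nets (Theorem~\ref{thm:CNsoundAndComplete}) for completeness and on a direct sequentialization from the erasing path for soundness. The bridge between the two syntaxes is the \emph{set of slices} of a \cotree: for a \cotree $\tau$ I would set $\mathsf{sl}(\la)=\set{\set\la}$ on a leaf $\la$, $\mathsf{sl}(\conc(\tau_1,\ldots,\tau_n))=\setof{L_1\cup\cdots\cup L_n}{L_i\in\mathsf{sl}(\tau_i)}$, and $\mathsf{sl}(\conf(\tau_1,\ldots,\tau_n))=\mathsf{sl}(\tau_1)\cup\cdots\cup\mathsf{sl}(\tau_n)$, carrying the witness map over link by link. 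This is exactly the passage from the conflict net to the slice net illustrated in \Cref{fig:introPN}.

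For completeness ($\proves[\NL]\Gamma$ implies that a slice net exists), given a derivation $\dD$ I would take the conflict net $\cnof\dD$ built in the proof of Theorem~\ref{thm:CNsoundAndComplete} and define the candidate slice net $\snof\dD\coloneqq\mathsf{sl}(\cnof\dD)$. It then remains to show that this set of axiomatic linkings is erasable, which I would do by reading the coalescence path of $\cnof\dD$ in reverse, so that each coalescence step, viewed as a decomposition, drives the erasing step of the same name: a leaf step ($\axrule$, $\lunit$) matches the homonymous erasing step; a unary concord step ($\lpar,\lplus,\exists,\forall,\naur,\naloadr,\napopr$) matches the corresponding erasing step applied to every slice at once; a binary concord step ($\ltens,\lprec,\precur,\mixr$) matches the splitting erasing step, where the $\disj$-condition on dualizers together with the fact that the coalesced sub-\cotrees share no atom yields the (dis)connectivity of the linked parse graph required by the side condition; and the conflict/$\lwith$ coalescence steps match the $\lwith$ erasing step, with the partition $\linkings_1\uplus\linkings_2$ being precisely the split of the slices induced by the two branches of the $\conf$-node. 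The coherence and $\join$ conditions on witnesses translate into the dualizer equalities demanded by the $\exists$ and $\lwith$ erasing side-conditions.

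For soundness (a slice net implies $\proves[\NL]\Gamma$) I would sequentialize directly from an erasing path, reading it bottom-up: every erasing step, reversed, is an instance of the sequent rule it is named after, mirroring what \Cref{fig:coacot} does for conflict nets. By induction on the length of the erasing path I would build a derivation $\dD_\linkings$ with conclusion $\sdash[\emptyset]\Gamma$. The side conditions are exactly what makes each reversed step a legal rule application: the connectivity conditions on the $\ltens$, $\lprec$ and $\precur$ steps ensure that the context and the store split as required by the two-premise rules; the $\mixr$ condition ensures that no link crosses the partition; the disjoint union $\linkings_1\uplus\linkings_2$ of the $\lwith$ step supplies the two premises of the $\lwith$-rule over a duplicated context; and the nominal steps $\naloadr$, $\napopr$ reconstruct the pairing between load- and pop-rules of \Cref{rem:nomlink}, the witness bookkeeping recovering the substitutions performed by $\exists$ and $\napopr$.

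The main obstacle I expect is the additive--multiplicative and nominal bookkeeping in the completeness direction, namely checking that $\mathsf{sl}(\cdot)$ really sends a \emph{coalescent} \cotree to an \emph{erasable} set of linkings. Concretely, after the conflict/$\lwith$ coalescence one must verify that the two families of leaves $\set{\la_i}$ and $\set{\lb_j}$ partition the slices exactly into the $\linkings_1,\linkings_2$ demanded by the $\lwith$ erasing step and its condition ``no vertex of $A_{1-i}$ occurs in $\eof{\linkings_i}$'', and that the witness maps remain coherent under $\join$. Because a single $\lwith$ erasing step collapses what coalescence treats as a $\conf$ step followed by a $\lwith$ step, and because the nominal links and their store discipline under $\naloadr$/$\napopr$ likewise pair one erasing step with a short block of coalescence steps, the simulation is block-to-block rather than one-to-one; managing this correspondence carefully is the crux. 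Once it is in place, combining it with Theorem~\ref{thm:CNsoundAndComplete} closes both directions.
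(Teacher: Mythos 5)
Your proposal is correct, and the soundness direction (erasing path $\Rightarrow$ derivation) is exactly the paper's argument: each erasing step, read bottom-up, is an instance of the homonymous sequent rule, by induction on the length of the erasing path. For completeness you take a genuinely different route. The paper defines $\snof{\dD}$ by a direct structural induction on the derivation (\Cref{fig:sliceDeseq}) and observes that erasability is preserved at every inductive step, so the first erasing step of $\lgof{\Gamma}{\snof\dD}$ is simply the one named after the last rule of $\dD$; no detour through \cotrees is needed. You instead factor through the conflict net $\cnof\dD$ of \Cref{thm:CNsoundAndComplete} and the slices operation $\mathsf{sl}(\cdot)$, which is precisely the alternative (non-polynomial) presentation of the translation that the paper records in \Cref{rem:confToSlice}, and you then establish erasability by a block-to-block simulation of the reversed coalescence path by erasing steps. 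What your route buys is an explicit bridge between the two proof-net syntaxes (conflict nets refine slice nets via $\mathsf{sl}$), at the cost of the extra bookkeeping you correctly identify as the crux: one erasing $\lwith$ step absorbs a $\conf$ step followed by a $\lwith$ coalescence step, the nominal load/pop discipline pairs an erasing step with a short block of coalescence steps, and the terminal $\axrule$/$\lunit$ erasing steps have no coalescence counterpart since the axiomatic links are the starting point of coalescence. The paper's direct induction avoids all of this, which is why it is the cleaner proof; your version is sound but does more work than necessary given that \Cref{fig:sliceDeseq} is available.
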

\begin{proof}
	In \Cref{fig:sliceDeseq} we provide translation from derivations to slice nets defined inductively on the structure of the derivation.
	The obtained set of axiomatic linkings is erasable by definition, since each inductive step of the translation preserve erasability.

	As for conflict nets, sequentialization for slice nets follows from the existence of a derivation constructed from a given erasing path, since each erasing step corresponds to an application of a sequent rule (with the same name).
\end{proof}
\begin{corollary}[Slice nets for $\MALL^1$]\label{cor:sliceForMALL1}
	A sequent $\Gamma$ is derivable in $\MALL^1$ if it admits an erasable slice net $\linktree$ whose erasing path only containing steps $\axrule$, $\lpar$, $\lplus$, $\exists$, $\forall$, $\ltens$, and $\lwith$.
\end{corollary}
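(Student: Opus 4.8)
The plan is to read the corollary off the rule-by-rule correspondence between erasing steps and sequent rules that underlies the preceding theorem. Recall from \Cref{eq:systems} that $\MALL^1$ (the system $\MALL_1$ there) is $\MLL \cup \set{\lplus,\lwith,\forall,\exists}$, so its rules are exactly $\axrule,\lpar,\ltens,\lplus,\lwith,\forall,\exists$; it contains none of $\lunit$, $\mixr$, $\lprec$, $\precur$, or the nominal rules. These are precisely the rules named by the erasing steps listed in the statement, and since $\MALL^1\subseteq\NL$ every derivation at issue is an $\NL$-derivation, so the apparatus of the theorem applies verbatim.

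For the stated implication, suppose $\Gamma$ admits an erasable slice net $\linkings$ whose erasing path $p$ uses only the listed steps. The sequentialization argument from the proof of the theorem builds a derivation by scanning $p$ and replacing each erasing step with an instance of the sequent rule of the same name. Because every step of $p$ lies in $\set{\axrule,\lpar,\lplus,\exists,\forall,\ltens,\lwith}$, the resulting derivation $\dD_{\linkings}$ uses only rules of $\MALL^1$, whence $\proves[\MALL^1]\Gamma$. In particular, the listed steps can erase only $\MALL^1$-connectives, so for $p$ to terminate at $\emptygraph$ the sequent $\Gamma$ must be built solely from $\MALL^1$-connectives and the linkings of $\linkings$ may contain only atom-pair links; consequently no store is ever populated (no $\naloadr$ fires) and no nominal link ever arises, so every judgement occurring has empty store, as required for an $\MALL^1$-derivation. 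The converse, yielding the full equivalence, is obtained dually through the translation $\snof{\cdot}$ of \Cref{fig:sliceDeseq}: an $\MALL^1$-derivation of $\Gamma$ is an $\NL$-derivation using only $\MALL^1$-rules, and its image under $\snof{\cdot}$ is a slice net whose induced erasing path follows the inductive structure of the derivation, hence uses only steps named by the rules of that derivation, i.e. only the listed steps.

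The only point that requires care — and the nearest thing to an obstacle — is the tightness of the correspondence in both directions: one must confirm that sequentializing a restricted path cannot silently force an $\MALL^1$-external rule (for instance, a $\mixr$ or $\precur$ demanded by a connectivity side condition), and dually that translating an $\MALL^1$-derivation never calls for an erasing step outside the list. Both facts follow from the observation, already used in the theorem, that in sequentialization an erasing step is introduced exactly when its homonymous rule fires, with matching side conditions; since the side conditions of the listed steps speak neither of stores nor of the connectives $\lprec$ and $\lunit$, no foreign rule is ever triggered, and the restricted correspondence is a faithful restriction of the one established in the theorem.
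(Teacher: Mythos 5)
Your argument is correct and matches the paper's intent: the corollary is an immediate restriction of the sequentialization direction of the theorem, obtained by observing that each erasing step in the listed set corresponds to the homonymous $\MALL^1$ rule, so a path using only those steps sequentializes to an $\MALL^1$-derivation. The paper states this as a direct corollary without further proof, and your additional checks (empty stores, no nominal links, no forced $\mixr$ or $\precur$) are consistent with, if more explicit than, what the paper relies on.
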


\begin{figure}[t]
	\centering\adjustbox{max width=\textwidth}{$
		\begin{array}{r@{=}l@{\qquad}r@{=}l}
			\snof{
				\vlderivation{\vlin{\lunit}{}{\sdash \lunit}{\vlhy{}}}
			}
			&
			\Set{\biggl\{\vpz1{\lunit}\biggr\}}
			\pzlinks{pz1.center/pz1.center/16/\la/red/}
		&
			%
			\snof{
				\vlinf{\axrule}{}{\sdash \lsend xy, \lrecv xy}{\vlhy{}}
			}
			&
			\Set{\biggl\{\vpz1{\lsend xy},\vpz2{ \lrecv xy}\biggr\}}
			\pzlinks{pz1/pz2/12/\la/red/}
		\\[24pt]
			%
			\snof{
				\vlderivation{
					\vlin{\nqpopr}{}{\sdash[\Sigma, \isna y] \Gamma, \lnNa xA}{
						\vlpr{\dD_1}{}{\sdash \Gamma, A\fsubst yx}
					}
				}
			}
			&
			\Set{
				\linking \cup
				\Set{\biggl\{\vx1\;,\;\vy1\biggr\}}
				\pzlinks{x1/y1/12/\la/red/}
				\mid
				\linking \in \snof{\dD_1}\fsubst yx
			}
		&
			%
			\snof{
				\vlderivation{
					\vlin{\exists}{}{\sdash \Gamma, \lEx xA}{
						\vlpr{\dD_1}{}{\sdash \Gamma, A\fsubst yx}
					}
				}
			}
			&
			\snof{\dD_1}\fsubst yx
		\\[24pt]
			%
			\snof{
				\vlderivation{\vlin{\rrule[1]}{}{\sdash \Gamma}{\vlpr{\dD_1}{}{\sdash \Gamma_1}}}
			}
			&
			\snof{\dD_1}
		&
			%
			\snof{
				\vlderivation{\vlin{\naloadr}{}{\sdash \Gamma,\lNa xA}{\vlpr{\dD_1}{}{\sdash[\sS,\isna x] \Gamma, A }}}
			}
			&
			\snof{\dD_1}
		\\[24pt]
			%
			\snof{
				\vlderivation{
					\vliin{\rrule[2]}{}{\sdash[\sS_1,\sS_2] \Gamma}{
						\vlpr{\dD_1}{}{\sdash[\sS_1] \Gamma_1}
					}{
						\vlpr{\dD_2}{}{\sdash[\sS_2] \Gamma_2}
					}
				}
			}
			&
			\set{\linking_1\cup\linking_2\mid
				\linking_1\in\snof{\dD_1} ,
				\linking_2\in\snof{\dD_2}
			}
		&
			%
			\snof{
				\vlderivation{
					\vliin{\lwith}{}{\sdash \Gamma}{
						\vlpr{\dD_1}{}{\sdash \Gamma_1}
					}{
						\vlpr{\dD_2}{}{\sdash \Gamma_2}
					}
				}
			}
			&
			\snof{\dD_1} \cup \snof{\dD_2}
		\\[24pt]
		\multicolumn{4}{c}{
			\mbox{with }
			\rrule[1]\in\set{\lpar,\oplus,\forall,\nuurule,\yurule}
			\quand
			\rrule[2]\in\set{\ltens,\lprec,\precur,\mixr}
			\quand
			\nabla\in\set{\lnewsymb,\lyasymb}
		}
	\end{array}$}
	\caption{
		Translation of a derivation in $\NL$ into a slice net,
		where $\cnof{\dD}\fsubst yx$ is the set of linkings obtained by applying the substitution $\fsubst yx$ to all its links in them, and by letting the dualizers $\dualizerof[\la]$ in $\snof{\dD}$ being $\dualizerof[\la]\fsubst yx$ in $\snof{\dD}\fsubst yx$.
	}
	\label{fig:sliceDeseq}
\end{figure}

\begin{corollary}
	The erasing criterion is polynomial in the size of the slice net.
\end{corollary}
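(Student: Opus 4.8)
The plan is to adapt the argument of \Cref{prop:coal} (which in turn follows the $\MALL$ case of \cite{hei:hug:conflict,hughes:conflict}) to the erasing steps of \Cref{fig:erasing}. The input is a slice net $\linkings$ on $\Gamma$, and its size $\sizeof\linkings$ is the number of vertices of the linked parse graph together with the total number of links across all linkings. First I would bound the length of any erasing path. To this end I would track, on each configuration (a set of linked parse graphs), the measure $\Phi = \sum_{G}\sizeof{\vof G}\cdot\sizeof{\linkings_G}$, where the sum ranges over the graphs $G$ in the configuration and $\linkings_G$ is the set of linkings carried by $G$. A direct case analysis on \Cref{fig:erasing} shows that every step deleting a connective, quantifier, or atom occurrence strictly decreases $\Phi$: for instance a $\lwith$-step replaces a graph of weight $(\sizeof\Gamma+\sizeof{A_1}+\sizeof{A_2}+1)(\sizeof{\linkings_1}+\sizeof{\linkings_2})$ by two graphs of combined weight $(\sizeof\Gamma+\sizeof{A_1})\sizeof{\linkings_1}+(\sizeof\Gamma+\sizeof{A_2})\sizeof{\linkings_2}$, a strict decrease despite the duplication of $\Gamma$; the multiplicative and additive splits ($\ltens$, $\lprec$, $\precur$, $\lplus$) and the quantifier and axiom steps behave analogously. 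The only step not removing an occurrence is $\mixr$, for which $\Phi$ is non-increasing while the number of graph components strictly increases; ordering configurations lexicographically by $\Phi$ and then by the number of components thus makes every step strictly decreasing. Since $\Phi$ is bounded by $\sizeof\linkings^2$ and the number of components by $\sizeof\linkings$, every erasing path has length $\mathcal O(\sizeof\linkings^2)$.

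Next I would bound the cost of testing and applying a single step. The side conditions of the splitting steps $\ltens$, $\lprec$, $\precur$, and $\mixr$ are connectivity tests on the linked parse graph, decidable in polynomial time by standard graph search, and the new $\lprec$ and $\precur$ steps are no costlier than $\ltens$. The steps $\lpar$, $\lplus$, and $\lwith$ only require inspecting occurrences of vertices in the linkings and partitioning the set of linkings, while the restriction operations $\linkings\restr{\cdot}$ are computed in time linear in $\sizeof\linkings$. The steps for quantifiers and nominal quantifiers ($\exists$, $\forall$, $\naur$, $\naloadr$, $\napopr$) additionally manipulate the dualizers $\dualizerof^{\linkings}$, but each operation on dualizers (evaluation, the removal $\dualizerof[\la]\fsubminus{x}$, and the renaming $\fsubst{\set{\isna x,y}}{\set{x,y}}$) is linear in the size of a dualizer, which is in turn linear in the size of the formula, as recalled for unification in \cite{mar:mon:unification}.

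The main obstacle is to ensure that erasability can be decided \emph{without} searching exponentially many candidate erasing paths, i.e. that applying steps greedily until none remains correctly decides whether the empty configuration is reachable. Here I would establish the analogue of the completion property used for $\MALL$ in \cite{hughes:conflict}: the erasing relation is locally confluent, since whenever two distinct steps are applicable they act either on disjoint occurrences or on the components of one graph, and their side conditions (connectivity, freshness, occurrence in the linkings, coherence of dualizers) are preserved under applying the other step; they therefore commute. Combined with the termination from the first paragraph, Newman's lemma yields confluence, so every configuration has a unique normal form and $\linkings$ is erasable exactly when this common normal form is the empty configuration.

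Putting these together, a greedy procedure performs $\mathcal O(\sizeof\linkings^2)$ steps, each at polynomial cost dominated by the connectivity tests and the dualizer operations, with no backtracking; hence erasability is decided in time polynomial in $\sizeof\linkings$. Tracking the exponents exactly as in \Cref{prop:coal} gives a bound of the shape $\mathcal O(n^5)$ with $n=\sizeof\linkings$, matching the $\MALL$ criterion of \cite{hughes:conflict}. The delicate point throughout is the local-confluence argument, which must cover the interaction of the new $\lprec$-steps and of the nominal steps $\naloadr$ and $\napopr$ (carrying nominal variables through the store) with the additive $\lwith$-splitting; this is precisely where the side conditions designed in \Cref{fig:erasing} do the work.
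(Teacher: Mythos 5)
Your proof is correct and follows essentially the same route as the paper's one-sentence argument (bound the length of an erasing path, then the cost of checking and applying each step), with the local-confluence point you develop corresponding exactly to the paper's Remark~\ref{rem:confEras}, which is what licenses the greedy, backtracking-free procedure. Your potential function $\Phi$ is in fact more careful than the paper's stated bound of $\sizeof{\Gamma}$ on the path length, which does not account for the duplication of the context by the $\lwith$ and $\mixr$ steps; that refinement is what makes the bound genuinely polynomial in the size of the slice net (including the set of linkings) rather than in the size of the sequent alone.
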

\begin{proof}
	The length of a successful erasing path is bound by the size of $\Gamma$, and checking and applying each erasing step can be performed in polinomial time  $\gof\linkings$
\end{proof}

\begin{remark}\label{rem:confEras}
	By case analysis, it is easy to see that erasing steps are locally confluent.
\end{remark}

\begin{remark}\label{rem:confToSlice}
	The proof translation from derivations to slice nets could be equivalentely defined by letting
	$\linkings(\dD)$ to be the set of axiomatic linkings on $\Gamma$ defined from the conflict net $\cnof\dD$ as follows:
	$$
	\snof{\linktree}
	=
	\begin{cases}
		\set{\la }
		&
		\mbox{if $\la$ is an axiom link}
		\\
		X_1\cup \cdots \cup X_n
		&
		\mbox{if } \dD=\conf\left(X_1,\ldots, X_n\right)
		\\
		\Set{\bigcup\limits_{x_1 \in \snof{X_1}} \cdots \bigcup\limits_{x_n\in\snof{X_n}}
		\left(x_1\cup \cdots \cup x_n\right)}
		&
		\mbox{if } \dD=\conc\left(X_1,\ldots, X_n\right)
	\end{cases}$$
	In this translation, it is clear that each conflict node in $\cnof\dD$ multiplies the elements in the set of linkings, leading to a non-polinomial time transation.
\end{remark}

\section{Canonicity results}\label{sec:canon}

\begin{figure}[t]
	\adjustbox{max width=\textwidth}{$\begin{array}{c}
		\mbox{Local rule permutations}
	\\\\
		\vlderivation{
			\vliin{\rclr{\brrule[1]}}{}{ \sdash[\sS_1, \sS_2,\sS_3] \Gamma_1,\Gamma_2,\Gamma_3,\rclr{\Theta_1}, {\Theta_2}}{
				\vlhy{\sdash[\sS_1]\Gamma_1, \rclr{\Delta_1}}
			}{
				\vliin{\bclr{\brrule[2]}}{}{
					\sdash[\sS_2,\sS_3] \Gamma_2,\Gamma_3,\rclr{\Delta_2},\bclr{\Theta_2}
				}{
					\vlhy{ \sdash[\sS_2] \Gamma_2,{\Delta_2},\bclr{\Delta_3}}}{\vlhy{ \sdash[\sS_3] \Gamma_3, \bclr{\Delta_4}}
				}
			}
		}
		\peq
		\vlderivation{
			\vliin{\bclr{\brrule[2]}}{}{
				\sdash[\sS_1, \sS_2,\sS_3]  \Gamma_1, \Gamma_2, \Gamma_3,{\Theta_1},\bclr{\Theta_2}
			}{
				\vliin{\rclr{\brrule[1]}}{}{
					\sdash[\sS_1,\sS_2] \Gamma_1,\Gamma_2, \rclr{\Theta_1}, \bclr{\Delta_2}
				}{
					\vlhy{\sdash[\sS_1] \Gamma_1,  \rclr{\Delta_1}}
				}{
					\vlhy{\sdash[\sS_3] \Gamma_2, \rclr{\Delta_2},{\Delta_3}}
				}
			}{\vlhy{\sdash[\sS_3] \Gamma_3, \bclr{\Delta_4}}}
		}
	\\\\
		\vlderivation{
			\vlin{\bclr{\urrule[2]}}{}{
				\sdash[\sS_1,\sS_2]\Gamma, {\Theta_1},\bclr{\Theta_2}}{
				\vlin{\rclr{\urrule[1]}}{}{
					\sdash[\sS_1,\sS_2']\Gamma, \rclr{\Theta_1}, \bclr{\Delta_2}
				}{\vlhy{\sdash[\sS_1',\sS_2']\Gamma, \rclr{\Delta_1},{\Delta_2}}}}
		}
		\peq
		\vlderivation{
			\vlin{\rclr{\urrule[1]}}{}{
				\sdash[\sS_1,\sS_2]\Gamma, \rclr{\Theta_1}, {\Theta_2}
			}{
				\vlin{\bclr{\urrule[2]}}{}{
					\sdash[\sS'_1,\sS_2]\Gamma,  \rclr{\Delta_1},\bclr{\Theta_2}
				}{
					\vlhy{\sdash[\sS'_1,\sS'_2]\Gamma, {\Delta_1},\bclr{\Delta_2}}
				}
			}
		}
	\qquad
		\vlderivation{
			\vlin{\rclr{\urrule}}{}{\sdash[\sS_1, \sS_2]\Gamma_1, \Gamma_2,\rclr{\Theta_1},\bclr{\Theta_2}}{
				\vliin{\bclr{\brrule}}{}{\sdash[\sS_1', \sS_2] \Gamma_1, \Gamma_2,\rclr{\Delta_1},\bclr{\Theta_2}
				}{
					\vlhy{ \sdash[\sS_1']\Gamma_1, {\Delta_1},\bclr{\Delta_2}}
				}{
					\vlhy{\sdash[\sS_2] \Gamma_2, \bclr{\Delta_3}}
				}
			}
		}
		\peq
		\vlderivation{
			\vliin{\bclr{\brrule}}{}{
				\sdash[\sS_1, \sS_2] \Gamma_1, \Gamma_2,{\Theta_1},\bclr{\Theta_2}
			}{
				\vlin{\rclr{\urrule}}{}{
					\sdash[\sS_1]\Gamma,\rclr{\Theta_1}, \bclr{\Delta_2}
				}{
					\vlhy{\sdash[\sS_1']\Gamma_1, \rclr{\Delta_1}, {\Delta_2}}
				}
			}{
				\vlhy{\sdash[\sS_2] \Gamma_2, \bclr{\Delta_3}}
			}
		}
		\\\\
		\vlderivation{
			\vliin{\rclr{\lwith}}{}{
				\sdash \Gamma,\rclr{A\lwith B}, C\lwith D
			}{
				\vliin{\bclr{\lwith}}{}{\sdash \Gamma,\rclr{A},\bclr{C\lwith D}}{
					\vlhy{\sdash \Gamma,A,\bclr{C}}
				}{
					\vlhy{\sdash \Gamma,A,\bclr{D}}
				}
			}{
				\vliin{\bclr{\lwith}}{}{\sdash \Gamma,\rclr{B},\bclr{C\lwith D}}{
					\vlhy{\sdash \Gamma,B,\bclr{C}}
				}{
					\vlhy{\sdash \Gamma,B,\bclr{D}}
				}
			}
		}
		\peq
		\vlderivation{
			\vliin{\bclr{\with}}{}{
				\sdash \Gamma, A\lwith B, \bclr{C\lwith D}
			}{
				\vliin{\rclr{\lwith}}{}{
					\sdash \Gamma, \rclr{A\lwith B}, \bclr{C},
				}{
					\vlhy{\sdash \Gamma, \rclr{B},C}
				}{
					\vlhy{\sdash \Gamma, \rclr{A},C}
				}
			}{
				\vliin{\rclr{\lwith}}{}{
					\sdash \Gamma, \rclr{A\lwith B}, \bclr{D},
				}{
					\vlhy{\sdash \Gamma, \rclr{B},D}
				}{
					\vlhy{\sdash \Gamma, \rclr{A},D}
				}
			}
		}
	\\\\
		\vlderivation{
			\vlin{\bclr{\alpha}}{}{
				\sdash[\sS] \Gamma,A\lwith B,\bclr{\Theta}
			}{
				\vliin{\rclr{\lwith}}{}{
					\sdash[\sS'] \Gamma, \rclr{A\lwith B}, \bclr{\Delta}
				}{
					\vlhy{\sdash[\sS'] \Gamma, \rclr{B},\Delta}
				}{
					\vlhy{\sdash[\sS'] \Gamma, \rclr{A},\Delta}
				}
			}
		}
		\peq
		\vlderivation{
			\vliin{\rclr{\lwith}}{}{
				\sdash \Gamma,\rclr{A\lwith B}, \Theta
			}{
				\vlin{\bclr{\alpha}}{}{
					\sdash \Gamma,\rclr{A},\bclr{\Theta}
				}{
					\vlhy{\sdash[\sS'] \Gamma,A,\bclr{\Delta}}
				}
			}{
				\vlin{\bclr{\alpha}}{}{
					\sdash \Gamma,\rclr{B},\bclr{\Theta}
				}{
					\vlhy{\sdash[\sS'] \Gamma,B,\bclr{\Delta}}
				}
			}
		}
	\qquad
		\boxed{
			\vlderivation{
				\vlin{\nuloadr}{}{
					\sdash \Gamma, \rclr{\lNu xA}, \lYa xB
				}{
					\vlin{\nupopr}{}{
						\sdash[\sS,\isnu x] \Gamma, \rclr A, \bclr{\lYa xB}
					}{\vlhy{\sdash \Gamma, A, \bclr B}}
				}
			}
			\peq
			\vlderivation{
				\vlin{\yaloadr}{}{
					\sdash \Gamma, \lNu xA, \bclr{\lYa xB}
				}{
					\vlin{\yapopr}{}{
						\sdash[\sS,\isya x] \Gamma, \rclr{\lNu xA}, \bclr B
					}{\vlhy{\sdash \Gamma, \rclr A, B}}
				}
			}
		}
	\\\\\hline\\
		\mbox{Non-local rule permutations}
	\\\\
		\vlderivation{
			\vliin{\rclr{\beta}}{}{
				\sdash[\sS_1,\sS_2] \rclr{\Gamma}, C\lwith D
			}{
				\vlpr{\dD}{}{\sdash[\sS_1] \rclr{\Gamma_1}}
			}{
				\vliin{\bclr{\lwith}}{}{
					\sdash[\sS_2] \rclr{\Gamma_2},\bclr{C\lwith D}
				}{
					\vlhy{\sdash[\sS_2] \Gamma_2,\bclr{C}}
				}{
					\vlhy{\sdash[\sS_2] \Gamma_2,\bclr{D}}
				}
			}
		}
		\quad\speq\quad
		\vlderivation{
			\vliin{\bclr{\with}}{}{
				\sdash[\sS_1,\sS_2] \Gamma,\bclr{C\lwith D}
			}{
				\vliin{\rclr{\beta}}{}{
					\sdash[\sS_1,\sS_2] \rclr{\Gamma}, \bclr{C},
				}{
					\vlpr{\dD}{}{\sdash[\sS_1] \rclr{\Gamma_1}}
				}{
					\vlhy{\sdash[\sS_2] \rclr{\Gamma_2}, C}
				}
			}{
				\vliin{\rclr{\beta}}{}{
					\sdash[\sS_1,\sS_2] \rclr{\Gamma}, \bclr{D},
				}{
					\vlpr{\dD}{}{\sdash[\sS_1] \rclr{\Gamma_1}}
				}{
					\vlhy{\sdash[\sS_2] \rclr{\Gamma_2}, D}
				}
			}
		}
	\\\\\hline\hline\\
		\urrule,\urrule[1],\urrule[2]\in\set{\lpar,\lplus,\exists,\forall,\nuurule,\yurule,\nupopr,\yapopr,\nuloadr,\yaloadr}
	\quad
		\brrule,\brrule[1],\brrule[2]\in\set{\ltens,\lprec,\precur,\mixr}
	\end{array}$}
	\caption{Rule permutations in $\NL$.}
	\label{fig:permutations1}
\end{figure}

In this section we prove that conflict nets and slice nets for $\NL$ are proof systems with two distinct canonicity properties called \emph{local} and \emph{strong} canonicity, and they both identify derivations modulo what we refer to as \emph{witness renaming}.

We first introduce three notions of equivalence for derivations in $\NL$.
\begin{definition}
	We call the variable introduced during the proof search by a quantifier rule  the \defn{active variable} of that (occurrence of) rule.
	The active variable of an existential \resp{universal} quantifier may also be called its \defn{witness}
	\footnote{
		In the general setting of first-order logic, the witness of an existential quantifier could be any term of the language. 
		However, in $\PIL$ witnesses can only be variables because of the simple structure of terms.
	} 
	\resp{\defn{eigenvariable}}, while the active formula of a nominal quantifier may also be called a \defn{fresh name}.
	Two derivations $\dD_1$ and $\dD_2$ in $\NL$ are:
	\begin{itemize}
		\item \defn{equivalent modulo active variables} (denoted $\dD_1\weq \dD_2$) if it is possible to transform $\dD_1$ into $\dD_2$ by changing the active variables of the quantifier rules (and propagating the changes upwards in the derivation);
		\item \defn{equivalent modulo rule permutations} (denoted $\dD_1\speq \dD_2$) if it is possible to transform $\dD_1$ into $\dD_2$ using all transformations in \Cref{fig:permutations1};
		\item \defn{equivalent modulo local rule permutations} (denoted $\dD_1\peq \dD_2$) if it is possible to transform $\dD_1$ into $\dD_2$ using the local rule permutations in \Cref{fig:permutations1}.
	\end{itemize}
	Moreover, we write $\dD_1\spweq \dD_2$ \resp{$\dD_1\pweq \dD_2$} if they are equivalent modulo rule permutations \resp{local rule permutations} and active variables, that is, if there are derivations $\dD_1'$ and $\dD_2''$ such that $\dD_1\weq \dD_1'\speq \dD_2'\weq \dD_2$ \resp{$\dD_1\weq \dD_1'\peq \dD_2'\weq \dD_2$}.
\end{definition}

\begin{remark}\label{rem:proofEQ}
	In \Cref{eq:EXweq1}, we show three derivations which are equivalent modulo active variables.
	\begin{equation}\label{eq:EXweq1}
		\hfill
		\vlderivation{
			\vlin{\nuloadr}{}{
				\vdash \lNu x \lsend xa, \lYa y \lrecv ya
			}{
				\vlin{\nupopr}{}{
					\sdash[\isnu x] \lsend xa, \lYa y \lrecv ya
				}{
					\vlin{\axrule}{}{
						\vdash \lsend xa, \lrecv xa
					}{\vlhy{}}
				}
			}
		}
		\;\weq\;
		\vlderivation{
			\vlin{\nuloadr}{}{
				\vdash \lNu x \lsend xa, \lYa y \lrecv ya
			}{
				\vlin{\nupopr}{}{
					\sdash[\isnu z] \lsend za, \lYa y \lrecv ya
				}{
					\vlin{\axrule}{}{
						\vdash \lsend za, \lrecv za
					}{\vlhy{}}
				}
			}
		}
		\;\weq\;
		\vlderivation{
			\vlin{\nuloadr}{}{
				\vdash \lNu x \lsend xa, \lYa y \lrecv ya
			}{
				\vlin{\nupopr}{}{
					\sdash[\isnu y] \lsend ya, \lYa y \lrecv ya
				}{
					\vlin{\axrule}{}{
						\vdash \lsend ya, \lrecv ya
					}{\vlhy{}}
				}
			}
		}
		\hfill
	\end{equation}
	In \Cref{eq:EXweq2} we show two existential quantifier rules select two distinct witnesses $x$ and $z$, but the pair of atoms linked by an axiom rule is the same.
	\begin{equation}\label{eq:EXweq2}
		\hfill
		\vlderivation{
			\vliq{\exists}{}{
				\vdash \lEx x \lsend xa, \lEx y \lrecv ya
			}{
				\vlin{\axrule}{}{\vdash \lsend xa, \lrecv xa}{\vlhy{}}
			}
		}
		\quad\weq\quad
		\vlderivation{
			\vliq{\exists}{}{
				\vdash \lEx x \lsend xa, \lEx y \lrecv ya
			}{
				\vlin{\axrule}{}{\vdash \lsend ya, \lrecv ya}{\vlhy{}}
			}
		}
		\hfill
	\end{equation}
	We could argue that these two derivations should be not identified because the choice of the witness is part of the information of the proof.
	In a boarder sense, it may be useful to not identify a proof using a very elementary witness with a proof using a quite complex one.
	However, because of the quite limited syntax of atoms (terms) in $\PIL$, witness for quanfitiers can only be varibles.
	Therefore, as soon as the choice of witness do not change the pairs of atoms which are mated by the $\axrule$-rules in the derivation, such a choice can be considered irrelevant, especially if the choice of the witness of a $\napopr$ depends on varible previously stored by a $\naloadr$, which is arbitrary because of $\alpha$-equivalence.
	Note that the two sub-derivations of the $\weq$-equivalent derivations in \Cref{eq:EXweq2} made only of the $\axrule$-rules are not $\weq$-equivalent.
	That is
	$
	\vlderivation{
		\vlin{\axrule}{}{\vdash \lsend xa, \lrecv xa}{\vlhy{}}
	}
	\;\not\weq\;
	\vlderivation{
		\vlin{\axrule}{}{\vdash \lsend ya, \lrecv ya}{\vlhy{}}
	}
	$.

	However, the choice of active variables may change the pair of atoms linked by the $\axrule$-rules.
	For an example, see \Cref{eq:EXweq3} below, where we show two non $\weq$-equivalent derivations in which we trace the occurrences of atoms in the derivation to show how the choice of the active variables changes the pairs of atoms linked by the $\axrule$-rules.
	\begin{equation}\label{eq:EXweq3}
		\hfill
		\vlderivation{
			\vliq{\exists}{}{
				\vdash \vpz{21}{\lsend xa},\vpz{22}{\lsend xb} ,\lEx z \vpz{23}{\lrecv xz}, \lEx z \vpz{24}{\lrecv xz}
			}{
				\vliin{\mixr}{}{
					\vdash \vpz{11}{\lsend xa},\vpz{12}{\lsend xb} , \vpz{13}{\lrecv xa}, \vpz{14}{\lrecv xb}
				}{
					\vlin{\axrule}{}{\vdash \vpz1{\lsend xa}, \vpz2{\lrecv xa}}{\vlhy{}}
				}{
					\vlin{\axrule}{}{\vdash \vpz3{\lsend xb}, \vpz4{\lrecv xb}}{\vlhy{}}
				}
			}
		}
		\pzflows{pz1.center/pz2.center/8/magenta}
		\pzflows{pz3.center/pz4.center/8/magenta}
		\flowedges{%
			pz1.center/pz11.center,%
			pz11.center/pz21.center,%
			pz2.center/pz13.center,%
			pz13.center/pz23.center,%
			pz3.center/pz12.center,%
			pz12.center/pz22.center,%
			pz4.center/pz14.center,%
			pz14.center/pz24.center%
		}
		\quad\not\weq\quad
		\vlderivation{
			\vliq{\exists}{}{
				\vdash \vpz{21}{\lsend xa},\vpz{22}{\lsend xb} ,\lEx z \vpz{23}{\lrecv xz}, \lEx z \vpz{24}{\lrecv xz}
			}{
				\vliin{\mixr}{}{
					\vdash \vpz{11}{\lsend xa},\vpz{12}{\lsend xb} , \vpz{13}{\lrecv xb}, \vpz{14}{\lrecv xa}
				}{
					\vlin{\axrule}{}{\vdash \vpz1{\lsend xa}, \vpz2{\lrecv xa}}{\vlhy{}}
				}{
					\vlin{\axrule}{}{\vdash \vpz3{\lsend xb}, \vpz4{\lrecv xb}}{\vlhy{}}
				}
			}
		}
		\pzflows{pz1.center/pz2.center/8/magenta}
		\pzflows{pz3.center/pz4.center/8/magenta}
		\flowedges{%
			pz1.center/pz11.center,%
			pz11.center/pz21.center,%
			pz2.center/pz14.center,%
			pz14.center/pz24.center,%
			pz3.center/pz12.center,%
			pz12.center/pz22.center,%
			pz4.center/pz13.center,%
			pz13.center/pz23.center%
		}
		\hfill
	\end{equation}
\end{remark}

We conclude by proving our the canonicity results.

\begin{definition}
	Two conflict nets $\linktree_1$ and $\linktree_2$ are \defn{the same} (denoted $\linktree_1 = \linktree_2$) if they are isomorphic \cotree, that is, if there is a bijection between the nodes of $\linktree_1$ and $\linktree_2$ that preserves the structure of the tree.

	For slice nets $\linkings_1$ and $\linkings_2$ are \defn{the same} (denoted $\linkings_1 = \linkings_2$) if they are the same set of linkings.
\end{definition}

\begin{restatable}{lemma}{confCoal}\label{lem:confCoal}
	Let $\linktree$ be a conflict net on $\Gamma$ such that $\linktree$ sequetializes to $\dD_{\linktree}$.
	If $\linktree \coalto^* \linktree'$,
	then $\linktree'$ sequentializes to a $\dD_{\linktree'}$ such that $\dD_{\linktree'}\pweq \dD_{\linktree}$.
\end{restatable}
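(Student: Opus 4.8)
The plan is to read \Cref{thm:CNsoundAndComplete} backwards: sequentializing a conflict net is nothing but fixing a \emph{coalescence path} for it and reading off, via \Cref{fig:coacot}, one sequent rule per coalescence step (bottom-up in the derivation, leaves-to-root in the \cotree). Thus $\dD_{\linktree}$ is the derivation determined by some fixed path $P\colon \linktree \coalto^* (\vdash\Gamma)$, and the derivation obtained by sequentializing $\linktree'$ is exactly the output of completing the given path $\linktree\coalto^*\linktree'$ to the single leaf. Consequently the statement is an instance of the following \emph{path-independence} property, which I would isolate and prove: any two coalescence paths of a conflict net from its axiomatic leaves to the single leaf produce $\pweq$-equivalent derivations. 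I would establish this by the standard recipe for confluence modulo an equivalence, namely termination of coalescence together with local confluence up to $\pweq$.

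Termination I would obtain exactly as for $\MALL$ in \cite{hei:hug:conflict}: each unary step ($\lpar$, $\lplus$, $\exists$, $\forall$, $\naloadr$, $\naur$) strictly decreases the total size of the links sitting at the leaves, while each binary step ($\ltens$, $\lprec$, $\precur$, $\mixr$, $\napopr$) and each $\lwith$ step strictly decreases the number of leaves, and the $\conf$-step only reshapes conflict nodes finitely often before enabling a $\lwith$. The only genuinely new steps are those for $\lprec$ and $\precur$, which affect this measure exactly like $\ltens$ and $\mixr$, so the argument carries over unchanged.

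The core is local confluence up to $\pweq$: given a reachable \cotree $T$ with leaves labeled by derivations and two one-step reducts $T\coalto T_1$ and $T\coalto T_2$, I would show they can be joined to a common $U$ (confluence of the underlying \cotree rewriting, itself immediate from termination and a routine peak analysis) in such a way that the derivations eventually built along $T\coalto T_1\coalto^* U$ and $T\coalto T_2\coalto^* U$ coincide up to $\pweq$. When the two steps act on disjoint sub-\cotrees they commute, and in the built derivation they become two rules occurring at incomparable positions; transposing them is precisely one of the \emph{local} permutations of \Cref{fig:permutations1} (an $\urrule/\urrule$, $\urrule/\brrule$, $\brrule/\brrule$, $\lwith/\lwith$ or $\lwith/\urrule$ swap), or, for the $\naloadr$--$\nupopr$ pairing, the boxed permutation of that figure. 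Steps acting on quantifiers only modify dualizers through the associative--commutative join $\join$ and the disjoint-domain-commutative $\duasum$, so the two orders reach the same final witness map; any difference in the concrete active variables chosen along the way is absorbed by the active-variable equivalence $\weq$ contained in $\pweq$.

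The delicate point, and the step I expect to be the main obstacle, is to check that no peak ever forces the \emph{non-local} $\brrule$-vs-$\lwith$ permutation of \Cref{fig:permutations1}, since $\pweq$ does not license it. Here I would use that the relative nesting of $\conc$- and $\conf$-nodes in the \emph{fixed} \cotree already pins down, for every multiplicative rule and every $\lwith$, which of the two lies below the other in any sequentialization; hence two such rules are never presented as an independently reorderable peak. This is reinforced by the side conditions of the coalescence steps (the ``only if no other rule can be applied'' guards on $\mixr$ and $\precur$, and the requirement that a conflict be split into a binary $\conf$-node before a $\lwith$ step can fire), which prevent a multiplicative and an additive step from becoming simultaneously available across the same $\lwith$. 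Granting local confluence up to $\pweq$, a Newman/Huet-style confluence-modulo-$\pweq$ argument (available because coalescence terminates) yields a unique normal form up to $\pweq$; as this normal form is the single leaf carrying the complete derivation, all complete paths of $\linktree$ sequentialize to $\pweq$-equivalent derivations, which is path-independence and hence the lemma.
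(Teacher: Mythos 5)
Your proposal follows essentially the same route as the paper: both reduce the lemma to convergence of the critical pairs of coalescence steps on deductive \cotrees, with each joining witnessed by a local rule permutation from \Cref{fig:permutations1} (plus the active-variable equivalence absorbing the dualizer/witness bookkeeping), and both observe that the fixed $\conc$/$\conf$ structure of the \cotree rules out the non-local $\ltens$-vs-$\lwith$ peak. The only point you underplay is that the genuinely overlapping peaks --- $\mixr/\mixr$, $\precur/\precur$ and $\conf/\conf$ --- do not close in a single step and need the dedicated arguments the paper gives in its appendix (finding a suitable continuation of the coalescence path, resp.\ the induction of \cite{hei:hug:str:ALL1} relying on associativity of $\join$), but this is a refinement within your plan rather than a different approach.
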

\begin{proof}
	It suffices to check that each critical pair of coalescence steps on deductive
	\cotrees converges.
	For this, we should consider the derivations labeling each link modulo the proof equivalence generated by local rule permutations \Cref{fig:permutations1}.

	The most convoluted case is the one for the pair $\conf / \conf$, for which we use the same argument in \cite{hei:hug:str:ALL1}, remarking that the base case of induction presents no problems in our setting (where we have the additional information of the dualizers) because of the associativity of the join operator on dualizers.
	Note that critical pairs with non-trivial confluence (i.e., non-local) are the ones where the rule $\lwith$ interacts with quantifiers, and the ones of the form $\mixr / \mixr$ and $\precur/\precur$.
	Details are provided in \Cref{sec:succonet}.
\end{proof}

\begin{theorem}\label{thm:PNareCanon}
	Let $\dD$ and $\dD'$ be two derivations.
	Then, the following hold:
	\begin{enumerate}
		\item $\dD\pweq\dD'$ iff $\cnof{\dD}=\cnof{\dD'}$.
		\item $\dD\spweq\dD'$ iff $\snof{\dD}=\snof{\dD'}$.
	\end{enumerate}
\end{theorem}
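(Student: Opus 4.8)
The plan is to prove each biconditional by its two implications, handling part~1 directly and reducing part~2 to it. For the forward direction of part~1 I would show that every generator of $\pweq$ leaves $\cnof{\cdot}$ unchanged. Since $\pweq$ is generated by $\weq$ together with the local permutations of \Cref{fig:permutations1}, two checks suffice. First, a change of active variables leaves $\cnof{\dD}$ fixed: as stressed in \Cref{rem:proofEQ}, a witness renaming that altered the pairings realized by the $\axrule$-rules would not yield a valid derivation, hence is not a $\weq$-step, and the translation of \Cref{fig:deseq} normalizes witnesses through the substitutions $\fsubst yx$, so the axiomatic leaves and the $\conc$/$\conf$ tree structure coincide. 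Second, each local permutation induces the identity on conflict nets, by a case analysis over the generators: two consecutive binary multiplicative rules both yield $\conc$-nodes, two $\lwith$-rules both yield $\conf$-nodes, and since $\canon{\cdot}$ merges adjacent nodes of the same kind and these nodes are commutative and associative, permuting such rules produces isomorphic canonical \cotrees; the unary rules are absorbed by the translation, and the mixed $\conc$/$\conf$ permutations and the $\nuloadr$/$\yaloadr$ swap (which only reorders the creation of a fixed nominal link under commutative $\conc$-nodes) are dispatched the same way.

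For the backward direction of part~1 I would argue by sequentialization. Reading the rules of $\dD$ top-down yields a coalescence path of $\cnof{\dD}$, and sequentializing along that very path recovers $\dD$ up to the choice of witnesses, that is, up to $\weq$; thus $\dD$ is, modulo $\weq$, a sequentialization of $\cnof{\dD}$, and likewise $\dD'$ of $\cnof{\dD'}$. \Cref{lem:confCoal} guarantees that any two sequentializations of one conflict net are $\pweq$-equivalent. Hence, when $\cnof{\dD}=\cnof{\dD'}$, both $\dD$ and $\dD'$ are $\pweq$-equivalent to a common sequentialization of this shared net, so $\dD\pweq\dD'$.

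For part~2 I would exploit the factorization of \Cref{rem:confToSlice}, namely that $\snof{\dD}$ is obtained from $\cnof{\dD}$ by flattening, replacing each $\conf$-node by a union and each $\conc$-node by the corresponding product of linkings. The forward direction then follows from part~1 applied to the $\pweq$-generators, together with a direct verification that the single non-local permutation of \Cref{fig:permutations1} (a binary multiplicative rule permuting past a $\lwith$) leaves the set of linkings unchanged: duplicating a multiplicative subtree across the two premises of a $\lwith$ does not change which atoms are mated in each slice. For the backward direction I would argue by sequentialization along erasing paths (\Cref{fig:erasing}); the same round-trip observation shows $\dD$ is a sequentialization of $\snof{\dD}$ up to $\weq$, and I would establish the slice-net analogue of \Cref{lem:confCoal} — that any two erasing paths of a slice net sequentialize to $\spweq$-equivalent derivations — starting from local confluence of erasing (\Cref{rem:confEras}) and checking that the new divergences, absent in the conflict-net setting, are exactly those resolved by the non-local permutation.

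The main obstacle I expect lies in the backward directions and in their interaction with witness abstraction. One must be careful that recovering $\dD$ by sequentializing along its own coalescence (resp.\ erasing) path pins $\dD$ down only up to $\weq$, since the net discards witness information; this is precisely why the equivalences must be taken as $\pweq$ and $\spweq$ rather than $\peq$ and $\speq$. For part~2 the delicate point is the slice-net version of \Cref{lem:confCoal}: I must verify that the extra divergences created when a $\lwith$ is sequentialized out of order — which in the conflict-net world were blocked by the $\conf$-structure and by the associativity of the join on dualizers — now converge exactly modulo the non-local permutation, so that all sequentializations of a slice net remain $\spweq$-equivalent.
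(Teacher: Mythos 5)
Your proposal is correct and follows essentially the same route as the paper: the forward directions by a case analysis showing each generator of $\pweq$ (resp.\ $\spweq$) fixes the net under the translations of \Cref{fig:deseq} and \Cref{fig:sliceDeseq}, and the backward directions via sequentialization together with \Cref{lem:confCoal} for conflict nets and the confluence of erasing steps (\Cref{rem:confEras}) for slice nets. The paper states this very tersely, delegating the case analysis to the cited works on conflict and slice nets for $\MALL$ and only flagging the new cases (the boxed $\nuloadr/\yaloadr$ permutation and equivalence modulo fresh names); your write-up fills in the same skeleton in more detail.
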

\begin{proof}
	It suffices to consider the case analysis given by the rule permutations in \Cref{fig:permutations1} as done in \cite{hei:hug:conflict} for conflict nets, and in \cite{hug:van:slice} for slice nets.
	\begin{enumerate}
		\item For $\pweq$ it follows from the definition of $\cnof\cdot$ and by \Cref{lem:confCoal}.
		The only new cases with respect to \cite{hei:hug:conflict} are given by the local rule permutation changing $\nupopr+\nuloadr$ into a $\yapopr+\yaloadr$ (the permutation boxed in \Cref{fig:permutations1}), and the definition of equivalence modulo fresh names.

		\item Similarly, for $\spweq$ it follows from the definition of $\snof\cdot$ and by \Cref{rem:confEras}.
	\end{enumerate}
\end{proof}

\section{Processes as Formulas}\label{sec:PaF}
\def\tree{\mathcal T}

In this section we show how $\NL$ can be used as a logical framework in which we can intepret proofs as execution trees in the \picalc.

First we provide a translation from \procs to formulas in $\NL$ and show that the logical implication in $\NL$ captures the structural congruence in the \picalc.
According to \cite{miller:pi}, the absence of this property suggests that the logical framework may lack a robust design.

\begin{nota}
	Because of the monoidal laws, we consider generalized $n$-ary versions (with $n>0$) of the additive connectives $\lplus$ and $\lwith$, which are more convenient for the translation of processes.
	Their inference rules are defined as expected:
	the $n$-ary version of the $\lplus$-rule keeps a unique component of the $n$-ary disjunction,
	while the $n$-ary version of the $\lwith$-rule has $n$ premises containing only one of the component of the $n$-ary conjunction, and a copy of the context.
\end{nota}

\begin{definition}[Processes-as-Formulas]
	The formula $\fof P$ associated to a \proc $P$ is inductively defined as follows:
	\begin{equation}\label{eq:translation}\adjustbox{max width=.92\textwidth}{$\begin{array}{r@{\;=\;}l@{\qquad}r@{\;=\;}l}
			\fof{\pnil}		= \lunit
			\qquad
			\fof{P\ppar Q}		& 	\fof P \lpar \fof Q
			&
			\fof{\pnu x(P)}		& 	\lNu x {\fof P}
			\\
			\fof{\psend xyP}	&
			\lsend xy\lprec \fof P 
			&
			\fof{\precv xyP}	&
			\lExp{y}{\lrecv xy\lprec \fof P}
			\\
			\fof{\psels x {\lab:P_\lab}{\lab\in L}}
			&
			\biglsel[\ell\in L]{\left(\lrecv{x}{\ell} \lprec \fof{P_{\ell}}\right)}
			&
			\fof{\pbras x {\lab:P_\lab}{\lab\in L}}
			&
			\biglbra[\ell\in L]{\left(\lsend{x}\ell\lprec\fof{P_{\ell}} \right)}
		\end{array}$}
	\end{equation}
	We denote by $\sof P$ the sequent obtained by recursively removing all top-level $\lpar$-connectives and nominal quantifiers from the sequent, as well as units and unary additive connectives. If the obtained sequent is empty, then we let $\sof P=\lunit$.
\end{definition}

\begin{corollary}
	Let $P$ and $Q$ be \procs.
	If $P\steq Q$, then $\fof Q \feq \fof P$.
\end{corollary}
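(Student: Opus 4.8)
The plan is to use that the structural congruence $\steq$ is, by definition, the least \emph{congruence} on processes (equivalence relation closed under all process contexts) generated by the equations of \Cref{fig:terms}. Since the translation $\fof{\cdot}$ of \Cref{eq:translation} is compositional, proving the statement splits into two tasks: (i) verify that each generating equation is sent by $\fof{\cdot}$ to a logical equivalence already shown derivable in \Cref{prop:feq}; and (ii) show that $\feq$ is itself a congruence, so that an equivalence established between subformulas lifts through the surrounding formula context, which is what covers the instances of $\steq$ applied inside an arbitrary process context.

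For task (i) I would match the generators one by one. The relation $P\steq P^\alpha$ is immediate: the binders $\pnu x$ and $\precv xy$ translate to the binders $\lnewsymb$ and $\exists$, so $\alpha$-equivalent processes yield $\alpha$-equivalent — hence, by \Cref{def:formulas}, \emph{equal} — formulas, and equal formulas are trivially $\feq$. Commutativity and associativity of $\ppar$ become the monoidal laws $A\lpar B\feq B\lpar A$ and $(A\lpar B)\lpar C\feq A\lpar(B\lpar C)$; $P\ppar\pnil\steq P$ becomes the unit law $A\lpar\lunit\feq A$; $\pnu x\pnu y P\steq\pnu y\pnu x P$ becomes the quantifier swap $\lNu x{\lNu y A}\feq\lNu y{\lNu x A}$; the garbage collection $\pnu x S\steq S$ becomes $\lNu x B\feq B$ for $x\notin\freeof B$; and scope extrusion $\pnu x P\ppar S\steq\pnu x(P\ppar S)$ becomes $\lNu x{(A\lpar B)}\feq(\lNu x A)\lpar B$ for $x\notin\freeof B$. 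All of these appear in \Cref{eq:feq}. The two side conditions on free names transfer because $\fof{\cdot}$ is name-preserving: a routine induction shows $\freeof{\fof P}$ equals the set of free names of $P$, so $x$ not free in $S$ gives $x\notin\freeof{\fof S}$.

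For task (ii), I first note that $\feq$ is an equivalence relation: reflexivity and symmetry are immediate from its definition as two-way implication, and transitivity follows from \Cref{cor:transImp}. Context closure is then proved by induction on the formula context, for which it suffices to establish \emph{monotonicity} of each connective and quantifier used by $\fof{\cdot}$ — namely $\lpar$, $\lprec$, $\lplus$, $\lwith$, $\exists$, and $\lnewsymb$ — with respect to $\limp$: from $\proves[\NL]A\limp A'$ one derives $\proves[\NL]A\lpar B\limp A'\lpar B$, $\proves[\NL]\lNu x A\limp\lNu x A'$, and likewise for the other contexts. Each such statement is obtained directly from the corresponding rule of \Cref{fig:rules}, by plugging a derivation of $\sdash\cneg A,A'$ into the relevant rule together with axioms for the untouched components; where convenient one instead composes with a $\cutr$ and appeals to cut-elimination (\Cref{thm:cutelim}). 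Combining (i) and (ii) and inducting on the derivation of $P\steq Q$ then yields $\fof P\feq\fof Q$.

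The main obstacle lies in the quantifier cases of the monotonicity argument, namely closure of $\feq$ under $\lNu x{(\cdot)}$ and under the input-prefix context $\lEx y{(\lrecv xy\lprec(\cdot))}$. Here one must manage the nominal store together with the freshness side conditions $\dagger$: to derive $\lNu x A\limp\lNu x A'$ one applies $\nuloadr$ to introduce $\lNu x A'$ — legitimate since $x$ is bound, hence not free, in the remaining $\lYa x{\cneg A}$ — and then discharges $\lYa x{\cneg A}$ with $\nupopr$ against the stored $\isnu x$, landing on the assumed $\sdash\cneg A,A'$. Keeping sequents clean and $\alpha$-renaming so that $A$ and $A'$ genuinely share the bound variable is the delicate bookkeeping, but it is otherwise routine.
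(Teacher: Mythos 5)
Your proof is correct and follows the route the paper intends: the corollary is stated as an immediate consequence of \Cref{prop:feq}, whose listed equivalences (unit, monoidal, quantifier swap, scope extrusion) match the generators of $\steq$ exactly as you pair them up. The only part the paper leaves implicit is the closure of $\feq$ under the formula contexts produced by $\fof{\cdot}$, and your monotonicity argument via \Cref{cor:transImp}, the rules of \Cref{fig:rules}, and the $\nuloadr$/$\nupopr$ pairing for the $\lNu x$ case supplies it correctly.
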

\begin{nota}
	When translating the processes $\psenda xy = \psend xy \pnil$  (and similarly for $\precva xy$), we will simply write $\fof{\psend xy \pnil}=\lsend xy$ since $\fof{\psend xy \pnil}= \lsend xy\lprec \lunit $ is logically equivalent to $\lsend xy$.
\end{nota}

\begin{remark}
	For a counter-example of \procs which are not $\steq$-equivalent, but whose corresponding formulas are $\feq$-equivalent, consider the \proc where fresh new name is chosen before a choice, and the one in which a fresh name is chosen after a choice.
	\begin{equation}\label{eq:exMilner}
		\hfill
		P=\pnup x {\pbras x{\lab : P_\lab}{\lab\in L}}
		\qquand
		Q=\pbras x{\lab : \pnu x P_\lab}{\lab\in L}
		\hfill
	\end{equation}
	These processes are not equivalent modulo the structural equivalence $\steq$ provided in literature \cite{vasco:pi,gay:hole}, which is the same as the one we provide in \Cref{fig:terms}.

	It is worth noticing that Milner's original \picalc includes the structural equivalence in the top of \Cref{eq:MilnerNC}, while, at the best of our knowledge, the literature on the \picalc as presented in this paper (that is, as in \cite{vasco:pi,gay:hole}) does not include the corresponding structural equivalence in the bottom of \Cref{eq:MilnerNC}, required to capture similar interactions between choices and restriction.
	\begin{equation}\label{eq:MilnerNC}
		\hfill
		\begin{array}{c@{\quad\steq\quad}c}
			\pnup x{A+B}
			&
			\pnu xA + \pnu xB
			\\\hline\hline
			\pbras x{\lab : \pnu y P_\lab}{\lab\in L}
			&
			\pnu y \left(\pbras x{\lab : P_\lab}{\lab\in L}\right)
			\\
			\psels x{\lab : \pnu y P_\lab}{\lab\in L}
			&
			\pnu y \left(\psels x{\lab :  P_\lab}{\lab\in L}\right)
		\end{array}
		\hfill
	\end{equation}
	This rises an interesting question on why those structural equivalences have not being used in the literature, even if the two processes in \Cref{eq:exMilner} have the same behavior with respect to the results in session types.
\end{remark}

As shown in detail in \cite{acc:man:mon:FaP}, it is possible to associate to each execution tree of a \proc $P$ to an open derivation in $\NL$ of a formula $\fof P$, therefore to characterize deadlock-freedom in terms of derivability in $\NL$.
We report here only a sketch of the proof of this result.
\begin{theorem}[\cite{acc:man:mon:FaP}]\label{thm:deadlock}
	Let $P$ be a process.
	\begin{enumerate}
		\item\label{deadlock1} If $P$ is a deadlock-free, then $\proves[\NL]\fof P$.
		\item\label{deadlock2} If $P$ is race-free, then $P$ is deadlock-fre iff $\proves[\NL]\fof P$.
	\end{enumerate}
\end{theorem}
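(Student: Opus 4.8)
The plan is to rely on the correspondence between execution trees of a process and (open) derivations of its translation, which is the central device of \cite{acc:man:mon:FaP}, and then to either close or analyse such derivations according to whether the process is deadlock-free. First I would record how each reduction rule is mirrored, reading a derivation bottom-up, by a block of inference rules acting on $\fof{\cdot}$. A communication $\rscomr$ on $\cN\ctx[\psend xa R \ppar \precv xb Q]$ is realised by an $\exists$-rule choosing the witness $a$ for the receiver's bound name, a $\lprec$-rule separating the two prefixes from their continuations, and an $\axrule$ linking the dual atoms $\lsend xa$ and $\lrecv xa$; a label synchronisation $\rsselr$ is realised analogously, with the $\exists$-rule replaced by an $\lplus$-rule selecting the matching summand; a branching $\rsbrar$ is realised by an $\lwith$-rule, whose $|L|$ premises match exactly the $|L|$ children that $\rsbrar$ produces in the execution tree; and the purely structural steps ($\rsresr$, $\rsparr$, $\rsstrr$) are absorbed by the invertible rule $\lpar$, by $\nuur$, and by $\mixr$ and $\precur$. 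I would stress that $\fof P$ never contains $\lFa$ or $\lyasymb$, so restriction is always discharged by the single rule $\nuur$ (dropping the binder under its freshness side condition), the nominal store stays empty throughout, and the subtle interaction of $\lnewsymb$ with $\nuloadr/\nupopr$ never arises.

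For item~\ref{deadlock1} I would use that the grammar has no recursion, so a well-founded measure (the number of prefixes and choice operators) strictly decreases along every reduction; hence every reduction sequence is finite and the full execution tree $\tree$ of $P$ is finite. If $P$ is deadlock-free then no reachable process is stuck, so every leaf of $\tree$ is a normal form that is not stuck, i.e. $\steq\pnil$. Translating $\tree$ with the blocks above yields an open derivation of $\vdash\fof P$ whose open premises correspond to the leaves of $\tree$; since each leaf is $\steq\pnil$, its translation is $\feq\lunit$ (using the equivalences of \Cref{prop:feq} to absorb residual units), and I close it with the $\lunit$-rule, obtaining a closed derivation and hence $\proves[\NL]\fof P$. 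Note that race-freedom is not needed here: any schedule, i.e. any full execution tree, has all leaves $\pnil$.

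For item~\ref{deadlock2} one implication is exactly item~\ref{deadlock1}. For the converse I would assume $\proves[\NL]\fof P$, take a cut-free derivation $\dD$ by \Cref{thm:cutelim}, and run the sequentialisation/translation correspondence backwards. The place where race-freedom enters is the matching of atoms: in a cut-free derivation every atom is consumed by an $\axrule$, and $\lsend x\ell$ can only be mated with $\lrecv x\ell$ on the \emph{same} channel and label; race-freedom — which I would first show is preserved under $\redsem$ — guarantees that at each step the redex is unique up to interleaving, so the axiom links realised by $\dD$ are forced to be those of genuine $\rscomr/\rsselr$ redexes of the reachable processes. I would package this as a reflection lemma, proved by induction on the finite reduction measure: for race-free $P$, $\proves[\NL]\fof P$ holds iff it holds for the reduct of each forced communication/selection redex, and, for a branching against its (unique) matching selection, iff $\proves[\NL]\fof{\cN\ctx[P_\ell\ppar Q_\ell]}$ for \emph{every} label $\ell$ — exactly the pattern of the $\lwith$-rule. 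The base case is a stuck $P\not\steq\pnil$: it carries a prefix whose dual atom is unavailable in scope (else that prefix would be a redex), so some atom of $\fof P$ cannot be axiom-linked and $\fof P$ is unprovable, while such a $P$ is not deadlock-free; the inductive step then transports provability and deadlock-freedom together, yielding the equivalence.

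The main obstacle I expect is twofold. First, making the execution-tree/derivation correspondence faithful across the structural congruence: $\alpha$-renaming (cleanness), scope extrusion, and the commutation of restriction with parallel composition must be matched by the invertible rules and by the formula equivalences of \Cref{prop:feq}, and one must check that the freshness side condition of $\nuur$ never blocks the block corresponding to an actual communication on a restricted channel. Second, and harder, is the forced-matching argument of item~\ref{deadlock2}: one must prove that race-freedom, once shown to be preserved under reduction, genuinely pins down the axiom links of every cut-free proof to the unique admissible partners, so that provability can be reflected step by step through reduction. This is exactly where race-freedom is indispensable, since without it a proof may realise one resolution of a race while the operational semantics still admits an adversarial schedule that deadlocks — the phenomenon that item~\ref{deadlock2} is designed to exclude.
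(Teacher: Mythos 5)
Your overall architecture coincides with the paper's: item~\ref{deadlock1} is obtained by translating a maximal execution tree into a derivation built from per-reduction blocks (the paper's \Cref{fig:ctTOder}, which uses exactly the $\exists/\lprec/\axrule$, $\lplus/\lprec/\axrule$, and $\lwith$ blocks you describe, with structural steps absorbed by $\set{\lpar,\nuur,\mixr,\lunit}$), and item~\ref{deadlock2} is obtained by reorganising an arbitrary derivation of $\fof P$ into such blocks --- the paper does this by rule permutations from \Cref{fig:permutations1} plus the observation that under race-freedom all derivations of $\fof P$ are interleaving-equivalent, which is the same content as your reflection lemma by induction on the reduction measure.

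The one step whose stated justification would fail is the base case of your item~\ref{deadlock2} argument. You claim that a stuck $P\not\steq\pnil$ ``carries a prefix whose dual atom is unavailable in scope, so some atom of $\fof P$ cannot be axiom-linked.'' That is not the reason stuck processes are unprovable: for a circular-wait deadlock such as $\pnu x\pnu y(\psend xa{\psenda yb} \ppar \precv y{b'}{\precva x{a'}})$, every atom of $\fof P = \lNu x{\lNu y{((\lsend xa\lprec\lsend yb)\lpar \lEx {b'}{(\lrecv y{b'}\lprec \lEx{a'}{\lrecv x{a'}})})}}$ has a perfectly good dual partner, and a total axiom-linking exists; what fails is sequentialisation, because the $\lprec$-rule would require a premise $\vdash \lsend xa,\lrecv y{b'}$ that is not derivable. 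So the base case must be argued from the structure of cut-free proofs --- after the invertible rules, every derivation of $\fof P$ must open with one of the blocks of \Cref{fig:ctTOder}, and each such block presupposes an actual unguarded redex of $P$ --- rather than from the absence of a matching atom. With that repair (which is essentially what the paper's normalisation-into-blocks step delivers), the rest of your reflection argument, including the role of race-freedom in pinning down the axiom links, goes through as in the companion paper.
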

\begin{proof}[Sketch of proof]
	If $P$ is deadlock-free, then each (maximal) execution tree $\tree$ of $P$ has leaves $\pnil$.
	Since terms are considered up-to structural equivalence, we can assume without loss of generality that no child of a process $P$ contains more occurrences of $\pnil$ than $P$.
	This can be obtained by orienting the structural equivalence $P\ppar\pnil \steq P$ in the natural way.

	For each such tree, we define a derivation $\fof \tree$ by induction on the structure of $\tree$ as shown in \Cref{fig:ctTOder}.
	\Cref{deadlock1} follows by definition.
	To prove \Cref{deadlock2}, we show that we can transform a derivation of a formula $\fof P$ into a derivation made of blocks of rules as in \Cref{fig:ctTOder} using rule permutations from \Cref{fig:permutations1}.
	We conclude by remarking that it suffices to check a unique derivation because when $P$ is deadlock-free, then all derivations of $\fof P$  are equivalent with respect to the interleaving relation defined in \Cref{fig:interleaving}.
\end{proof}

\begin{figure}[t]
	\adjustbox{max width=\textwidth}{$\begin{array}{c}
		\fof{\vmod1{\pnil}}
	\quad=\quad
		\vlderivation{
			\vlid{=}{}{
				\vdash \sof{\pnil}
			}{
				\vlin{\lunit}{}{\vdash \lunit}{\vlhy{}}
			}
		}
	\qquad\qquad
		\fof{\begin{array}{c}
			\vmod2{\pnu x \pnus{y}\left( {P} \ppar {Q\fsubst ab}  \ppar R\right)}
			\\[30pt]
			\vmod1{\pnu x \pnus{y}\left( \psend xa {P} \ppar \precv xb {Q}  \ppar R\right)}
		\end{array}}
		\Dledges{mod1/mod2/{\rscomr}}
	\quad=\quad
		\vlderivation{
			\vlid{=}{}{
				\vdash \sof{\pnu x \pnus{y}\left( \psend xa {P} \ppar \precv xb {Q}  \ppar R\right)}
			}{
				\vlin{\exists}{}{
					\vdash \lsend xa\lprec \fof P,\lEx b\left(\lrecv xb \lprec \fof Q \right), \Gamma
				}{
					\vliin{\lprec}{}{
						\vdash \lsend xa\lprec \fof P ,\lrecv xa \lprec \fof {Q \fsubst ab}, \sof R
					}{
						\vlin{\axrule}{}{\vdash \lsend xa,\lrecv xa}{\vlhy{}}
					}{
						\vlde{}{\set{\lpar,\nuur,\mixr,\lunit}}{
							\vdash \fof P , \fof {Q \fsubst ab}, \sof R
						}{
							\vlid{=}{}{
								\vdash \sof P , \sof {Q \fsubst ab}, \sof R
							}{
								\vlhy{
									\vdash \sof{\pnu x \pnus{y}\left( {P} \ppar {Q\fsubst ab}  \ppar R\right)}
								}
							}
						}
					}
				}
			}
		}
	\\\\\\
		\fof{\begin{array}{c}
			\vmod2{\pnu x \pnus{y}\left(
				P_{\lab_k}
				\ppar
				Q_{\lab_k}
				\ppar
				R
			\right)}
			\\[30pt]
			\vmod1{\pnu x \pnus{y}\left(
				\pbra x{\lab}{P_{\lab_k}}
				\ppar
				\psels x{\lab:Q_{\lab}}{\lab\in L}
				\ppar
				R
			\right)}
		\end{array}}
		\Dledges{mod1/mod2/{\rsselr}}
	\quad=\quad
		\vlderivation{
			\vlid{=}{}{
				\vdash \sof{\pnu x \pnus{y}\left(
					\pbra x{\lab_k}{P_{\lab_k}}
					\ppar
					\psels x{\lab:Q_{\lab}}{\lab\in L}
					\ppar
					R
				\right)}
			}{
				\vlin{\oplus}{}{
					\vdash
					\lsend{x}{\lab_k}\lprec\fof{P_{\lab_k}},
					\biglsel[\lab\in L]{\left(\lrecv{x}{\lab} \lprec \fof{Q_{\lab}}\right)},
					\sof R
				}{
					\vliin{\lprec}{}{
						\vdash
						\lsend{x}{\lab_k}\lprec\fof{P_{\lab_k}},
						\lrecv{x}{\lab_k} \lprec \fof{Q_{\lab_k}},
						\sof R
					}{
						\vlin{\axrule}{}{\vdash \lsend x{\lab_k},\lrecv x{\lab_k}}{\vlhy{}}
					}{
						\vlde{}{\set{\lpar,\nuur,\mixr,\lunit}}{
							\vdash
							\fof{P_{\lab_k}},
							\fof{Q_{\lab_k}},
							\sof R
						}{
							\vlid{=}{}{
								\vdash
								\sof{P_{\lab_k}},
								\sof{Q_{\lab_k}},
								\sof R
							}{
								\vlhy{
									\vdash
									\sof{\pnu x \pnus{y}\left(
										P_{\lab_k}
										\ppar
										Q_{\lab_k}
										\ppar
										R
									\right)}
								}
							}
						}
					}
				}
			}
		}
	\\\\\\
		\fof{\begin{array}{c}
			\vmod2{\pnus{y}\left( \pbra x{\lab_1}{P_{\lab_1}}  \ppar R\right)}
			\qquad
			\cdots
			\qquad
			\vmod3{\pnus{y}\left( \pbra x{\lab_m}{P_{\lab_m}} \ppar R\right)}
			\\[30pt]
			\vmod1{\pnus{y}\left( \pbras x{\lab: P_\lab}{\lab\in \set{\lab_1,\ldots, \lab_m}}  \ppar R\right)}
		\end{array}}
		\Dledges{mod1/mod2/{\rsbrar},mod1/mod3/{\rsbrar}}
	\quad=\quad
		\vlderivation{
			\vlid{}{}{
				\vdash \sof{\pnus{y}\left( \pbras x{\lab: P_\lab}{\lab\in \set{\lab_1,\ldots, \lab_m}}  \ppar R\right)}
			}{
				\vlin{\lwith}{}{
					\vdash
					\bigwith_{\lab\in \set{\lab_1,\ldots, \lab_m}} \lsend x\lab\lprec \fof {P_\lab}
					, \sof R
				}{
					\vlhy{\left\{\vlderivation{
						\vlid{=}{}{
							\vdash
							\lsend x{\lab}\lprec \fof {P_{\lab}}
							, \sof R
						}{
							\vlhy{\sof{\pnus{y}\left( \pbra x{\lab}{P_{\lab}}  \ppar R\right)}}
						}
					}\right\}_{\lab\in \set{\lab_1,\ldots, \lab_m}}}
				}
			}
		}
	\end{array}
	$}
	\caption{Translation of execution trees to derivations. If $m=1$ in the case of $\rsbrar$, then the open derivation is just the sequent $\sof{\pnus{y}\left( \pbras x{\lab: P_\lab}{\lab\in \set{\lab_1,\ldots, \lab_m}}  \ppar R\right)}$, i.e., no rule.}
	\label{fig:ctTOder}
\end{figure}

\begin{remark}
	Certain works on \picalc (e.g., \cite{sangiorgi:internal}) restrict communication and selection on restricted channels (i.e., communication or selection on a channel $y$ can be performed only if $y$ is bound by a $\nu$).
	To capture such a restriction it would be sufficient to require that an $\axrule$-rule with conclusion $\lsend xy$ and $\lrecv xy$ can be applied only if the $x$ is bounded by a $\lnewsymb$ in a sequent occurring in the derivation below the rule.
	In the proof net defined in \Cref{sec:PN}, this restriction corresponds to require that the two formulas in an axiomatic link $\la=\set{\vpz1{\lsend xy},\vpz2{ \lrecv vw}}$ contains two vertices above a same $\lnewsymb z$ node.
\end{remark}

The canonicity result on proof nets with respect to the sequent calculus allows us to provide canonical representatives of execution trees modulo inteleaving.
To prove this result not only for deadlock-free processes, we extend the syntax of proof nets to include \emph{non-logical axioms} to model open premises of a derivation.
\begin{definition}\label{def:PNopen}
	A \defn{open conflict net} \resp{open slice net} is a coalescent  canonical \cotree $\linktree$ \resp{an erasable set of axiomatic linkings} on a sequent $\Gamma$.
	The (top-down) translation in \Cref{fig:deseq} is extended from derivations to open derivations by translating each open premise $\sdash A_1,\ldots, A_n$ in a the link $\la=\set{\vA1_1,\vpz1{\ldots}, \vA n_n}\pzlinks{A1/An/-10/\la/red/{pz1}}$ with $\dualizerof[\la]=\emptyset$.
\end{definition}

\begin{theorem}
	Two execution trees of a \procs $P$ are equivalent modulo interleaving \resp{local interleaving}
	iff
	they can be represented by the same open slice net \resp{open conflict net}.
\end{theorem}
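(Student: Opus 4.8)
The plan is to factor the claimed equivalence through derivations, reducing it to the canonicity result \Cref{thm:PNareCanon}. Recall from the proof of \Cref{thm:deadlock} (see \Cref{fig:ctTOder}) the open derivation $\fof\tree$ with conclusion $\sof P$ associated to an execution tree $\tree$ of $P$: each $\rscomr$- or $\rsselr$-step is translated into a \emph{block} of rules (an $\exists$ or a $\lplus$, a $\lprec$, an $\axrule$, and a block of rules from $\set{\lpar,\nuur,\mixr,\lunit}$), each $\rsbrar$-step into an $n$-ary $\lwith$-rule, each leaf $\pnil$ into a $\lunit$-rule, and each stuck leaf into an open premise. Extending the translations of \Cref{fig:deseq,fig:sliceDeseq} to open derivations as prescribed in \Cref{def:PNopen} (where open premises are treated as non-logical axioms, behaving as axiomatic links under the rule permutations), I define the open conflict net \resp{open slice net} of $\tree$ to be $\cnof{\fof\tree}$ \resp{$\snof{\fof\tree}$}.

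By \Cref{thm:PNareCanon} (which extends verbatim to open derivations, open premises playing the role of axioms), $\snof{\fof{\tree_1}}=\snof{\fof{\tree_2}}$ holds iff $\fof{\tree_1}\spweq\fof{\tree_2}$, and $\cnof{\fof{\tree_1}}=\cnof{\fof{\tree_2}}$ holds iff $\fof{\tree_1}\pweq\fof{\tree_2}$. Hence it suffices to prove that interleaving equivalence of execution trees corresponds to $\spweq$ on the associated derivations, and that local interleaving equivalence (the equivalence generated by the first and third generators of \Cref{fig:interleaving} alone) corresponds to $\pweq$; the theorem then follows by composing these correspondences with the two biconditionals just recorded.

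This correspondence is obtained by matching each generator of the execution-tree equivalence in \Cref{fig:interleaving} with the rule permutations of \Cref{fig:permutations1}. Swapping two independent communication/selection steps (the first generator, with $\alpha,\beta\in\set{\rscomr,\rsselr}$) corresponds to commuting the two associated multiplicative blocks past each other, which decomposes into a sequence of \emph{local} permutations among rules from $\set{\ltens,\lprec,\precur,\mixr,\lpar,\lplus,\exists,\nuur}$; swapping two independent branchings (the third generator, with $\gamma_i,\delta_j\in\set{\rsbrar}$) corresponds to the \emph{local} $\lwith$-vs-$\lwith$ permutation; and commuting a communication/selection step past a branching (the second generator) corresponds exactly to the \emph{non-local} permutation of a binary multiplicative rule $\brrule$ past a $\lwith$-rule. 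Since conflict nets capture precisely the local permutations while slice nets capture all of them, the first and third generators are available in both flavours whereas the second is available only for the full interleaving; this matches the intuition, discussed in the introduction, that conflict nets may delay only communications while slice nets may delay both communications and choices. The residual freedom in the choice of fresh names and existential witnesses, being an instance of $\alpha$-equivalence, is absorbed into the $\weq$-component of $\spweq$ and $\pweq$.

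The main obstacle is establishing this correspondence as a bijection between equivalence classes rather than a one-way implication. The delicate point is that every derivation of $\fof P$ is, up to rule permutations, in the block form of \Cref{fig:ctTOder}: using the permutations of \Cref{fig:permutations1} one gathers the rules of each reduction step into a single block and linearizes the blocks along the tree, precisely the reorganization sketched in the proof of \Cref{thm:deadlock}. One must check that these reorganizations neither merge nor split distinct blocks, so that rule-permutation classes of derivations are in bijection with interleaving classes of execution trees, and that the remaining degrees of freedom are generated exactly by the three interleaving generators. The bookkeeping here is the crux: a single reduction step maps to a whole block rather than a single rule, so permuting two steps is realized by several elementary permutations, and one has to verify that the block structure is stable, i.e.\ that no elementary permutation interleaves the rules of two distinct blocks in a way unreachable by the generators of the execution-tree equivalence.
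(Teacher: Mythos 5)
Your proposal takes essentially the same route as the paper: associate to each execution tree the open derivation of \Cref{fig:ctTOder}, push it through the translations of \Cref{fig:deseq,fig:sliceDeseq} extended to open premises as in \Cref{def:PNopen}, and conclude by \Cref{thm:PNareCanon} after matching the generators of \Cref{fig:interleaving} with the (local vs.\ non-local) rule permutations of \Cref{fig:permutations1}. In fact you are more explicit than the paper, whose proof only records the forward direction in one line; the block-stability issue you flag for the converse is a genuine point the paper leaves implicit (it is deferred to the reorganization argument sketched in the proof of \Cref{thm:deadlock}), so your account is consistent with, and somewhat more careful than, the published one.
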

\begin{proof}
	We associate each execution tree $\tree$ the slice net $\snof \tree$ by combining the translations in \Cref{fig:ctTOder} and \Cref{fig:sliceDeseq} (plus the special case for open premises defined in \Cref{def:PNopen}).
	If $\tree\teq \tree'$, then $\fof{\tree} \speq \fof{\tree'}$; therefore $\snof\tree=\snof{\tree'}$ by \Cref{thm:PNareCanon}.
	The result for local interleaving is similar.
\end{proof}

\begin{corollary}
	A \proc $P$	 is race-free iff there is a unique possible slice net for $\fof P$.
\end{corollary}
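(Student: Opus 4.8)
The plan is to obtain the corollary from the canonicity of slice nets established in the theorem immediately above. That theorem identifies two execution trees of $P$ exactly when they desequentialize to the same slice net, and by completeness every slice net on $\fof P$ sequentializes to a derivation, hence arises from some execution tree. So the slice nets on $\fof P$ are in bijection with the execution trees of $P$ taken modulo interleaving, and \emph{there is a unique slice net for $\fof P$ if and only if all execution trees of $P$ are equivalent modulo $\teq$}. The remaining work is to show that this last condition is equivalent to race-freedom. Throughout I restrict to the case where $\fof P$ is derivable, which by \Cref{thm:deadlock} is the deadlock-free setting of interest and guarantees that every send/label atom is matched by a dual one, so that races actually manifest as competing communications.

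First I would handle the backward direction by contraposition, producing two distinct slice nets whenever $P$ is not race-free. If $P$ is not race-free then $P\redsems P'$ for some $P'$ of one of the four race shapes, say $\cN\ctx[(\psend xy R \ppar \psend xz Q \ppar S)]$ where a receiver on $x$ occurs in $S$ (such a receiver exists because $\fof P$ is derivable). The two producers race for this consumer, giving one-step reductions $P'\redsem P'_1$ and $P'\redsem P'_2$ that pair the same receiver with different senders, hence bind the receive-variable differently. The resulting execution trees therefore differ in which pair of dual atoms $\set{\lsend xy,\lrecv xw}$ is joined by an axiom link, and since a slice net is by definition its set of linkings, they desequentialize to two different slice nets; the argument for the selection shapes via $\rsselr$ is identical. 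Hence the slice net is not unique.

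For the forward direction, race-freedom ensures that at every reachable state each channel carries at most one top-level action of each kind, so the partner chosen by every $\rscomr$ and $\rsselr$ step is forced and the only residual non-determinism is the order of independent transitions together with the internal branching produced by $\rsbrar$. These are exactly the phenomena quotiented by the interleaving equivalence, and I would make this precise with a diamond argument modulo $\teq$: two independent redexes commute, closing up to the first generator of \Cref{fig:interleaving}, while an independent transition commutes with a branching redex via the remaining generators. Local confluence up to $\teq$, together with the observation that the branching skeleton induced by $\rsbrar$ is shared by all execution trees, yields that every maximal execution tree of $P$ lies in a single $\teq$-class, hence corresponds to a single slice net.

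The main obstacle is this confluence-up-to-interleaving step: one must verify both that eliminating same-polarity races forces a unique synchronization partner for each step, and that the resulting commutations close \emph{precisely} along the generators of \Cref{fig:interleaving} (and not merely up to some coarser relation), so that no \emph{don't know} non-determinism beyond internal choice survives. This diamond analysis, and its preservation along reductions, is carried out in detail in the companion paper \cite{acc:man:mon:FaP}; combined with the bijection of the first paragraph it gives the stated equivalence.
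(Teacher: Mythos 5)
The paper offers no proof of this corollary --- it is stated as an immediate consequence of the preceding theorem --- so there is no official argument to compare against. Your proof is the natural expansion of the intended one: compose the correspondence ``slice nets on $\fof P$ $\leftrightarrow$ execution trees of $P$ modulo $\teq$'' (from the preceding theorem, plus surjectivity via the fact, asserted in the proof sketch of \Cref{thm:deadlock}, that every derivation of $\fof P$ permutes into one in the image of the execution-tree translation) with the equivalence ``$P$ race-free $\iff$ all execution trees of $P$ lie in a single $\teq$-class'', deferring the confluence-up-to-interleaving analysis to \cite{acc:man:mon:FaP} exactly as the paper itself does for the analogous step. That decomposition, and your contrapositive argument that a reachable race yields two maximal execution trees differing in an axiom link, is the right shape.

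The one substantive issue is your blanket restriction to derivable $\fof P$. Race-freedom does not imply deadlock-freedom (\Cref{thm:deadlock} gives the equivalence only \emph{under} the race-free hypothesis), so a race-free but deadlocking $P$ admits no closed slice net at all; the corollary is only coherent if ``slice net'' means the \emph{open} slice nets of the preceding theorem, and your restriction discards precisely the cases where that reading carries the weight. Dropping the restriction also exposes a real fragility in your backward direction: a process that reaches a race shape whose competing actions have no matching partner, e.g.\ $\psenda ya \ppar \precva yb \ppar \psenda xa \ppar \psenda xb$, is not race-free yet gets stuck without branching after its single communication, so it has a unique open slice net. Hence either the corollary tacitly assumes derivability of $\fof P$ --- in which case you should flag that you are adding a hypothesis the statement does not contain rather than silently assuming it --- or the backward direction needs an argument that every reachable race shape actually produces two distinct one-step reducts, which is exactly where derivability enters your proof. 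Within the derivable case your two directions are sound, modulo the deferred confluence lemma.
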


\section{Conclusion and Future Works}\label{sec:conc}

In this paper we presented $\PIL$, an extension of first-order multiplicative-additive linear logic in which non-commutativity is not obtained by considering sequents as lists of formulas (as in \cite{abr:rue:noncomI,ruet:noncomII,gal:not:noncom}), but rather including a non-commutative binary connective, as in Retoré's $\pomset$ logic \cite{retore:phd,ret:newPomset} or Guglielmi's $\BV$ \cite{gug:SIS}.
We have shown that by requiring such non-commutative connective to also be non-associative, we can design sequent calculus in which the $\cutr$-rule is admissible.
We also provided proof nets for this logic with a polynomial-time correctness criterion (that is, proof nets form a proof system in the sense of \cite{cook:reckhow:79}), polynomial-time sequentialization and proof translation, and we showed that proof nets provide canonical representatives of derivations modulo local rule permutations.
Note that our result is stronger than what needed to define proof nets (both conflict and slice nets) for $\MALL^1$, addressing a question left open in the literature. Moreover, it could be possible to define both conflict and slice `witness nets' for $\PIL$ (and then for $\MALL^1$) by including in the initial data of the proof net the information of a specific witness map instead of the initial witness map.

\myparagraph{Characerization of deadlock-freedom}\label{sec:rec}
%
In  \cite{acc:man:mon:FaP} we have shown that each derivation in $\PIL$ of the formula $\fof P$ can be interpreted as a execution tree of the \proc $P$ of the \picalc (defined as in \cite{vasco:pi,gay:hole}).
In the same paper, we used such a correspondence to show that (finite) deadlock-freedom and race-free processes can be characterized in terms of derivability in $\PIL$.
To remove the race-free condition, we should be able to reason over all possible proof-search attempts, which has an exponential blowup because of interleaving.
We are currently studying the possibility of reducing this blowup by using slice nets, allowing us to quantify over sets of linkings which may be valid slice nets instead of all possible ``maximal open derivations''.

\myparagraph{Extensions of $\NL$ with fixpoints}\label{sec:rec}
%
In this work, we have studied the recursion-free fragment of the \picalc, but we foresee the possibility of modeling the following three main approaches for the definition of infinite behaviors (see \cite{bus:gab:zav:expressive} for a comparison of their expressive power).
\emph{Replication} \resp{\emph{iteration}} could be modeled using the modality \emph{why-not} \resp{\emph{flag}} defined as fixpoint of the equation $\wn A=A\lpar (\wn A)$ \resp{$\lflag A=A\lprec (\lflag A)$} as in parsimonious linear logic \cite{mazza:15,acc:cur:gue:CSL24,acc:cur:gue:CSL24ext} \resp{as a ``parsimonious'' version of the modalities from \cite{reddy:llstate} and from \cite{ret:flag}},
and \emph{recursion} using the greatest-fixpoint operator (which we here denote $\lnu$ to avoid confusion with the restriction) from $\muMALL$ \cite{BaeldeM07,bae:dou:sau:16}.
\begin{equation}
	\hfill
	\begin{array}{c@{\qquad}|@{\qquad}c@{\qquad}|@{\qquad}c}
		\mbox{Replication}
		&
		\mbox{Iteration}
		&
		\mbox{Recursion}
	\\\hline
		\vlinf{\wn\mathsf{b}}{}{\sdash \Gamma, \wn A}{\sdash \Gamma, A, \wn A}
		&
		\vlinf{\lflag\mathsf{b}}{}{\sdash \Gamma, \lflag A}{\sdash \Gamma, A\lprec \lflag A}
		&
		\vlinf{\lnu}{}{\sdash \Gamma, \lnu A.P(A)}{\sdash \Gamma, P(\lnu A)}
	\end{array}
	\hfill
\end{equation}
Proof systems capturing these operators should include rules allowing the definition of correct non-wellfounded derivations as in, e.g., \cite{bae:dou:sau:16,acc:cur:gue:CSL24,acc:mon:per:OPDL}.
An interesting challenge will be to find a suitable syntax for infinitary proof nets for these systems aiming at proof canonicity rather than to well-behavior with respect to cut-elimination as in \cite{de:sau:infininets,de:phd}.

\myparagraph{Coherent spaces for $\NML$}
%
As a consequence of \Cref{thm:embedding} and cut-elimination, we have that $\NML$ embeds in $\BV$, therefore $\NML$ embeds in $\pomset$.
Since $\pomset$ admits a semantics in terms of coherent spaces \cite{retore:phd,ret:newPomset}, it should be possible to characterize the class of $\NML$ theorems inside $\pomset$, and use this logic to study sequential algorithms \cite{reddy:llstate,ehr:seq1,ehr:seq2}.

\myparagraph{Applications to concurrent programming languages}
%
This paper represents the first step in the definition of a novel paradigm allowing to intepret proofs as execution trees, and using proof equivalence to identify execution trees differing from interleaving concurrency, allowing for the developments of new methods complementary to the ones based session types.

In \cite{acc:man:mon:FaP} we have shown that deadlock-freedom can be characterized in terms of derivability in $\NL$, providing the first completeness result for choreographic programming with respect to recursion-free deadlock-freedom.
This was an open question in the literature, and the main difficulty of proving this result was due to syntactic limitations of session types in presence of key features such as name mobility or cyclic dependencies.
By developing a theory of $\NL$ with fixpoints, we expect to be able to extend this result to the full \picalc.

We plan to study applications of our framework such as the definition of a notion of orthogonality for formulas based on the existence of (open) proof nets which could be completed by set of axioms connecting them (see the notion of module in \cite{and:maz:concPN,acc:mai:DCM}).
This could be used to characterize the \emph{testing preorders} \cite{den:hen:testing,hennessy1988algebraic,ber:hen:testing}, thus designing verification tools whose efficiency relies on the low complexity of the proof net correctness criterion, combined with the fact that a single proof net can encode an exponential number of equivalent executions.

Moreover, our methodology is complementary to algebraic methods which, to the best of our knowledge, currently lack of methods to model name mobility and restrictions.
This because these features are strongly tight to quantification, and modeling quantification has proven to be a complex task in algebraic settings:
while Boolean and Heyting algebras provide straightforward algebraic models for classical and intuitionistic propositional logic respectively, the algebraic structures modeling first-order classical logic have only recently been studied (see \cite{bonchi:FO}).

\myparagraph{Towards a semantics for sequentiality}
%
In this paper we used the non-associative connective $\lprec$ to model a special form of sequentiality, the prefix operator, but we consider extending this work to sequent calculi using graphical connectives (in the sense of \cite{acc:IJCAR24}) to model more complex pattern of interactions as done in \cite{acc:FSCD22}, as well as study deep inference systems to recover associativity of the sequential operator -- thus more similar to the systems studied in \cite{bru:02,hor:nom,hor:tiu:19}.

\bibliography{biblio}

\clearpage

\appendix

\begin{figure}
	$$
	\begin{array}{r@{\;}c@{\;}ll}
		\pnil & \alphaeq & \pnil &
	\\
		\precv xy P &			\alphaeq 		& \precv xz P \fsubst zy
		&
		\mbox{$z$ fresh for $P$}
	\\
		\psend xy P &			\alphaeq		& \psend xy Q
		& \mbox{if $P \alphaeq Q$}
	\\
		P \ppar Q 		&		\alphaeq									&
		R \ppar S &
		\mbox{if $P \alphaeq R$ and $Q \alphaeq S$}
	\\
		\pnu x P	 &			\alphaeq 		& \pnu u P \fsubst ux
		&
		\mbox{$u$ fresh for $P$}
	\\
		\pbras x{\lab:P_\lab}{\lab\in L} 		&	\alphaeq		&
		\pbras x{\lab:Q_\lab}{\lab\in L}						&
		\mbox{if $P_{\lab} \alphaeq Q_{\lab}$ for all $\lab \in L$}
	\\
		\psels x{\lab:P_\lab}{\lab\in L}			& 	\alphaeq		&
		\psels x{\lab:Q_\lab}{\lab\in L} &
		\mbox{if $P_{\lab} \alphaeq Q_{\lab}$ for all $\lab \in L$}
	\end{array}
	$$
	\caption{Definition of $\alphaeq$ for \procs.}
	\label{fig:alphaEq}
\end{figure}


\section{Embedding $\NL$ into $\MAVq$}\label{sec:embedding}
\def\mavform{$\MAVq$-formula\xspace}
\def\mavforms{$\MAVq$-formulas\xspace}

\begin{figure}[t]
	\adjustbox{max width=\textwidth}{$\begin{array}{c|c}
			\begin{array}{c}
				\mbox{$\MAVq$-Formulas}
				\\
				\begin{array}{r@{\;}c@{}c@{\mid}c@{\mid}c@{\mid}c@{\mid}c@{\mid}cc}
					A,B &\coloneqq 	&
					\lunit & A \lpar B & A \lseq B & A \ltens B & A \lplus B & A \lwith B
					\\	&\mid	&
					\swn[x]A &\soc[x]A&\lFa xA & \lNu xA& \lWe xA& \lEx xA
				\end{array}
				\\\hline
				\mbox{Formula equivalences}
				\\
				A=A\lpar \lunit = A\lseq \lunit =  \lunit\lseq  A = A \ltens \lunit
				\\
				\lunit = \lunit \lplus \lunit = \lunit \lwith \lunit = \lFa x\lunit = \lNu x\lunit = \lWe x\lunit = \lEx x\lunit
				\\
				A\circleddot B = B\circleddot A \mbox{ with } \circleddot\in\set{\lpar,\ltens,\lplus,\lwith}
				\\
				A \circleddot (B\circleddot C)=(A\circleddot B)\circleddot C
				\mbox{ with } \circleddot\in\set{\lpar,\ltens,\lplus,\lwith}
				\\
				A\lseq (B\lseq C) = (A\lseq B)\lseq C
				\\\hline
				\mbox{Derivations}
				\\
				\dD
				\coloneqq
				\;
				\begin{array}{c|c|c|cc}
					A
					&
					\vlderivation{\vlde{\dD_2}{}{B}{\vlin{\rrule}{}{C}{\vlde{\dD_1}{}{C'}{\vlhy{A}}}}}
					&
					\od{\odd{\odh{A_1}}{\dD_1}{B_1}{}}
					\circleddot
					\od{\odd{\odh{A_1}}{\dD_1}{B_1}{}}
					&
					\lqusymb x.\od{\odd{\odh{A_1}}{\dD_1}{B_1}{}}
					&\mbox{where }
					\od{\odp{\dD_1}{A}{}}\coloneqq\od{\odd{\odh{\lunit}}{\dD_1}{A}{}}
				\end{array}
				\\
				\mbox{with }A,B,C,C'\mbox{ formulas}
				\mbox{and } \rrule \mbox{ rule }
				\mbox{and }
				\\
				\circleddot\in\Set{\lpar,\ltens,\lseq,\lplus,\lwith}
				\quad
				\mbox{and } \lqusymb\in\Set{\exists,\forall,\lnewsymb,\lwensymb}
			\end{array}
			&
			\begin{array}{c}
				\mbox{Rules}
				\\
				\vlinf{\aidr}{}{a\lpar \cneg a}{\lunit}
				\qquad
				\vlinf{\swir}{}{(A\ltens B)\lpar C}{A \ltens (B\lpar C)}
				\qquad
				\vlinf{\qdr}{}{(A\lseq C)\lpar(B\lseq D)}{(A\lpar B) \lseq (C\lpar D)}
				\\\hline
				\vlinf{\lwith}{}{A\lpar(B\lwith C)}{(A\lpar B) \lwith (A\lpar C)}
				\qquad
				\vlinf{\lplus}{}{A\lplus B}{A}
				\qquad
				\vlinf{\lseq\mhyphen\lwith}{}{(A\lseq C)\lwith(B\lseq D)}{(A\lwith B) \lseq (C\lwith D)}
				\\\hline
				\vlinf{\scope[\forall]}{}{(\lFa xA)\lpar F}{\lFap x{A\lpar F}}
				\qquad
				\vlinf{\lseq\mhyphen\forall}{}{\lFap x{A\lseq B}}{(\lFa x A) \lseq (\lFa xB)}
				\qquad
				\vlinf{\exists}{}{\lEx xA}{A\fsubsts{c/x}}
				\\\hline
				\vlinf{\scope[\lqusymb]}{}{(\lQu xA)\lpar F}{\lQup x{A\lpar F}}
				\qquad
				\vlinf{\scope[\lwensymb]^\lseq}{}{\lWe xA \lseq F}{\lWep x {A\lseq F}}
				\qquad
				\vlinf{\scope[\lwensymb]^\lseq}{}{ F\lseq \lWe xA }{\lWep x {F\lseq A}}
				\\
				\vlinf{\shiftr}{}{\lNu x\lWe xA}{\lWe x\lNu yA}
				\qquad
				\vlinf{\lnewsymb\mhyphen\lwensymb}{}{(\lNu xA) \lpar (\lWe xB)}{\lNup x{A\lpar B}}
				\qquad
				\vlinf{\lwensymb}{}{\lWe xA}{\lNu xA}
				\\
				\vlinf{\shiftr_\forall}{}{\lFa x\lQu xA}{\lQu x\lFa yA}
				\qquad
				\vlinf{\scope[\lqusymb]^\lwith}{}{(\lQu xA) \lwith F}{\lQup x {A\lwith F}}
				\qquad
				\vlinf{\mathsf{nom-choice}}{}{\lQu xA \lwith \lQu xB}{\lQup x {A\lwith B}}
				\\\hline
				\vlinf{\lseq\mhyphen\lnewsymb}{}{\lNup x{A\lseq B}}{(\lNu x A) \lseq (\lNu xB)}
				\qquad
				\vlinf{\lwensymb\mhyphen\lseq}{}{\lWe xA \lseq \lWe xB}{\lWep x{A\lseq B}}
				\\
				\mbox{where $\lqusymb\in\set{\lnewsymb,\lwensymb}$ and $\circleddot\in\set{\lpar,\lseq}$ and $x\notin\freeof F$}
				\\\hline\hline
				\mbox{Systems}
				\\
				\begin{array}{ll}
					\BV=\Set{\aidr,\swir,\qdr}
					&
					\MAV=\BV\cup\Set{\lplus,\lwith,\lseq\mhyphen\lwith}
					\\
					\BVq=\MAVq\setminus\Set{\lseq\mhyphen\lnewsymb,\lwensymb\mhyphen\lseq}
					&
					\MAVq=\mbox{all rules above}
				\end{array}
			\end{array}
		\end{array}$}
	\caption{
		Inductive definition of deep inference derivation and the rules in the system $\MAVq$.
	}
	\label{fig:deepMAV}
\end{figure}

We recall in \Cref{fig:deepMAV} the definition of \defn{\mavforms}, formula equivalence, and deep inference derivations for $\MAVq$.
Rules have been reorganized, also relying on a stronger formula equivalence capturing derivable equivalences involving additive connectives, to improve readability over the intuitive reading of the formula-as-process interpretation.

\thmMAVembed*
\begin{proof}
	For each derivation $\dD$ in $\NL$ conclusion $A_1,\ldots, A_n$,
	we define a deep-inference derivation $\trbv{\dD}$ in $\MAVq$ with premise $\lunit$ and conclusion $\bigparr_{i=1}^n\trbv{A_i}$ as shown in \Cref{fig:embedding}.
\end{proof}

\begin{figure}[t]
	\adjustbox{max width=\textwidth}{$\begin{array}{c}
		\trbv{\vlinf{\lunit}{}{\sdash \lunit}{}}
		=
		\lunit
	\qquad
		\trbv{\vlinf{\axrule}{}{\sdash \lsend xy, \lrecv xy}{}}
		=
		\odn{\lunit}{\aidr}{\lsend xy \lpar \lrecv xy}{}
	\qquad
		\trbv{
				\vlderivation{
						\vlin{\rrule[1]}{}{\sdash \Gamma , A}{
								\vlpr{\dD'}{}{\sdash \Gamma , A'}
							}
					}
			}
		=
		\od{
				\odp{\trbv{\dD_1}}{
						\trbv{\Gamma}\lpar
						\odn{A'}{\rrule[1]}{A}{}
					}{}
			}
	\\
		\trbv{\vlderivation{
						\vliin{\mixr}{}{\sdash \Gamma, \Delta}{\vlpr{}{}{\sdash \Gamma} }{\vlpr{}{}{\sdash \Delta}}
					}
			}
		=
		\od{\odp{\trbv{\dD_1}}{\trbv\Gamma}{}}
		\lpar
		\od{\odp{\trbv{\dD_2}}{\trbv\Delta}{}}
	\qquad
		\trbv{\vlderivation{
						\vliin{\ltens}{}{\sdash \Gamma, A\ltens B, \Delta}{\vlpr{}{}{\sdash \Gamma,A} }{\vlpr{}{}{\sdash B,\Delta}}
					}
			}
		=
		\odN{
				\od{\odp{\trbv{\dD_1}}{\trbv\Gamma \lpar \trbv A}{}}
				\ltens
				\od{\odp{\trbv{\dD_2}}{\trbv B \lpar \trbv\Delta}{}}
			}{\swir}{
				\trbv\Gamma
				\lpar
				(\trbv A \ltens \trbv B )
				\lpar
				\trbv\Delta
			}{}
	\\
		\trbv{\vlderivation{
						\vliin{\lwith}{}{\sdash \Gamma, A\lwith B}{\vlpr{}{}{\sdash \Gamma,A} }{\vlpr{}{}{\sdash B,\Delta}}
					}
			}
		=
		\odn{
				\od{\odp{\trbv{\dD_1}}{\trbv\Gamma \lpar \trbv A}{}}
				\lwith
				\od{\odp{\trbv{\dD_2}}{\trbv B \lpar \trbv\Gamma}{}}
			}{\lwith}{
				\trbv\Gamma
				\lpar
				(\trbv A \lwith\trbv B )
			}{}
	\qquad
		\trbv{\vlderivation{
						\vliin{\precur}{}{\sdash \Gamma, A\lprec B}{\vlpr{}{}{\sdash \Delta,A} }{\vlpr{}{}{\sdash B,\Delta}}
					}
			}
		=
		\odn{
				\od{\odp{\trbv{\dD_1}}{\trbv\Gamma \lpar \trbv A}{}}
				\lseq
				\od{\odp{\trbv{\dD_2}}{\trbv B \lpar \trbv\Delta}{}}
			}{\qdr}{
				\odn{\trbv\Gamma \lseq \trbv\Delta}{\qdr}{\trbv\Gamma \lpar \trbv\Delta}{}
				\lpar
				(\trbv A \lseq \trbv B)
			}{}
	\\
		\trbv{\vlderivation{
						\vliin{\lprec}{}{\sdash \Gamma, A\lprec B, C\lprec D}{\vlpr{}{}{\sdash \Delta,A,C} }{\vlpr{}{}{\sdash B,D,\Delta}}
					}
			}
		=
		\odn{
				\od{\odp{\trbv{\dD_1}}{\trbv\Gamma \lpar (\trbv A \lpar \trbv D)}{}}
				\lseq
				\od{\odp{\trbv{\dD_2}}{(\trbv B \lpar \trbv C)\lpar \trbv\Delta}{}}
			}{\qdr}{
				\odn{\trbv\Gamma \lseq \trbv\Delta}{\qdr}{\trbv\Gamma \lpar \trbv\Delta}{}
				\lpar
				\odn{\left((\trbv A \lpar \trbv C )\lseq(\trbv B \lpar \trbv D )\right)}{\qdr}{
						(\trbv A \lseq \trbv B )\lpar (\trbv C \lseq \trbv D )
					}{}
			}{}
	\\
		\trbv{\vlderivation{
						\vlin{\lqusymb}{}{\sdash \Gamma, \lQu xA}{\vlpr{\dD_1}{}{\sdash \Gamma,A}}
					}
			}
		=
		\odn{
				\lQup x{\od{\odp{\trbv{\dD_1}}{\trbv{\Gamma} \lpar \trbv{A}}{}}}
			}{\scope[\lqusymb']}{
				\trbv{\Gamma} \lpar \lQu x\trbv{A}
			}{}
	\qquad
		\trbv{\vlderivation{
						\vlin{\nqsrule}{}{\sdash \Gamma, \lNu xA,\lYa xB}{\vlpr{\dD_1}{}{\sdash \Gamma,A\fsubst yx,B\fsubst yx}}
					}
			}
		=
		\odn{
				\lNup x{\od{\odp{\trbv{\dD_1}}{\trbv{\Gamma} \lpar \trbv{A} \lpar \trbv{B}}{}}}
			}{\scope[\lnewsymb]}{
				\trbv{\Gamma} \lpar
				\odn{
						\lNup x{\trbv{A} \lpar \trbv{B}}
					}{\lnewsymb\mhyphen\lwensymb}{
						(\lNu x{\trbv{A}} )\lpar (\lWe x{\trbv{B}})
					}{}
		}{}
	\end{array}$}
	\caption{How to define a derivation $\trbv \dD$ in $\MAVq$ from a derivation $\dD$ in $\NL$, with $\rrule[1]\in\set{\lpar,\lplus,\exists}$ and $\lqusymb\in\set{\forall,\lnewsymb,\lyasymb}$ and $\lqusymb'=\lqusymb$ except if $\lqusymb=\lyasymb$, in which case $\lqusymb'=\lwensymb$.}
	\label{fig:embedding}
\end{figure}

\begin{remark}\label{rem:embedding}
	In $\NML$ we have the same connectives $\lpar$ and $\ltens$ of $\BV$, as well as a non-commutative self-dual connective $\lprec$, all sharing the same unit $\lunit$.
	Moreover, as in $\BV$, the implications
	$(A\ltens B) \limp (A\lprec B)$
	and
	$(A\lprec B)\limp (A\lpar B)$
	hold, as well as the ones proving that $\lunit$ is the unit for the three connectives $\lpar$, $\lprec$, and $\ltens$.

	However, we know from \cite{tiu:SIS-II} that $\BV$ cannot have $\cutr$-free a sequent calculus, and the same holds for Retore's $\pomset$%
	\footnote{
		Note that a $\cutr$-free sequent calculus for $\pomset$ has been proposed in \cite{slav:pomset}, but the side conditions of its sequent rules cannot be checked in polynomial time. Therefore such a sequent system cannot be considered a proper proof system, as intended in \cite{cook:reckhow:79}.
	}
	which is a proper conservative extension of $\BV$ \cite{tito:lutz:csl22,tito:str:SIS-III}.
	Thus we conjecture that the cause of the impossibility of having a $\cutr$-free sequent calculus for Guglielmi's $\BV$ \cite{gug:SIS} in the associativity of the connective $\lseq$.
\end{remark}

\section{Confluence of coalescence}\label{sec:succonet}

\confCoal*
\begin{proof}
\def\nowDer{
	\newline The two distinct derivations labeling the link in the bottom-right corner of the diagram according to the anticlockwise and clockwise sequence of coalescence steps are:
}

	We only discuss the critical pairs for coalescence rules not already discussed in \cite{hei:hug:conflict}.
	Together with the confluence diagram of each critical pair, we show the two derivations corresponding to the two sequences of coalescence steps.
	\begin{itemize}
		\item Case $\lpar/\lprec$:

		$$
		\begin{array}{ccc}
			\viA1 \vpz1{\lprec} \viB1,\viA2 \vpz2{\lprec} \viB2, \viA3 \vpz3{\lpar} \viA4, \vpz4{\Gamma},\vpz5{\Delta}
			\pzlinks{A1/A2/12/\la/red/{A3,A4,pz4}}
			\pzlinks{B1/B2/-12/\lb/blue/{pz5}}
			&\to&
			\viA1 \vpz1{\lprec} \viB1,\viA2 \vpz2{\lprec} \viB2,\viA3 \vpz3{\lpar} \viA4, \vpz4{\Gamma},\vpz5{\Delta}
			\pzlinks{pz1/pz2/12/\labb/pzgreen/{A3,A4,pz4,pz5}}
			\\\\
			\downarrow && \downarrow
			\\\\
			\viA1 \vpz1{\lprec} \viB1,\viA2 \vpz2{\lprec} \viB2, \viA3 \vpz3{\lpar} \viA4, \vpz4{\Gamma},\vpz5{\Delta}
			\pzlinks{A1/A2/12/\laa/red/{pz3,pz4}}
			\pzlinks{B1/B2/-12/\lb/blue/{pz5}}
			&\to&
			\viA1 \vpz1{\lprec} \viB1,\viA2 \vpz2{\lprec} \viB2, \viA3 \vpz3{\lpar} \viA4, \vpz4{\Gamma},\vpz5{\Delta}
			\pzlinks{pz1/pz2/12/\laabb/magenta/{pz3,pz4,pz5}}
		\end{array}
		$$

		With $\dualizerof[\laa]= \dualizerof[\la]$ and $\dualizerof[\labb] = \dualizerof[\laabb] = \dualizerof[\la] \duasum \dualizerof[\lb]$.
		\nowDer

		$$
		\vlderivation{
			\vlin{\lpar}{}{
				\sdash A_1\lprec B_1,A_2\lprec B_2,A_3\lpar A_4,\Gamma, \Delta
			}{
				\vliin{\lprec}{}{
					\sdash A_1\lprec B_1,A_2\lprec B_2,A_3, A_4,\Gamma, \Delta
				}{
					\vlhy{\sdash A_1,A_2,A_3\lpar A_4,\Gamma}
				}{
					\vlhy{\sdash B_1,B_2,\Delta}
				}
			}
		}
		\quad\peq\quad
		\vlderivation{
			\vliin{\lprec}{}{\sdash A_1\lprec B_1,A_2\lprec B_2,A_3\lpar A_4,\Gamma, \Delta}{
				\vlin{\lpar}{}{
					\sdash A_1,A_2,A_3\lpar A_4,\Gamma
				}{
					\vlhy{\sdash A_1,A_2,A_3, A_4,\Gamma}
				}
			}{
				\vlhy{\sdash B_1,B_2,\Delta}
			}
		}
		$$

		\item Case $\lplus/\lprec$:
		$$
		\begin{array}{ccc}
			\viA1 \vpz1{\lprec} \viB1,\viA2 \vpz2{\lprec} \viB2, \viA3 \vpz3{\oplus} \viA4, \vpz4{\Gamma},\vpz5{\Delta}
			\pzlinks{A1/A2/12/\la/red/{A3,pz4}}
			\pzlinks{B1/B2/-12/\lb/blue/{pz5}}
			&\to&
			\viA1 \vpz1{\lprec} \viB1,\viA2 \vpz2{\lprec} \viB2,\viA3 \vpz3{\oplus} \viA4, \vpz4{\Gamma},\vpz5{\Delta}
			\pzlinks{pz1/pz2/12/\labb/pzgreen/{A3,pz4,pz5}}
			\\\\
			\downarrow && \downarrow
			\\\\
			\viA1 \vpz1{\lprec} \viB1,\viA2 \vpz2{\lprec} \viB2, \viA3 \vpz3{\oplus} \viA4, \vpz4{\Gamma},\vpz5{\Delta}
			\pzlinks{A1/A2/12/\laa/red/{pz3,pz4}}
			\pzlinks{B1/B2/-12/\lb/blue/{pz5}}
			&\to&
			\viA1 \vpz1{\lprec} \viB1,\viA2 \vpz2{\lprec} \viB2, \viA3 \vpz3{\oplus} \viA4, \vpz4{\Gamma},\vpz5{\Delta}
			\pzlinks{pz1/pz2/12/\laabb/magenta/{pz3,pz4,pz5}}
		\end{array}
		$$

		With $\dualizerof[\laa]= \dualizerof[\la]$ and $\dualizerof[\labb] = \dualizerof[\laabb] = \dualizerof[\la] \duasum \dualizerof[\lb]$.
		\nowDer
		$$
		\vlderivation{
			\vlin{\lplus}{}{
				\sdash A_1\lprec B_1,A_2\lprec B_2,A_3\lplus A_4,\Gamma, \Delta
			}{
				\vliin{\lprec}{}{
					\sdash A_1\lprec B_1,A_2\lprec B_2,A_3,\Gamma, \Delta
				}{
					\vlhy{\sdash A_1,A_2,A_3,\Gamma}
				}{
					\vlhy{\sdash B_1,B_2,\Delta}
				}
			}
		}
		\quad\peq\quad
		\vlderivation{
			\vliin{\lprec}{}{\sdash A_1\lprec B_1,A_2\lprec B_2,A_3\lplus A_4,\Gamma, \Delta}{
				\vlin{\lplus}{}{
					\sdash A_1,A_2,A_3,\Gamma
				}{
					\vlhy{\sdash A_1,A_2,A_3,\Gamma}
				}
			}{
				\vlhy{\sdash B_1,B_2,\Delta}
			}
		}
		$$

		\item Case $\lprec/\lprec$:

		$$
		\adjustbox{max width=\textwidth}{$
		\begin{array}{ccc}
			\viA1 \vpz1{\lprec} \viB1,\viA2 \vpz2{\lprec} \viB2, \viA3 \vpz3{\lprec} \viC1, \viA4 \vpz4{\lprec} \viC2, \vpz5{\Gamma},\vpz6{\Delta}, \vpz7{\Sigma}
			\pzlinks{A1/A2/12/\la/red/{A3,pz5}}
			\pzlinks{B1/B2/-12/\lb/blue/{pz6}}
			\pzlinks{C1/C2/-18/\lc/magenta/{pz7}}
			&\to&
			\viA1 \vpz1{\lprec} \viB1,\viA2 \vpz2{\lprec} \viB2, \viA3 \vpz3{\lprec} \viC1, \viA4 \vpz4{\lprec} \viC2, \vpz5{\Gamma},\vpz6{\Delta}, \vpz7{\Sigma}
			\pzlinks{pz1/pz2/12/\labb/pzgreen/{A3,A4,pz5,pz6}}
			\pzlinks{C1/C2/-12/\lc/magenta/{pz7}}
			\\\\
			\downarrow && \downarrow
			\\\\
			\viA1 \vpz1{\lprec} \viB1,\viA2 \vpz2{\lprec} \viB2, \viA3 \vpz3{\lprec} \viC1, \viA4 \vpz4{\lprec} \viC2, \vpz5{\Gamma},\vpz6{\Delta}, \vpz7{\Sigma}
			\pzlinks{A1/A2/12/\lac/violet/{pz3,pz4,pz5,pz7}}
			\pzlinks{B1/B2/-12/\lb/blue/{pz6}}
			&\to&
			\viA1 \vpz1{\lprec} \viB1,\viA2 \vpz2{\lprec} \viB2, \viA3 \vpz3{\lprec} \viC1, \viA4 \vpz4{\lprec} \viC2, \vpz5{\Gamma},\vpz6{\Delta}, \vpz7{\Sigma}
			\pzlinks{pz1/pz2/12/\labc/brown/{pz3,pz4,pz5,pz6,pz7}}
		\end{array}
		$}$$

		With $\dualizerof[\labb] = \dualizerof[\la] \duasum \dualizerof[\lb]$, $\dualizerof[\lac] = \dualizerof[\la] \duasum \dualizerof[\lc]$
		and $\dualizerof[\labc] = \dualizerof[\la] \duasum \dualizerof[\lb] \duasum \dualizerof[\lc]$.
		\nowDer
		$$\adjustbox{max width=\textwidth}{$
		\vlderivation{
			\vliin{\lprec}{}{
				\sdash A_1\lprec B_1,A_2\lprec B_2,A_3\lprec C_1,A_4\lprec C_2,\Gamma, \Delta,\Sigma
			}{
				\vliin{\lprec}{}{
					\sdash A_1,A_2,A_3\lprec C_1,A_4\lprec C_2,\Gamma,\Sigma
				}{
					\vlhy{\sdash A_1,A_2,A_3,A_4,\Gamma}
				}{
					\vlhy{\sdash C_1,C_2,\Sigma}
				}
			}{
				\vlhy{\sdash B_1,B_2,\Delta}
			}
		}
		\quad\peq\quad
		\vlderivation{
			\vliin{\lprec}{}{
				\sdash A_1\lprec B_1,A_2\lprec B_2,A_3\lprec C_1,A_4\lprec C_2,\Gamma, \Delta,\Sigma
			}{
				\vliin{\lprec}{}{
					\sdash A_1\lprec B_1,A_2\lprec B_2,A_3,A_4,\Gamma, \Delta
				}{
					\vlhy{\sdash A_1,A_2,A_3,A_4,\Gamma}
				}{
					\vlhy{\sdash B_1,B_2,\Delta}
				}
			}{
				\vlhy{\sdash C_1,C_2,\Sigma}
			}
		}
		$}$$

		\item Case $\ltens/\lprec$:
		$$\adjustbox{max width=\textwidth}{$
			\begin{array}{ccc}
				\\
				\viA1 \vpz1{\lprec} \viB1,\viA2 \vpz2{\lprec} \viB2, \viA3 \vpz3{\ltens} \vC1, \vpz4{\Gamma}, \vpz5{\Delta}, \vpz6{\Sigma}
				\pzlinks{A1/A2/12/\la/red/{A3,pz4}}
				\pzlinks{B1/B2/-12/\lb/blue/{pz5}}
				\pzlinks{C1/pz6/-18/\lc/violet/}
				&\to&
				\viA1 \vpz1{\lprec} \viB1,\viA2 \vpz2{\lprec} \viB2, \viA3 \vpz3{\ltens} \vC1, \vpz4{\Gamma}, \vpz5{\Delta}, \vpz6{\Sigma}
				\pzlinks{pz1/pz2/12/\labb/pzgreen/{A3,pz4,pz5}}
				\pzlinks{C1/pz6/-12/\lc/violet/}
				\\\\
				\downarrow && \downarrow
				\\\\
				\viA1 \vpz1{\lprec} \viB1,\viA2 \vpz2{\lprec} \viB2, \viA3 \vpz3{\ltens} \vC1, \vpz4{\Gamma}, \vpz5{\Delta}, \vpz6{\Sigma}
				\pzlinks{A1/A2/12/\lac/orange/{pz3,pz4,pz6}}
				\pzlinks{B1/B2/-12/\lb/blue/{pz5}}
				&\to&
				\viA1 \vpz1{\lprec} \viB1,\viA2 \vpz2{\lprec} \viB2, \viA3 \vpz3{\ltens} \vC1, \vpz4{\Gamma}, \vpz5{\Delta}, \vpz6{\Sigma}
				\pzlinks{pz1/pz2/12/\labc/brown/{pz3,pz4,pz5,pz6}}
				\\\\
			\end{array}
			$}$$

		With $\dualizerof[\labb] = \dualizerof[\la] \duasum \dualizerof[\lb]$, $\dualizerof[\lac] = \dualizerof[\la] \duasum \dualizerof[\lc]$
		and $\dualizerof[\labc] = \dualizerof[\la] \duasum \dualizerof[\lb] \duasum \dualizerof[\lc]$.
		\nowDer
		$$\adjustbox{max width=\textwidth}{$
			\vlderivation{
				\vliin{\ltens}{}{
					\sdash A_1\lprec B_1,A_2\lprec B_2,A_3\ltens C,\Gamma, \Delta,\Sigma
				}{
					\vliin{\lprec}{}{
						\sdash A_1,A_2,A_3\ltens C,\Gamma,\Sigma
					}{
						\vlhy{\sdash A_1,A_2,A_3,A_4,\Gamma}
					}{
						\vlhy{\sdash B_1,B_2,\Delta}
					}
				}{
					\vlhy{\sdash C,\Sigma}
				}
			}
			\quad\peq\quad
			\vlderivation{
				\vliin{\lprec}{}{
					\sdash A_1\lprec B_1,A_2\lprec B_2,A_3\ltens C,\Gamma, \Delta,\Sigma
				}{
					\vliin{\ltens}{}{
						\sdash A_1\lprec B_1,A_2\lprec B_2,A_3,A_4,\Gamma, \Delta
					}{
						\vlhy{\sdash A_1,A_2,A_3,A_4,\Gamma}
					}{
						\vlhy{\sdash C,\Sigma}
					}
				}{
					\vlhy{\sdash B_1,B_2,\Delta}
				}
			}
		$}$$

		\item Case $\exists/\lprec$ :
		$$
		\begin{array}{ccc}
			\viA1 \vpz1{\lprec} \viB1,\viA2 \vpz2{\lprec} \viB2, \vpz3{\exists}x.\vC1, \vpz4{\Gamma}, \vpz5{\Delta}
			\pzlinks{A1/A2/12/\la/red/{C1,pz4}}
			\pzlinks{B1/B2/-12/\lb/blue/{pz5}}
			&\to&
			\viA1 \vpz1{\lprec} \viB1,\viA2 \vpz2{\lprec} \viB2, \vpz3{\exists}x.\vC1, \vpz4{\Gamma}, \vpz5{\Delta}
			\pzlinks{pz1/pz2/12/\labb/pzgreen/{C1,pz4,pz5}}
			\\\\
			\downarrow && \downarrow
			\\\\
			\viA1 \vpz1{\lprec} \viB1,\viA2 \vpz2{\lprec} \viB2, \vpz3{\exists}x.\vC1, \vpz4{\Gamma}, \vpz5{\Delta}
			\pzlinks{A1/A2/12/\laa/violet/{pz3,pz4}}
			\pzlinks{B1/B2/-12/\lb/blue/{pz5}}
			&\to&
			\viA1 \vpz1{\lprec} \viB1,\viA2 \vpz2{\lprec} \viB2, \vpz3{\exists}x.\vC1, \vpz4{\Gamma}, \vpz5{\Delta}
			\pzlinks{pz1/pz2/12/\laabb/brown/{pz3,pz4,pz5}}
		\end{array}
		$$

		With $\dualizerof[ab] = \dualizerof[\la] \duasum \dualizerof[\lb]$, $\dualizerof[\laa] = \dualizerof[\la]\fsubminus{x}$
		and $\dualizerof[a'b] = \dualizerof[\labb] \fsubminus{x}$
		\nowDer
		$$
		\vlderivation{
			\vlin{\exists}{}{
				\sdash A_1\lprec B_1,A_2\lprec B_2,\exists x C,\Gamma, \Delta
			}{
				\vliin{\lprec}{}{
					\sdash A_1\lprec B_1,A_2\lprec B_2,C \fsubst{c}{x},\Gamma, \Delta
				}{
					\vlhy{\sdash A_1,A_2,C \fsubst{c}{x},\Gamma}
				}{
					\vlhy{\sdash B_1,B_2,\Delta}
				}
			}
		}
		\quad\peq\quad
		\vlderivation{
			\vliin{\lprec}{}{\sdash A_1\lprec B_1,A_2\lprec B_2, \exists xC , \Gamma, \Delta}{
				\vlin{\exists}{}{
					\sdash A_1,A_2,\exists xC ,\Gamma
				}{
					\vlhy{\sdash A_1,A_2,C \fsubst{c}{x} ,\Gamma}
				}
			}{
				\vlhy{\sdash B_1,B_2,\Delta}
			}
		}
		$$

		\item Case $\forall/\lprec$:
		$$
		\begin{array}{ccc}
			\viA1 \vpz1{\lprec} \viB1,\viA2 \vpz2{\lprec} \viB2, \vpz3{\forall}x.\vC1, \vpz4{\Gamma}, \vpz5{\Delta}
			\pzlinks{A1/A2/12/\la/red/{C1,pz4}}
			\pzlinks{B1/B2/-12/\lb/blue/{pz5}}
			&\to&
			\viA1 \vpz1{\lprec} \viB1,\viA2 \vpz2{\lprec} \viB2, \vpz3{\forall}x.\vC1, \vpz4{\Gamma}, \vpz5{\Delta}
			\pzlinks{pz1/pz2/12/\labb/pzgreen/{C1,pz4,pz5}}
			\\\\
			\downarrow && \downarrow
			\\\\
			\viA1 \vpz1{\lprec} \viB1,\viA2 \vpz2{\lprec} \viB2, \vpz3{\forall}x.\vC1, \vpz4{\Gamma}, \vpz5{\Delta}
			\pzlinks{A1/A2/12/\la/violet/{pz3,pz4}}
			\pzlinks{B1/B2/-12/\lb/blue/{pz5}}
			&\to&
			\viA1 \vpz1{\lprec} \viB1,\viA2 \vpz2{\lprec} \viB2, \vpz3{\forall}x.\vC1, \vpz4{\Gamma}, \vpz5{\Delta}
			\pzlinks{pz1/pz2/12/\labb/brown/{pz3,pz4,pz5}}
		\end{array}
		$$

		With $\dualizerof[\labb] = \dualizerof[\la] \duasum \dualizerof[\lb]$.
		\nowDer
		$$
		\vlderivation{
			\vlin{\forall}{}{
				\sdash A_1\lprec B_1,A_2\lprec B_2,\lFa x C,\Gamma, \Delta
			}{
				\vliin{\lprec}{}{
					\sdash A_1\lprec B_1,A_2\lprec B_2,C  ,\Gamma, \Delta
				}{
					\vlhy{\sdash A_1,A_2,C ,\Gamma}
				}{
					\vlhy{\sdash B_1,B_2,\Delta}
				}
			}
		}
		\quad\peq\quad
		\vlderivation{
			\vliin{\lprec}{}{\sdash A_1\lprec B_1,A_2\lprec B_2, \lFa xC , \Gamma, \Delta}{
				\vlin{\forall}{}{
					\sdash A_1,A_2,\lFa xC ,\Gamma
				}{
					\vlhy{\sdash A_1,A_2,C  ,\Gamma}
				}
			}{
				\vlhy{\sdash B_1,B_2,\Delta}
			}
		}
		$$

		\item Case $\naloadr/\lprec$ with $\nabla\in\set{\lnewsymb,\lyasymb}$:
		similarly to the case $\forall/\lprec$, but considering the rule $\naloadr$ and the nominal quantifier $\nabla$ instead of $\forall$.

		$$
		\begin{array}{ccc}
			\viA1 \vpz1{\lprec} \viB1,\viA2 \vpz2{\lprec} \viB2, \vpz3{\nabla}x.\vC1, \vpz4{\Gamma}, \vpz5{\Delta}
			\pzlinks{A1/A2/12/\la/red/{C1,pz4}}
			\pzlinks{B1/B2/-12/\lb/blue/{pz5}}
			&\to&
			\viA1 \vpz1{\lprec} \viB1,\viA2 \vpz2{\lprec} \viB2, \vpz3{\nabla}x.\vC1, \vpz4{\Gamma}, \vpz5{\Delta}
			\pzlinks{pz1/pz2/12/\labb/pzgreen/{C1,pz4,pz5}}
			\\\\
			\downarrow && \downarrow
			\\\\
			\viA1 \vpz1{\lprec} \viB1,\viA2 \vpz2{\lprec} \viB2, \vpz3{\nabla}x.\vC1, \vpz4{\Gamma}, \vpz5{\Delta}
			\pzlinks{A1/A2/12/\la/violet/{pz3,pz4}}
			\pzlinks{B1/B2/-12/\lb/blue/{pz5}}
			&\to&
			\viA1 \vpz1{\lprec} \viB1,\viA2 \vpz2{\lprec} \viB2, \vpz3{\nabla}x.\vC1, \vpz4{\Gamma}, \vpz5{\Delta}
			\pzlinks{pz1/pz2/12/\labb/brown/{pz3,pz4,pz5}}
		\end{array}
		$$
		With $\dualizerof[\labb] = \dualizerof[\la] \duasum \dualizerof[\lb]$.
		\nowDer
		$$
		\vlderivation{
			\vlin{\naloadr}{}{
				\sdash A_1\lprec B_1,A_2\lprec B_2,\lNa x C,\Gamma, \Delta
			}{
				\vliin{\lprec}{}{
					\sdash[x] A_1\lprec B_1,A_2\lprec B_2,C ,\Gamma, \Delta
				}{
					\vlhy{\sdash[x] A_1,A_2,C ,\Gamma}
				}{
					\vlhy{\sdash[x] B_1,B_2,\Delta}
				}
			}
		}
		\quad\peq\quad
		\vlderivation{
			\vliin{\lprec}{}{\sdash A_1\lprec B_1,A_2\lprec B_2, \lNa xC , \Gamma, \Delta}{
				\vlin{\naloadr}{}{
					\sdash A_1,A_2,\lNa xC ,\Gamma
				}{
					\vlhy{\sdash[x] A_1,A_2,C ,\Gamma}
				}
			}{
				\vlhy{\sdash B_1,B_2,\Delta}
			}
		}
		$$

		\item Case $\naur/\lprec$ with $\nabla\in\set{\lnewsymb,\lyasymb}$:
		similar to the previous case, but considering the rule $\naur$ instead of $\naloadr$.

		\item Case $\napopr/\lprec$ with $\nabla\in\set{\lnewsymb,\lyasymb}$:

		$$
		\begin{array}{ccc}
			\viA1 \vpz1{\lprec} \viB1,\viA2 \vpz2{\lprec} \viB2, \vpz3{\nabla} \vx1.\vC1,\vpz4{\cneg \nabla} \vy1.\vD1, \vpz5{\Gamma}, \vpz6{\Delta}
			\pzlinks{A1/A2/12/\la/red/{C1,D1,pz5}}
			\pzlinks{B1/B2/-12/\lb/blue/{pz6}}
			\pzlinks{x1/y1/15/\lc/brown/}
			&\to&
			\viA1 \vpz1{\lprec} \viB1,\viA2 \vpz2{\lprec} \viB2, \vpz3{\nabla}\vx1.\vC1,\vpz4{\cneg \nabla}\vy1.\vD1, \vpz5{\Gamma}, \vpz6{\Delta}
			\pzlinks{pz1/pz2/12/\labb/pzgreen/{C1,D1,pz5,pz6}}
			\pzlinks{x1/y1/15/\lc/brown/}
			\\\\
			\downarrow && \downarrow
			\\\\
			\viA1 \vpz1{\lprec} \viB1,\viA2 \vpz2{\lprec} \viB2, \vpz3{\nabla}x.\vC1,\vpz4{\cneg \nabla}y.\vD1, \vpz5{\Gamma}, \vpz6{\Delta}
			\pzlinks{A1/A2/12/\laa/violet/{C1,pz4,pz5}}
			\pzlinks{B1/B2/-12/\lb/blue/{pz6}}
			&\to&
			\viA1 \vpz1{\lprec} \viB1,\viA2 \vpz2{\lprec} \viB2, \vpz3{\nabla}x.\vC1,\vpz4{\cneg \nabla}y.\vD1, \vpz5{\Gamma}, \vpz6{\Delta}
			\pzlinks{pz1/pz2/12/\laabb/purple/{C1,pz4,pz5,pz6}}
		\end{array}
		$$

		With $\dualizerof[\laa] = \dualizerof[a] \fsubminus{y}$, $\dualizerof[\labb] = \dualizerof[\la] \duasum \dualizerof[\lb]$
		and $\dualizerof[\laabb] = \dualizerof[\labb]\fsubminus{y}$.
		\nowDer
		$$\adjustbox{max width=\textwidth}{$
		\vlderivation{
			\vlin{\napopr}{}{
				\sdash[\sS_1,\sS_2,\isna x] A_1\lprec B_1,A_2\lprec B_2, C, \lnNa yD,\Gamma, \Delta
			}{
				\vliin{\lprec}{}{
					\sdash[\sS_1,\sS_2] A_1\lprec B_1,A_2\lprec B_2, C , D\fsubst xy,\Gamma, \Delta
				}{
					\vlhy{\sdash[\sS_1] A_1,A_2, C, D\fsubst xy ,\Gamma}
				}{
					\vlhy{\sdash[\sS_1] B_1,B_2,\Delta}
				}
			}
		}
		\quad\peq\quad
		\vlderivation{
			\vliin{\lprec}{}{
				\sdash[\sS_1,\isna x] A_1\lprec B_1,A_2\lprec B_2, C,\lnNa yD,\Gamma, \Delta
			}{
				\vlin{\napopr}{}{
					\sdash[\sS_1, \isna x] A_1,A_2, C,\lnNa yD,\Gamma
				}{
					\vlhy{\sdash[\sS_1] A_1,A_2,C ,D\fsubst{x}{y} ,\Gamma}
				}
			}{
				\vlhy{\sdash[\sS_2] B_1,B_2,\Delta}
			}
		}
		$}$$

		\item
		Case $\exists / \naloadr$ with  $\nabla\in\set{\lnewsymb,\lyasymb}$:

		$$
		\begin{array}{ccc}
			\vpz1{\exists}x. \vA1, \vpz3{\nabla}y.\vB1, \vpz4{\Gamma}
			\pzlinks{A1/B1/12/\la/red/{pz4}}
			&\to&
			\vpz1{\exists}x. \vA1, \vpz3{\nabla}y.\vB1, \vpz4{\Gamma}
			\pzlinks{pz1/B1/12/\link[black]{a_1}/pzgreen/{pz4}}
			\\
			\downarrow && \downarrow
			\\\\
			\vpz1{\exists}x. \vA1,\quad \vpz3{\nabla}y.\vB1, \vpz4{\Gamma}
			\pzlinks{A1/pz3/12/\link[black]{a_2}/violet/{pz4}}
			&\to&
			\vpz1{\exists}x. \vA1,\quad \vpz3{\nabla}y.\vB1, \vpz4{\Gamma}
			\pzlinks{pz1/pz3/12/\link[black]{a_3}/purple/{pz4}}
		\end{array}
		$$

		With $\dualizerof[a_1] = \dualizerof[\la] \fsubminus{x}$, $\dualizerof[a_2] = \dualizerof[\la] $
		and $\dualizerof[a_3] = \dualizerof[a_1]$.
		\nowDer

		$$
		\vlderivation{
			\vlin{\naloadr}{}{
				\sdash \lEx xA, \lNa yB,\Gamma
			}{
				\vlin{\exists}{}{
					\sdash[\sS,\isna y] \lEx xA, B ,\Gamma
				}{
					\vlhy{\sdash[\sS,\isna y] A \fsubst{c}{x}, B ,\Gamma}
				}
			}
		}
		\peq
		\vlderivation{
			\vlin{\exists}{}{
				\sdash \lEx xA, \lNa yB,\Gamma
			}{
				\vlin{\naloadr}{}{
					\sdash A\fsubst{c}{x}, \lNa yB,\Gamma
				}{
					\vlhy{\sdash[\sS,\isna y] A\fsubst{c}{x}, B ,\Gamma}
				}
			}
		}
		$$

		\item
		Case $\exists / \naur$ with $\nabla\in\set{\lnewsymb,\lyasymb}$:
		similar to the previous case, but considering the rule $\naur$ instead of $\naloadr$.

		\item
		Case $\exists / \napopr$ with  $\nabla\in\set{\lnewsymb,\lyasymb}$:

		$$
		\begin{array}{ccc}
			\vpz1{\exists}z. \vA1, \vpz3{\nabla}\vx1.\vB1,\vpz4{\cneg\nabla}\vy1.\vC1, \vpz5{\Gamma}
			\pzlinks{A1/B1/12/\la/red/{C1,pz5}}
			\pzlinks{x1/y1/15/\lc/brown/}
			&\to&
			\vpz1{\exists}z. \vA1, \vpz3{\nabla}\vx1.\vB1,\vpz4{\cneg\nabla}\vy1.\vC1, \vpz5{\Gamma}
			\pzlinks{pz1/B1/12/\labb/pzgreen/{C1,pz5}}
			\pzlinks{x1/y1/15/\lc/brown/}
			\\
			\downarrow && \downarrow
			\\\\
			\vpz1{\exists}z. \vA1, \vpz3{\nabla}x.\vB1,\vpz4{\cneg\nabla}y.\vC1, \vpz5{\Gamma}
			\pzlinks{B1/pz4/12/\laaa/violet/{A1,pz5}}
			&\to&
			\vpz1{\exists}z. \vA1, \vpz3{\nabla}x.\vB1,\vpz4{\cneg\nabla}y.\vC1, \vpz5{\Gamma}
			\pzlinks{pz1/B1/12/\laaabb/purple/{pz4,pz5}}
		\end{array}
		$$

		With $\dualizerof[\labb] = \dualizerof[\la] \fsubminus{z}$,
		$\dualizerof[\laaa] = \dualizerof[\la] \fsubminus{y}$ and  $\dualizerof[\laaabb] = (\dualizerof[\la]\fsubminus{z} )\fsubminus{y}$
		\nowDer
		$$
		\vlderivation{
			\vlin{\napopr}{}{
				\sdash[\sS,\isna x] \lEx zA, B, \lnNa yC,\Gamma
			}{
				\vlin{\exists}{}{
					\sdash \lEx zA,B , C\fsubst{x}{y}, \Gamma
				}{
					\vlhy{\sdash A\fsubst{c}{z}, B ,C\fsubst{x}{y}, \Gamma}
				}
			}
		}
		\peq
		\vlderivation{
			\vlin{\exists}{}{
				\sdash[\sS,\isna x] \lEx zA, B, \lnNa yC,\Gamma
			}{
				\vlin{\napopr}{}{
					\sdash[\sS, \isna x] A \fsubst{c}{z}, B, \lnNa yC,\Gamma
				}{
					\vlhy{\sdash A \fsubst{c}{z}, B ,C \fsubst{x}{y}, \Gamma}
				}
			}
		}
		$$

		\item
		Case $\rrule_1/\rrule_2$ with $\rrule_1,\rrule_2\in\set{\forall, \naloadr,\naur}$ with $\nabla\in\set{\lnewsymb,\lyasymb}$:

		$$
		\begin{array}{ccc}
			\vpz1{\lqusymb_1}x. \vA1, \vpz3{\lqusymb_2}y.\vB1, \vpz4{\Gamma}
			\pzlinks{A1/B1/12/\la/red/{pz4}}
			&\to&
			\vpz1{\lqusymb_1}x. \vA1, \vpz3{\lqusymb_2}y.\vB1, \vpz4{\Gamma}
			\pzlinks{pz1/B1/12/\link[black]{a_1}/pzgreen/{pz4}}
			\\
			\downarrow && \downarrow
			\\\\
			\vpz1{\lqusymb_1}x. \vA1,\vpz3{\lqusymb_2}y.\vB1, \vpz4{\Gamma}
			\pzlinks{pz4/pz3/12/\link[black]{a_2}/violet/{A1}}
			&\to&
			\vpz1{\lqusymb_1}x. \vA1, \vpz3{\lqusymb_2}y.\vB1, \vpz4{\Gamma}
			\pzlinks{pz1/pz3/12/\link[black]{a_3}/brown/{pz4}}
		\end{array}
		$$

		With $\dualizerof[\la] = \dualizerof[a_i]$ for $i\in\set{1,2,3}$.
		\nowDer
		$$
		\vlderivation{
			\vlin{\rrule_1}{}{
				\sdash \lQu[1] xA, \lQu[2] yB, \Gamma
			}{
				\vlin{\rrule_2}{}{
					\sdash A , \lQu[2] yB, \Gamma
				}{
					\vlhy{\sdash A, B, \Gamma}
				}
			}
		}
		\quad\peq\quad
		\vlderivation{
			\vlin{\rrule_2}{}{
				\sdash \lQu[1] xA, \lQu[2] yB, \Gamma
			}{
				\vlin{\rrule_1}{}{
					\sdash \lQu[1] xA, B, \Gamma
				}{
					\vlhy{\sdash A, B, \Gamma}
				}
			}
		}
		\quad\mbox{if }
		\rrule_1,\rrule_2\in\set{\forall,\nuur,\yaur}
		$$

		$$
		\vlderivation{
			\vlin{\rrule_1}{}{
				\sdash \lNa xA, \lQu yB, \Gamma
			}{
				\vlin{\rrule_2}{}{
					\sdash[\sS,\isna x] A , \lQu yB, \Gamma
				}{
					\vlhy{\sdash[\sS,\isna x] A,B, \Gamma}
				}
			}
		}
		\quad\peq\quad
		\vlderivation{
			\vlin{\rrule_2}{}{
				\sdash \lNa xA, \lQu yB, \Gamma
			}{
				\vlin{\rrule_1}{}{
					\sdash[\sS,\isna x] \lNa xA, B, \Gamma
				}{
					\vlhy{\sdash[\sS,\isna x] A, B, \Gamma}
				}
			}
		}
		\quad
		\mbox{if }
		\rrule_1\in\set{\nuloadr,\yaloadr}
		\mbox{ and }
		\rrule_2\in\set{\forall,\nuur,\yaur}
		$$

		$$
		\vlderivation{
			\vlin{\rrule_1}{}{
				\sdash \lNai1 xA, \lNai2 yB, \Gamma
			}{
				\vlin{\rrule_2}{}{
					\sdash[\sS, {\isna[1]x}] A , \lNai2 yB, \Gamma
				}{
					\vlhy{\sdash[\sS,{\isna[1]x},{\isna[2]y}] A, B, \Gamma}
				}
			}
		}
		\peq
		\vlderivation{
			\vlin{\rrule_2}{}{
				\sdash \lNai1 xA, \lNai2 yB, \Gamma
			}{
				\vlin{\rrule_1}{}{
					\sdash[\sS,{\isna[2]y}] \lNai1 xA, B, \Gamma
				}{
					\vlhy{\sdash[\sS,{\isna[1]x},{\isna[2]y}] A, B, \Gamma}
				}
			}
		}
		\quad\mbox{if }
		\rrule_1,\rrule_2\in\set{\nuloadr,\yaloadr}
		$$

		\

		\item
		Case $\napopr/\rrule$ with $\rrule\in\set{\forall,\naloadr,\naur}$ and  $\nabla\in\set{\lnewsymb,\lyasymb}$:

		\

		$$
		\begin{array}{ccc}
			\vpz1{\lqusymb}z. \vA1, \vpz3{\nabla}\vx1.\vB1,\vpz4{\cneg\nabla}\vy1.\vC1, \vpz5{\Gamma}
			\pzlinks{A1/B1/16/\la/red/{C1,pz5}}
			\pzlinks{x1/y1/12/\lc/brown/}
			&\to&
			\vpz1{\lqusymb}z. \vA1, \vpz3{\nabla}\vx1.\vB1,\vpz4{\cneg\nabla}\vy1.\vC1, \vpz5{\Gamma}
			\pzlinks{pz1/B1/16/\labb/pzgreen/{C1,pz5}}
			\pzlinks{x1/y1/12/\lc/brown/}
			\\
			\downarrow && \downarrow
			\\\\
			\vpz1{\lqusymb}z. \vA1,\quad \vpz3{\nabla}x.\vB1,\vpz4{\cneg\nabla}y.\vC1, \vpz5{\Gamma}
			\pzlinks{A1/B1/16/\laa/violet/{pz4,pz5}}
			&\to&
			\vpz1{\lqusymb}z. \vA1,\quad \vpz3{\nabla}x.\vB1,\vpz4{\cneg\nabla}y.\vC1, \vpz5{\Gamma}
			\pzlinks{pz1/B1/16/\laabb/purple/{pz4,pz5}}
		\end{array}
		$$

		With $\dualizerof[\labb] = \dualizerof[\la]$,
		$\dualizerof[\laaa] = \dualizerof[\la] \fsubminus{y}$ and  $\dualizerof[\laaabb] = \dualizerof[\la]\fsubminus{y}$
		\nowDer
		$$
		\vlderivation{
			\vlin{\napopr}{}{
				\sdash[\sS,\isna x] \lQu zA, B,\lnNa yC, \Gamma
			}{
				\vlin{\rrule}{}{
					\sdash \lQu zA, B, C \fsubst{x}{y}, \Gamma
				}{
					\vlhy{\sdash A,B, C\fsubst{x}{y},\Gamma}
				}
			}
		}
		\quad\peq\quad
		\vlderivation{
			\vlin{\rrule}{}{
				\sdash[\sS,\isna x] \lQu zA, B,\lnNa yC, \Gamma
			}{
				\vlin{\napopr}{}{
					\sdash[\sS,\isna x] A, B,\lnNa yC, \Gamma
				}{
					\vlhy{\sdash A ,B , C\fsubst{x}{y},\Gamma}
				}
			}
		}
		\quad
		\mbox{if }
		\rrule\in\set{\forall,\nuur,\yaur}
		$$

		$$
		\vlderivation{
			\vlin{\nnapopr}{}{
				\sdash[\sS,\isnna x] \lQu zA, B,\lNa yC, \Gamma
			}{
				\vlin{\rrule}{}{
					\sdash \lQu zA, B, C \fsubst xy, \Gamma
				}{
					\vlhy{\sdash[\sS,\isqu z] A,B, C\fsubst xy,\Gamma}
				}
			}
		}
		\quad\peq\quad
		\vlderivation{
			\vlin{\rrule}{}{
				\sdash[\sS, \isnna x] \lQu zA, B,\lNa yC, \Gamma
			}{
				\vlin{\nnapopr}{}{
					\sdash[\sS,\isqu z, \isnna x] A, B,\lNa yC, \Gamma
				}{
					\vlhy{\sdash[\sS,\isqu z ] A ,B , C\fsubst xy,\Gamma}
				}
			}
		}
		\mbox{if }
		\rrule\in\set{\nuloadr,\yaloadr}
		$$

		\

		\item Case $\lwith/\exists$:
		$$
		\begin{array}{ccccc}
			&&
			\vA1 \vlwith1 \vB1,\quad \vpz1{\exists}x.\vC1 , \vpz2{\Gamma}
			\pzlinks{A1/pz1/12/\laaa/magenta/{pz2}}
			\pzlinks{B1/C1/-12/\lb/blue/{pz2}}
			&\to&
			\vA1 \vlwith1 \vB1, \quad \vpz1{\exists}x.\vC1 , \quad \vpz2{\Gamma}
			\pzlinks{A1/pz1/12/\laaa/magenta/{pz2}}
			\pzlinks{pz1/pz2/-12/\lbbb/violet/{B1}}
			\\
			&\nearrow &&&
			\\
			\vA1 \vlwith1 \vB1, \vpz1{\exists}x.\vC1 , \vpz2{\Gamma}
			\pzlinks{A1/C1/12/\la/red/{pz2}}
			\pzlinks{B1/C1/-12/\lb/blue/{pz2}}
			&&&&\downarrow
			\\
			&\searrow &&&
			\\
			&&
			\vA1 \vlwith1 \vB1, \vpz1{\exists}x.\vC1 , \vpz2{\Gamma}
			\pzlinks{lwith1/C1/12/\lcd/pzgreen/{pz2}}
			&\to&
			\vA1 \vlwith1 \vB1,\quad \; \vpz1{\exists}x.\vC1 , \vpz2{\Gamma}
			\pzlinks{lwith1/pz1/12/\lcccd/brown/{pz2}}
		\end{array}
		$$

		With $\dualizerof[\laa] = \dualizerof[\la] \fsubminus{y}$,
		$\dualizerof[\lbb] = \dualizerof[\lb] \fsubminus{y}$,
		 $\dualizerof[\lc] = \dualizerof[\la] \join \dualizerof[\lb]$ and
		 $\dualizerof[\lccd] = \dualizerof[\lc]\fsubminus{y}$.
		\nowDer
		$$
		\vlderivation{
			\vlin{\exists}{}{
				\sdash A\lwith B, \lEx xC, \Gamma
			}{
				\vliin{\lwith}{}{
					\sdash A\lwith B, C\fsubst{c}{x}, \Gamma
				}{
					\vlhy{\sdash A,C\fsubst{c}{x},\Gamma}
				}{
					\vlhy{\sdash B,C\fsubst{c}{x},\Gamma}
				}
			}
		}
		\quad\peq\quad
		\vlderivation{
			\vliin{\lwith}{}{
				\sdash A\lwith B, \lEx xC, \Gamma
			}{
				\vlin{\exists}{}{
					\sdash A, \lEx xC, \Gamma
				}{
					\vlhy{\sdash A,C\fsubst{c}{x},\Gamma}
				}
			}{
				\vlin{\exists}{}{
					\sdash  B, \lEx xC, \Gamma
				}{
					\vlhy{\sdash B,C\fsubst{c}{x},\Gamma}
				}
			}
		}
		$$

		\item
		Case $\lwith/\rrule$ with $\rrule\in\set{\forall,\naloadr,\naur}$ with $\nabla\in\set{\lnewsymb,\lyasymb}$:

		$$
		\begin{array}{ccccc}
			&&
			\vA1 \vlwith1 \vB1, \vpz1{\lqusymb}x.\vC1 , \vpz2{\Gamma}
			\pzlinks{A1/pz1/12/\laa/magenta/{pz2}}
			\pzlinks{B1/C1/-12/\lb/blue/{pz2}}
			&\to&
			\vA1 \vlwith1 \vB1, \quad \vpz1{\lqusymb}x.\vC1 , \vpz2{\Gamma}
			\pzlinks{A1/pz1/12/\laa/magenta/{pz2}}
			\pzlinks{B1/pz1/-12/\lbb/violet/{pz2}}
			\\
			&\nearrow &&&
			\\
			\vA1 \vlwith1 \vB1, \vpz1{\lqusymb}x.\vC1 , \vpz2{\Gamma}
			\pzlinks{A1/C1/12/\la/red/{pz2}}
			\pzlinks{B1/C1/-12/\lb/blue/{pz2}}
			&&&&\downarrow
			\\
			&\searrow &&&
			\\
			&&
			\vA1 \vlwith1 \vB1, \vpz1{\lqusymb}x.\vC1 , \vpz2{\Gamma}
			\pzlinks{lwith1/C1/12/\lcd/pzgreen/{pz2}}
			&\to&
			\vA1 \vlwith1 \vB1,\quad \vpz1{\lqusymb}x.\vC1 , \vpz2{\Gamma}
			\pzlinks{lwith1/pz1/12/\lccd/brown/{pz2}}
		\end{array}
		$$

		With $\dualizerof[\laa] = \dualizerof[\la] $,
		$\dualizerof[\lbb] = \dualizerof[\lb] $ and
		 $\dualizerof[\lc] = \dualizerof[\la] \join \dualizerof[\lb] =\dualizerof[\lccd]$ .
		\nowDer

		$$\adjustbox{max width=\textwidth}{$
		\vlderivation{
			\vlin{\rrule}{}{
				\sdash A\lwith B, \lQu xC, \Gamma
			}{
				\vliin{\lwith}{}{
					\sdash A\lwith B, C, \Gamma
				}{
					\vlhy{\sdash A, C,\Gamma}
				}{
					\vlhy{\sdash B, C,\Gamma}
				}
			}
		}
		\quad\peq\quad
		\vlderivation{
			\vliin{\lwith}{}{
				\sdash A\lwith B, \lQu xC, \Gamma
			}{
				\vlin{\rrule}{}{
					\sdash A, \lQu xC, \Gamma
				}{
					\vlhy{\sdash A,C,\Gamma}
				}
			}{
				\vlin{\rrule}{}{
					\sdash  B, \lQu xC, \Gamma
				}{
					\vlhy{\sdash B,C ,\Gamma}
				}
			}
		}
		\quad
		\mbox{if }
		\lqusymb \in\set{\forall,\nuur,\yaur}
		$}$$

		$$\adjustbox{max width=\textwidth}{$
		\vlderivation{
			\vlin{\rrule}{}{
				\sdash A\lwith B, \lQu xC, \Gamma
			}{
				\vliin{\lwith}{}{
					\sdash[\sS,\isqu x] A\lwith B, C, \Gamma
				}{
					\vlhy{\sdash[\sS,\isqu x] A, C,\Gamma}
				}{
					\vlhy{\sdash[\sS,\isqu x] B, C,\Gamma}
				}
			}
		}
		\quad\peq\quad
		\vlderivation{
			\vliin{\lwith}{}{
				\sdash A\lwith B, \lQu xC, \Gamma
			}{
				\vlin{\rrule}{}{
					\sdash A, \lQu xC, \Gamma
				}{
					\vlhy{\sdash[\sS,\isqu x] A,C ,\Gamma}
				}
			}{
				\vlin{\rrule}{}{
					\sdash  B, \lQu xC, \Gamma
				}{
					\vlhy{\sdash[\sS,\isqu x] B, C,\Gamma}
				}
			}
		}
		\quad
		\mbox{if }
		\rrule \in\set{\nuloadr,\yaloadr}
		$}$$

		\item
		Case $\rrule / \conf$ with $\rrule \in \set{\exists,\forall, \nuloadr,\yaloadr, \nuur,\yaur}$:

		$$
		\begin{array}{ccc}
			\conf (a_1 , \dots, a_l , b_1, \dots, b_k , c) & \to & \conf (\conf(a_1 , \dots, a_l ),\conf( b_1, \dots, b_k , c))
			\\
			\downarrow && \downarrow
			\\
			\conf (a_1 , \dots, a_l , b_1, \dots, b_k , c') & \to & \conf (\conf(a_1 , \dots, a_l ),\conf( b_1, \dots, b_k , c'))
		\end{array}
		$$

		Where either
		\begin{itemize}
			\item $\rrule\in\set{\forall,\nuur,\yaur,\nuloadr,\yaloadr}$, thus
			$c = \Gamma , A$, $c' = \Gamma, \lQu xA$, and $\dualizerof[c'] = \dualizerof[c]$; or
			\item $\rrule\in\set{\exists,\nupopr,\yapopr}$, thus
			$c = \Gamma , A$ ,  $c' = \Gamma, \lEx x A$ and $\dualizerof[c'] = \dualizerof[c] \fsubminus{x}$.
		\end{itemize}
		The derivation labelling the link in the bottom-right corner of the diagram is the same, independently of the sequence of coalescence steps.

		\item
		Case $\mixr / \mixr$ :

		if this case applies, then the \cotree has a $\conc$-node $x$ with leaves on which only $(\mixr)$ can be applied to merge some of them.
		In particular, $x$ must have at least three leaves $\la$, $\lb$ and $\lc$ such that, without loss of generaltity, a $(\mixr)$ can be appliet to merge $\la$ and $\lb$, or to merge $\lb$ and $\lc$.
		Let $\labb$ be the link obtained by applying a $(\mixr)$ step to merge $\la$ and $\lb$.
		To conclude it suffices to remark that we can always find a continuation of the coalescence path containing such step $(\mixr)$ which also contain  another step $(\mixr)$ merging $\lc$ and the link obtained by applying coalescence steps involving $\labb$.
		This follows from the fact that the side condition of the step $(\mixr)$ implies that, if exists,
		the least common ancestor (in the forest of $\Gamma$) of two formulas from two links which could be have been merged using a step $(\mixr)$  must be a $\lpar$.
		Thus, under the conditio that $\linktree$ is a proof net, once we cannot apply any more coalescence step to a link obtained from $\labb$, we are still able to apply a $(\mixr)$ step to merge it with $\lc$.

		\item
		Case $\precur/\precur$:
		similar to the previous case.

	\end{itemize}

	Finally, we have the following special additional cases, in which the dualizer is not the same, but the proof nets are.

	\begin{itemize}
		\item
		Case
		$\nupopr/\yapopr$:

		$$
		\begin{array}{ccc}
			\vpz1{\lnewsymb}\vx1. \vA1, \vpz3{\lyasymb}\vy2.\vB1, \vpz4{\Gamma}
			\pzlinks{A1/B1/12/\la/red/{pz4}}
			\pzlinks{x1/y2/16/\lb/blue/}
			&\to&
			\vpz1{\lnewsymb}\vx1. \vA1, \vpz3{\lyasymb}\vy2.\vB1, \vpz4{\Gamma}
			\pzlinks{pz1/B1/12/c/pzgreen/{pz4}}
			\\
			\downarrow && \downarrow
			\\\\
			\vpz1{\lnewsymb}\vx1. \vA1, \vpz3{\lyasymb}\vy2.\vB1, \vpz4{\Gamma}
			\pzlinks{A1/pz3/12/d/violet/{pz4}}
			&\to&
			\vpz1{\lnewsymb}\vx1. \vA1, \vpz3{\lyasymb}\vy2.\vB1, \vpz4{\Gamma}
			\pzlinks{pz1/pz3/12/e/brown/{pz4}}
		\end{array}
		$$
		where
		$\dualizerof[c]=\dualizerof[d]=\dualizerof[e]=\dualizerof[a]\setminus\set{y}$.
		\nowDer
		$$
			\dD_1
		\quad=\quad
		\vlderivation{
				\vlin{\nuloadr}{}{
					\sdash \Gamma, \lNu xA, \lYa yB
				}{
					\vlin{\nupopr}{}{
						\sdash[\sS,\isnu x] \Gamma, A, \lYa yB
					}{\vlhy{\sdash \Gamma, A, B\fsubst xy}}
				}
			}
		\quad\pweq\quad
			\vlderivation{
				\vlin{\yaloadr}{}{
					\sdash \Gamma, \lNu xA, \lYa yB
				}{
					\vlin{\yapopr}{}{
						\sdash[\sS,\isya y] \Gamma, \lNu xA, B
					}{\vlhy{\sdash \Gamma, A\fsubst yx, B}}
				}
			}
		\quad=\quad
			\dD_2
		$$

		Where $\dD_1\pweq \dD_2$ because
		$$
			\dD_1
		\;\weq\;
			\vlderivation{
				\vlin{\nuloadr}{}{
					\sdash \Gamma, \lNu xA, \lYa yB
				}{
					\vlin{\nupopr}{}{
						\sdash[\sS,\isnu z] \Gamma, A\fsubst zx, \lYa yB
					}{\vlhy{\sdash \Gamma, A\fsubst zx, B\fsubst zy}}
				}
			}
		\quad\peq\quad
			\vlderivation{
				\vlin{\yaloadr}{}{
					\sdash \Gamma, \lNu xA, \lYa xB
				}{
					\vlin{\yapopr}{}{
						\sdash[\sS,\isya z] \Gamma, \lNu xA, B\fsubst zx
					}{\vlhy{\sdash \Gamma, A\fsubst zx , B\fsubst zx}}
				}
			}
		\;\weq\;
			\dD_2
		$$

	\end{itemize}
\end{proof}

\begin{remark}\label{remark:notcritical}
	As already observed in \cite{hei:hug:conflict}, coalescence is not confluent on non-coalescent \cotree, as shown in the following examples.
	$$\adjustbox{max width=\textwidth}{$\begin{array}{c}
		\begin{array}{ccc}
			&&
			\viA1 \vpz1{\lprec} \viB1,\viA2 \vpz2{\lprec} \viB2, \viA3 \vpz3{\lprec} \viC1, \viA4 \vpz4{\lprec} \viC2, \vpz5{\Gamma},\vpz6{\Delta}
			\pzlinks{pz1/pz2/12/\labb/pzgreen/{A3,A4,pz5,pz6}}
			\pzlinks{C1/C2/-12/\lc/violet/{pz6}}
			\\
			&\nearrow &
			\\
			\viA1 \vpz1{\lprec} \viB1,\viA2 \vpz2{\lprec} \viB2, \viA3 \vpz3{\lprec} \viC1, \viA4 \vpz4{\lprec} \viC2, \vpz5{\Gamma},\vpz6{\Delta}
			\pzlinks{A1/A2/12/\la/red/{A3,pz5}}
			\pzlinks{B1/pz6/-12/\lb/blue/{B2}}
			\pzlinks{C1/C2/18/\lc/violet/{pz6}}
			&&
			\\
			&\searrow &
			\\
			&&
			\viA1 \vpz1{\lprec} \viB1,\viA2 \vpz2{\lprec} \viB2, \viA3 \vpz3{\lprec} \viC1, \viA4 \vpz4{\lprec} \viC2, \vpz5{\Gamma},\vpz6{\Delta}
			\pzlinks{A1/A2/12/\lac/violet/{pz3,pz4,pz5,pz6}}
			\pzlinks{B1/B2/-12/\lb/blue/{pz6}}
		\end{array}
	\\\\\\
		\begin{array}{ccc}
			&&
			\viA1 \vpz1{\lprec} \viB1,\viA2 \vpz2{\lprec} \viB2, \viA3 \vpz3{\ltens} \vC1, \vpz4{\Gamma}, \vpz5{\Delta}
			\pzlinks{pz1/pz2/12/\labb/pzgreen/{A3,pz4,pz5}}
			\pzlinks{C1/pz5/-12/\lc/violet/}
			\\
			&\nearrow &
			\\
			\viA1 \vpz1{\lprec} \viB1,\viA2 \vpz2{\lprec} \viB2, \viA3 \vpz3{\ltens} \vC1, \vpz4{\Gamma}, \vpz5{\Delta}
			\pzlinks{A1/A2/12/\la/red/{A3,pz4}}
			\pzlinks{B1/B2/-12/\lb/blue/{pz5}}
			\pzlinks{C1/pz5/18/\lc/violet/}
			&&
			\\
			&\searrow &
			\\
			&&
			\viA1 \vpz1{\lprec} \viB1,\viA2 \vpz2{\lprec} \viB2, \viA3 \vpz3{\ltens} \vC1, \vpz4{\Gamma}, \vpz5{\Delta}
			\pzlinks{A1/A2/12/\lac/violet/{pz3,pz4,pz5}}
			\pzlinks{B1/B2/-12/\lb/blue/{pz5}}
		\end{array}
	\end{array}$}$$
\end{remark}
\end{document}